\documentclass[twocolumn,superscriptaddress,aps,showpacs,amsmath,amstex,amssymb,citeautoscript,floatfix,pra,preprintnumbers]{revtex4-2}
\pdfoutput=1
\usepackage[english]{babel}

\usepackage{letltxmacro}
\usepackage{latexsym}
\usepackage{booktabs}  % for nicer tables
\usepackage{siunitx}   % for nice number formatting
\usepackage{multirow}  % for multirow cells (optional)
\usepackage{algpseudocode}

\sisetup{
  separate-uncertainty = true,
  table-number-alignment = center
}
\LetLtxMacro{\ORIGselectlanguage}{\selectlanguage}
\makeatletter
\DeclareRobustCommand{\selectlanguage}[1]{%
  \@ifundefined{alias@\string#1}
    {\ORIGselectlanguage{#1}}
    {\begingroup\edef\x{\endgroup
       \noexpand\ORIGselectla   nguage{\@nameuse{alias@#1}}}\x}%
}
\newcommand{\definelanguagealias}[2]{%
  \@namedef{alias@#1}{#2}%
}
\makeatother

\makeatletter
\def\maketitle{
\@author@finish
\title@column\titleblock@produce
\suppressfloats[t]}
\makeatother

\definelanguagealias{en}{english}
\definelanguagealias{English}{english}
\usepackage{graphicx}
\usepackage{amsmath}
\usepackage{amsfonts}
\usepackage{amssymb}
\usepackage{amsthm}
\usepackage{cancel}
\usepackage{mathtools}

\usepackage{bm}
\usepackage{color}
\usepackage[percent]{overpic}
\usepackage{soul} 
\usepackage{amssymb}
\usepackage{wasysym}
\usepackage{dsfont}
\usepackage{float}
\usepackage{physics}
\usepackage{hyperref}
\usepackage{comment}
\usepackage{enumitem}
\usepackage{booktabs}
\usepackage{textgreek}
\usepackage{tcolorbox}
% \usepackage[dvipsnames]{xcolor}

% \newtcolorbox[use counter=alg,crefname={algorithm}{algorithms},Crefname={Algorithm}{Algorithms}]{alg}[2][]{float=h,colback=yellow!5!white,colframe=yellow!50!black,  colbacktitle=yellow!75!black,fonttitle=\bfseries,
% title=Algorithm~\thetcbcounter: #2,#1}

\newcounter{alg}

\hypersetup{
  colorlinks   = true, 
  urlcolor     = blue, 
  linkcolor    = blue, 
  citecolor   = red 
}
\usepackage[mathscr]{euscript}

\definelanguagealias{en}{english}
\definelanguagealias{English}{english}
\usepackage{graphicx}
\usepackage{amsmath}
\usepackage{amsfonts}
\usepackage{amssymb}
\usepackage{amsthm}
\usepackage{cancel}
\usepackage{mathtools}
\usepackage{bm}
\usepackage{color}
\usepackage[percent]{overpic}
\usepackage{soul} 
\usepackage{amssymb}
\usepackage{wasysym}
\usepackage{dsfont}
\usepackage{float}
\usepackage{physics}
\usepackage{comment}
\usepackage{enumitem}
\usepackage{booktabs}
\usepackage{tcolorbox}
\usepackage{hyperref}

\newtheorem{theorem}{Theorem}
\newtheorem{result}{Result}
\newtheorem{lemma}[theorem]{Lemma}
\newtheorem*{lemma*}{Lemma}

\usepackage{tcolorbox}
\usepackage{hyperref}
\usepackage{cleveref}

\renewcommand{\thealg}
{\arabic{alg}}

\newtcolorbox[use counter=alg,
              crefname={algorithm}{algorithms},
              Crefname={Algorithm}{Algorithms}]
{alg}[2][]{%
  floatplacement=#1,         % Allow [H], [htbp], etc.
  float,                     % Enable float
  colback=cyan!5!white,      % Background color
  colframe=cyan!50!black,    % Frame color
  colbacktitle=cyan!85!black,% Title background color
  fonttitle=\bfseries,       % Title font style
  title=Algorithm~\thealg: #2
}

\hypersetup{
  colorlinks   = true, 
  urlcolor     = blue, %Colour for external hyperlinks
  linkcolor    = blue, %Colour of internal links
  citecolor   = red %Colour of citations
}
\usepackage[mathscr]{euscript} 

\usepackage{varwidth} % Optional for nice wrapping of text
\usepackage{comment}

\usepackage{algorithm}
\usepackage{float}

\newtheorem*{theorem*}{Theorem}

\newtheorem{thm}{\protect\theoremname}
\theoremstyle{plain}

\theoremstyle{plain}

\theoremstyle{plain}
\newtheorem*{lem*}{\protect\lemmaname}
\theoremstyle{plain}
\newtheorem*{thm*}{\protect\theoremname}
\theoremstyle{plain}

\theoremstyle{plain}

\newtheorem{defn}{Definition}

\newtheorem{fact}[thm]{Fact}

\setcounter{topnumber}{10}
\setcounter{bottomnumber}{10}
\setcounter{totalnumber}{20}

\usepackage{graphicx}
\usepackage{}

\usepackage{verbatim}
\usepackage[normalem]{ulem}

\newcommand{\EPR}{\mathrm{EPR}}
\newcommand{\eq}[1]{\begin{equation}#1\end{equation}}
\newcommand{\eqs}[1]{\begin{equation}\begin{split}#1\end{split}\end{equation}}

\newcommand{\figref}[1]{Fig.\,\ref{#1}}

\newcommand{\THUIIIS}{Center for Quantum Information, IIIS, Tsinghua University, Beijing, China}
\newcommand{\VT}{Department of Computer Science, Virginia Tech, Alexandria, VA 22314, USA}
\newcommand{\Phasecraft}{Phasecraft Inc., Washington DC, USA}
\newcommand{\DukeMath}{Department of Mathematics, Duke University, Durham, NC 27708, USA}
\newcommand{\DukeECE}{Department of Electrical and Computer Engineering, Duke University, Durham, NC
27708, USA}
\newcommand{\DukeQuantumCenter}{Duke Quantum Center, Duke University, Durham, NC
27701, USA}
\newcommand{\Harvard}{Department of Physics, Harvard University, Cambridge, MA 02138, USA}
\newcommand{\HarvardCS}{School of Engineering and Applied Sciences, Harvard University, Allston, Massachusetts 02134, USA}
\newcommand{\Caltech}{Institute for Quantum Information and Matter, California Institute of Technology, CA 91125, USA}

%\linespread{1.2}
\setlength{\belowcaptionskip}{-10pt}

\setlength{\skip\footins}{24pt}
\setlength{\belowcaptionskip}{-10pt}
\raggedbottom

\begin{document}
\title{
Ansatz-free Hamiltonian learning with Heisenberg-limited scaling
}

\author{Hong-Ye Hu}
\altaffiliation{Alphabetical ordering and equal contributions. \\
\href{mailto:hongyehu@g.harvard.edu}{hongyehu@g.harvard.edu}}
\affiliation{\Harvard}
\author{Muzhou Ma}
\altaffiliation{Alphabetical ordering and equal contributions. \\
\href{mailto:muzhouma2002@gmail.com}{muzhouma2002@gmail.com}}
\affiliation{\Caltech}
\author{Weiyuan Gong}
\affiliation{\HarvardCS}
\author{Qi Ye}
\affiliation{\THUIIIS}
\affiliation{\HarvardCS}
\author{Yu~Tong}
\affiliation{\DukeMath}
\affiliation{\DukeECE}
\affiliation{\DukeQuantumCenter}
\author{Steven T. Flammia}
\affiliation{\VT}
\affiliation{\Phasecraft}
\author{Susanne F. Yelin}
\altaffiliation{\href{mailto:syelin@g.harvard.edu}{syelin@g.harvard.edu}}
\affiliation{\Harvard}

\begin{abstract}
Learning the unknown interactions that govern a quantum system is crucial for quantum information processing, device benchmarking, and quantum sensing. The problem, known as Hamiltonian learning, is well understood under the assumption that interactions are local, but this assumption may not hold for arbitrary Hamiltonians. Previous methods all require high-order inverse polynomial dependency with precision, unable to surpass the standard quantum limit and reach the gold standard Heisenberg-limited scaling. Whether Heisenberg-limited Hamiltonian learning is possible without prior assumptions about the interaction structures, a challenge we term \emph{ansatz-free Hamiltonian learning}, remains an open question. In this work, we present a quantum algorithm to learn arbitrary sparse Hamiltonians without any structure constraints using only black-box queries of the system's real-time evolution and minimal digital controls to attain Heisenberg-limited scaling in estimation error. Our method is also resilient to state-preparation-and-measurement errors, enhancing its practical feasibility. We numerically demonstrate our ansatz-free protocol for learning physical Hamiltonians and validating analog quantum simulations, benchmarking our performance against the state-of-the-art Heisenberg-limited learning approach. Moreover, we establish a fundamental trade-off between total evolution time and quantum control on learning arbitrary interactions, revealing the intrinsic interplay between controllability and total evolution time complexity for any learning algorithm. These results pave the way for further exploration into Heisenberg-limited Hamiltonian learning in complex quantum systems under minimal assumptions, potentially enabling new benchmarking and verification protocols.
\end{abstract}
\maketitle

\section{Introduction}
Understanding the interactions that govern nature is a central goal in physics. In quantum systems, these interactions are described by the Hamiltonian, which dictates both the static and dynamic properties of the system. Consequently, given access to a quantum system with an unknown Hamiltonian, a fundamental question arises: what is the most efficient method to learn the interactions of such a system? While this question underpins much of quantum many-body physics, it has become increasingly relevant in practice due to the remarkable progress in quantum science and technology, notably the emergence of programmable analog quantum simulators \cite{Daley23,ProbingQuantumDynamics,SpinLiquid} and early fault-tolerant quantum computers \cite{PhysRevX.11.041058,demo_FTgates,LogicalRydberg,GoogleBreakEven,AlgorithmicFT}. These platforms promise to simulate complex quantum phenomena that remain intractable with classical computation. Nevertheless, they also introduce a pressing challenge: how to rigorously validate and benchmark such engineered quantum devices \cite{QuantumVerification}. Therefore, Hamiltonian learning is a critical tool to not only probe the unknown interactions but also characterize and control these engineered quantum systems \cite{RobustLearningSuperConducting,HuangTongFangSu2023learning,ma2024learningkbodyhamiltonianscompressed,LiTongNiGefenYing2023heisenberg,QLFH,Bakshi_2024,PhysRevA.110.062421,RobustEstimation,Yu2023robustefficient,SampleEfficientLearning,EntanglementHamiltonianLearning,Olsacher_2025,2024arXiv240101308O,PauliTransfer,MobusBluhmCaroEtAl2023dissipation}.  Beyond programmable quantum simulators, Hamiltonian learning also arises in quantum metrology and sensing, where one aims to determine an unknown field (i.e., the Hamiltonian) to precision $\epsilon$ at the so-called $\mathcal{O}(1/\epsilon)$ Heisenberg limit. Refining these learning strategies will not only enable the certification of next-generation quantum hardware but also open new avenues in precision sensing and the broader landscape of quantum technologies.

Traditional methods for Hamiltonian learning often rely on preparing either an eigenstate or the thermal (Gibbs) state of the underlying Hamiltonian \cite{HighTempGibbs,2023arXiv231002243B,PracticalGibbsLearning,Qi2019determininglocal,PhysRevLett.124.160502,BayesianHamiltonianLearning}. The coefficients of the unknown Hamiltonian are determined by solving a system of polynomial equations involving the expectation values of numerous Pauli observables. However, the state preparation step is non-trivial, and these methods are constrained by the so-called standard quantum limit, where achieving a precision $\epsilon$ in the learned coefficients requires a total experimental time scaling as $\mathcal{O}(1/\epsilon^2)$.

\begin{figure*}[htbp]
    \includegraphics[width=1.0\linewidth]{./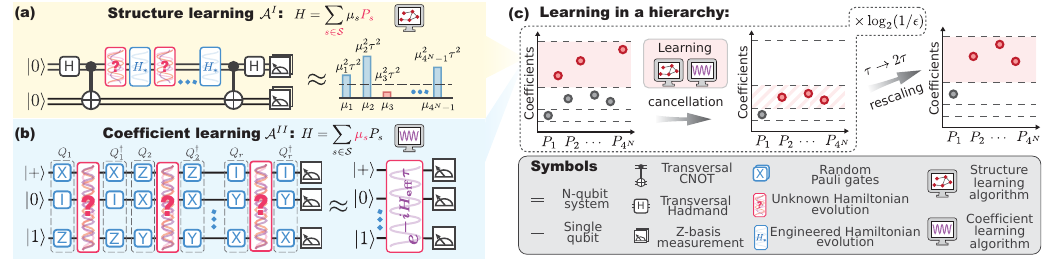}
    \caption{(a) Quantum circuit for the structure-learning subroutine $\mathcal{A}^I$. It prepares n pairs of 2-qubit Bell states between the original and ancillary systems via transversal gates. The original system then evolves coherently under the unknown Hamiltonian $H$ and the engineered Hamiltonian $H_*$, where $H_*$ consists of the large terms in $H$ learned in previous steps with an opposite sign. The combined system is then measured on the n-pair Bell basis. Nontrivial outcomes have probabilities proportional to $\mu_i^2$, enabling inference of the interaction structure. (b) Quantum circuit for the coefficient-learning subroutine $\mathcal{A}^{II}$. By inserting random Pauli gates from a designed set into the unknown Hamiltonian’s evolution, the time evolution of a specific term is approximated, allowing the interaction strength $\mu_s$ to be extracted via robust frequency estimation. (c) Combining these subroutines enables hierarchical coefficient estimation, achieving Heisenberg-limited scaling.\label{fig:protocol}}
\end{figure*}

Recently, inspired by quantum metrology, a new class of Hamiltonian learning algorithms has been proposed that achieves Heisenberg-limited scaling \cite{HuangTongFangSu2023learning,ma2024learningkbodyhamiltonianscompressed,LiTongNiGefenYing2023heisenberg,PhysRevA.110.062421,QLFH,Bakshi_2024}. These methods require only simple initial state preparation and black-box queries of the Hamiltonian dynamics. Despite their efficiency, these approaches require a crucial assumption that the interactions are either geometrically local or $k-$local. However, in many scenarios, the exact interaction structure is not known in advance, allowing for potentially arbitrary interactions. Consequently, the search space for the unknown Hamiltonian structure becomes exponentially large, making it challenging to identify interaction terms. Moreover, the possibility of non-commuting terms further complicates the accurate estimation of each coefficient. Efforts to extend these existing methods to arbitrary Hamiltonians have encountered significant obstacles: some approaches demand highly complex quantum controls, such as block encoding and the time reversal evolutions \cite{2024arXiv241021635Z}, while others fail to reach the optimal Heisenberg-limited scaling \cite{2024arXiv241100082A}. Therefore, it remains a fundamental open question whether one can achieve Heisenberg-limited Hamiltonian learning with only simple black-box queries to the unitary dynamics and no prior assumptions of the interaction structure—a task we refer to as \emph{ansatz-free Hamiltonian learning}.

In this work, we propose a novel Hamiltonian learning algorithm that overcomes these limitations. 
Our method achieves Heisenberg-limited scaling for arbitrary Hamiltonians in the following qubit form, including non-local ones, with the total experimental time scaling polynomially with the number of Pauli terms in the Hamiltonian. 
To be more explicit, any $n$-qubit Hamiltonian $H$ can be expressed as:
\begin{equation}
\label{eq:unknown_Hamiltonian}
    H = \sum_{s\in \mathcal{S}} \mu_s P_s,
\end{equation}
where $P_s$ are the $n$-qubit traceless Pauli operators, $\mathcal{S}$ is the set of Pauli operators that constitute $H$ and $\mu_s$ are the unknown coefficients. Since the coefficients $\mu_s$ can always be rescaled by the evolution time, we normalize them as $|\mu_s| \leq 1$, following standard practice in the analysis of Hamiltonian learning complexity \cite{HuangTongFangSu2023learning,HighTempGibbs}. We use $M$ to denote the number of Pauli terms with nonzero coefficients in $H$. Importantly, we impose no structural constraints on $H$: the Pauli terms $P_s$ can be nonlocal and can even have support on up to $\mathcal{O}(n)$ qubits. Moreover, any fermionic Hamiltonian can be mapped into the form of \Cref{eq:unknown_Hamiltonian}, allowing our results to apply directly to fermionic systems as well.  This ansatz-free characterization sets our approach apart from previous works in a fundamental way.

Without prior knowledge of which terms are in the Hamiltonian and what the coefficient values are, there are two specific difficulties for learning such a Hamiltonian: 1. what is the \emph{structure} $\mathcal{S}$, the $M$ Pauli terms, of this Hamiltonian; 2. what are the \emph{coefficients}  $\mu_s$ with respect to each term in this Hamiltonian. We tackle these difficulties by 
cycling through an alternating hierarchy of two steps: structure learning and coefficient learning.
In the structure learning phase, we identify the Pauli terms with large coefficients by directly sampling a simple quantum circuit, where terms with large coefficients will dominate the outcome.
In the coefficient learning phase, we isolate the identified terms by applying ensembles of single-qubit Pauli gates, similar to techniques such as dynamical decoupling \cite{ReviewOfDD,PhysRevX.10.031002,PhysRevLett.119.183603,SDD} or Hamiltonian reshaping \cite{HuangTongFangSu2023learning,ma2024learningkbodyhamiltonianscompressed}. We then estimate the coefficients through robust frequency estimation \cite{KimmelLowYoder2015robust}. Our algorithm not only achieves Heisenberg-limited scaling in terms of total experimental time but is also resilient to state-preparation-and-measurement (SPAM) errors.

To the best of our knowledge, this is the \emph{first} quantum algorithm capable of learning arbitrary Hamiltonians with Heisenberg-limited scaling using only product state inputs, single-qubit measurements, and black-box access to the Hamiltonian dynamics. This work not only resolves a long-standing theoretical question about Hamiltonian learning but also introduces a practical algorithm with minimal experimental requirements.

\section{The learning protocol}

We consider the task of learning an unknown many-body Hamiltonian H, as given in \Cref{eq:unknown_Hamiltonian}, through black-box access to its time evolution $e^{-iHt}$ for arbitrary $t$, and a programmable quantum computer. A natural figure of merit for evaluating the efficiency of any learning protocol is the total experimental time $T$. Suppose a protocol uses $J$ experiments to estimate the Hamiltonian coefficients $\bm{\mu}$ to precision $\epsilon$, where the $j$th experiment involves an evolution of duration $t_j$ and is repeated $L_j$ times. Then, the total experimental time is
\eq{T = \sum_{j=1}^{J} L_j t_j.\label{eq:total_time_main}}
A protocol achieves Heisenberg-limited scaling if $T = \mathcal{O}(1/\epsilon)$, and the standard quantum limit if $T = \mathcal{O}(1/\epsilon^2)$. In what follows, we describe in greater detail how our protocol alternates between structure learning and coefficient estimation in a hierarchical manner, with a total evolution time reaching the Heisenberg-limited scaling.

In the structure-learning step, we identify the dominant interaction terms by determining the support of the coefficient vector $\boldsymbol{\mu}=(\mu_1, \mu_2, \ldots, \mu_{4^n-1})^{T}$. 
To achieve this, we introduce two approaches: one (denoted as $\mathcal{A}^I$) employs $n$ pairs of 2-qubit Bell states shared between the original system and an ancillary system of the same size, while the other (denoted as $\mathcal{A}^{I'}$) uses only product state inputs and single-qubit measurements, eliminating the need for ancillary systems at the cost of a moderate increase in $M$-dependence. 
In the coefficient learning  (denoted as $\mathcal{A}^{II}$) step, we estimate the coefficients of Pauli operators identified in the preceding structure-learning step with robust frequency estimation.
Specifically, we first determine all the Pauli operators with coefficients $1/2< |\mu_i|\leq 1$, and learn their coefficients $\mu_i$ with $\mathcal{A}^I$ and $\mathcal{A}^{II}$. We then repeat those two steps for smaller coefficient ranges $1/4< |\mu_i|\leq 1/2$ and so on. In the $k$-th iteration, we learn coefficients that are $(1/2)^k< |\mu_s|\leq (1/2)^{k-1}$, continuing until $k=\lceil\log_2( 1/\epsilon )\rceil$, where $\epsilon$ is the desired learning precision. This hierarchical learning strategy achieves the gold standard Heisenberg-limited scaling, requiring a total experimental time having $1/\epsilon$ dependence up to a polylogarithmic factor to reach $\epsilon$-learning accuracy.

In the follows, we use $\widetilde{\mathcal{O}}(f)$ to omit $\text{polylog}(f)$ scaling factors. The main results of the hierarchical learning algorithm are summarized as follows: 

\begin{result}[Informal version of \Cref{thm:2-copy_learning_alg}]
    There exists a quantum algorithm for learning the unknown Hamiltonian $H$ as in  \Cref{eq:unknown_Hamiltonian} taking $n$ pairs of $2$-qubit Bell state as input for each experiment instance, querying to real-time evolution of $H$, and performing Bell-basis measurements that outputs estimation $\hat{\boldsymbol{\mu}}$ such that 
    \begin{equation}
    \label{eq:l-infty_norm}
        ||\hat{\boldsymbol{\mu}}-\boldsymbol{\mu}||_\infty \leq \epsilon
    \end{equation}
    with high probability. The total experimental time is
    \begin{equation}
        T = \widetilde{\mathcal{O}}\left(M^2/\epsilon\right).
    \end{equation}
    This algorithm has trivial classical post-processing and is robust against SPAM errors.
\end{result}

\begin{result}[Informal version of \Cref{thm:1-copy_learning_alg}]
    There exists an ancilla-free quantum algorithm for learning the unknown Hamiltonian $H$ as in  \Cref{eq:unknown_Hamiltonian} with product-state input, queries to real-time evolution of the Hamiltonian, and single-qubit measurements that outputs $\hat{\boldsymbol{\mu}}$ achieving \eqref{eq:l-infty_norm} with high probability. The total experimental time $T$ is 
    \begin{equation}
        T' = \widetilde{\mathcal{O}}\left(M^3\log(n)/\epsilon\right).
    \end{equation}
    This algorithm needs classical post-processing with time
    \begin{equation}
        T^C = \widetilde{\mathcal{O}}(M^5n\log(n))
    \end{equation}
    and is robust against a restricted class of SPAM errors~\cite{Flammia2021paulierror}.
\end{result}

In the following, we will outline the proof for both results by introducing the algorithms on structure learning ($\mathcal{A}^I$ and $\mathcal{A}^{I'}$) and coefficients learning ($\mathcal{A}^{II}$), and the corresponding proof ideas.

\subsection{Structure-learning algorithm}
We first consider the structure-learning algorithm $\mathcal{A}^I$, which takes as input $n$ pairs of Bell states shared between the original system and an $n$-qubit ancillary register. As a warm-up, consider the case where all unknown Hamiltonian coefficients are bounded away from zero, i.e., $\mu_m := \min \mu_s = \mathcal{O}(1)$. A key challenge in identifying the support $\mathcal{S}$ of the Hamiltonian arises from the fact that the constituent Pauli operators generally do not commute.
To address this, we combine the Bell sampling circuit with the coherent evolution driven by $H = \sum_{s}\mu_s P_s$ for time $\tau$. We prepare $n$ pairs of $2$-qubit Bell states $\ket{\Phi^{+}}^{\otimes n}= \left(\frac{\ket{00}+\ket{11}}{\sqrt{2}}\right)^{\otimes n}$ with transversal Hadamard and CNOT gates. The first qubit of each Bell pair undergoes evolution under $H$ for time $\tau$, while the second remains idle. A final Bell-basis measurement is performed by reapplying CNOT and Hadamard gates. The full circuit is illustrated in \figref{fig:protocol}(a), where the blue box labeled $H_*$ can be ignored for now. This setup enables direct sampling from the distribution $p(s|\bm{\mu})$, which is intrinsically linked to the support of $\bm{\mu}$. 

To illustrate the mechanism, consider a single-qubit Hamiltonian $H = \mu_x X + \mu_y Y + \mu_z Z$. Applying $H$ for a short time $\tau$ to the first qubit of a Bell pair yields:
\eqs{
&\ket{\Phi^{+}}-i\tau(\mu_x XI+\mu_y YI+\mu_z ZI)\ket{\Phi^{+}}+\mathcal{O}(\tau^2)\\
&=\ket{\Phi^{+}}-i\tau \mu_x\ket{\Psi^{+}}-i\tau\mu_y\ket{\Psi^{-}}-i\tau \mu_z \ket{\Phi^{-}}+\mathcal{O}(\tau^2),\label{eq:single_qubit_bell}
}
where $\ket{\Phi^{\pm}}$ and $\ket{\Psi^{\pm}}$ are the four Bell basis states. A Bell-basis measurement then yields outcomes with probabilities proportional to $\mu_x^2$, $\mu_y^2$, and $\mu_z^2$. For example, outcome $(1,0)$ corresponds to $\ket{\Psi^+}$ and occurs with probability proportional to $\mu_x^2$, and so on. Thus, the presence of any non-trivial Bell-basis measurement outcome directly implies that the corresponding coefficient is non-zero. This quantum circuit effectively performs importance sampling, using quantum coherence to extract structural information about the Hamiltonian. While it is, in principle, possible to estimate the actual values of $\mu_x$, $\mu_y$, and $\mu_z$ by learning the corresponding outcome probabilities, this would require many samples to build up accurate statistics and would not achieve Heisenberg-limited scaling. Instead, the goal of the structure-learning phase is simply to sample each non-trivial outcome \emph{at least once}, thereby identifying the non-zero components of the Hamiltonian. The precise values of the coefficients are then learned in the subsequent coefficient-learning step.

More specifically, by setting $\tau = \mathcal{O}((M\mu_m)^{-1})$, we show the outcome distribution can be lower-bounded as $p(s|\boldsymbol{\mu}) > \Omega\left(\mu_m^4/M^2\right)$ (see \Cref{app:structure_learning} for details). Consequently, by applying a union bound, one can ensure that all terms in the support set $\mathcal{S}=\{s:|\mu_s|\geq \mu_m\}$ are sampled at least once with high probability by repeating this quantum circuit $\mathcal{O}(\log(M)/\mu_m^2 \tau^2)$ times. The total evolution time under $H$ then scales as $\mathcal{O}(\log(M)/\mu_m^2 \tau)=\mathcal{O}(M\log(M)/\mu_m^3)$. 

Due to the finite-time evolution, different terms in $\mathcal{S}$ get multiplied together in second and higher orders as a result of Taylor expansion, leading to the possibility of spurious contributions. In rare cases, this can result in a false-positive detection of a Pauli string s with vanishing or negligible coefficient $\mu_s$. For instance, in the single-qubit example, even if $\mu_x = 0$, a Bell-basis measurement might still yield the readout of state $|\Psi^{+}\rangle$, arising from higher-order terms like $(\mu_y \tau Y)(\mu_z \tau Z)$ in $\mathcal{O}(\tau^2)$ in \Cref{eq:single_qubit_bell}, even though they are orders of magnitude smaller. However, such terms are easily ruled out in the subsequent coefficient learning step, which accurately estimates their amplitudes and filters out any below the desired threshold $\mu_m$. These false-positive detections are also illustrated in the Application section, for example in \figref{fig:example1}(a). Importantly, the worst-case number of such false-positive events is also upper bounded by the number of samples $\mathcal{O}(M^2\log(M))$. 

To aid readers, we also provide a high-level pseudocode in Algorithm~\hyperref[alg:algorithm_I]{1}. The inputs to this subroutine are: (1) black-box access to the time evolution generated by the unknown Hamiltonian H; (2) an approximately learned partial Hamiltonian $\hat{H}$; (3) the coefficient threshold $\mu_m$; (4) the number of terms in the Hamiltonian $M$; and (5) the number of measurements $\mathfrak{M} = \mathcal{O}(1)$. Notably, the second input $\hat{H}$ plays a central role in the hierarchical learning framework: previously learned components of the Hamiltonian are used to cancel their contribution in future queries to H, as will be detailed in later sections.

\begin{alg}[htbp]{Structure learning subroutine $\mathcal{A}^{I}$ (two-copy)}
\label{alg:algorithm_I}
\textbf{Input}: \vspace{-2mm}
\begin{enumerate}
    \item Unknown Hamiltonian $H$ with black-box access to its evolution $U_{H}(t)=e^{-iHt}$\vspace{-2mm}
    \item Approximately learned part of the Hamiltonian $\hat{H}$\vspace{-2mm}
    \item Coefficient threshold $\mu_m$\vspace{-2mm}
    \item An estimated number of terms in the Hamiltonian $M$\vspace{-2mm}
    \item Number of measurements $\mathfrak{M}$
\end{enumerate}\vspace{-2mm}

\textbf{Output}: Estimated support set $\mathcal{S} = \{ P_s : |\mu_s| \geq \mu_m\}$\vspace{2mm}

\textbf{Pseudo-code}:\vspace{-2mm}
\begin{enumerate}
    \item Initialize the set of measurements: $\mathcal{B} \gets \emptyset$\vspace{-2mm}
    \item Set evolution time $\tau \gets \Theta((M\mu_m)^{-1})$\vspace{-2mm}
    \item Set Trotterization steps $r_1 \gets \Theta(M^2/\mu_m^2)$ \vspace{-2mm}
    \item \textbf{For} $i = 1$ \textbf{to} $\mathfrak{M}$ \textbf{do}:\vspace{-2mm}
    \begin{enumerate}
        \item Prepare $N$ Bell pairs: $\ket{\Phi^+}^{\otimes N}$ \vspace{-0mm}
        \item Simulate evolution: \\
        $|\psi\rangle \gets e^{-i(H-\hat{H})\otimes I^{\otimes N} \cdot \tau} \ket{\Phi^+}^{\otimes N}$\\
        by interleaving black-box queries of $U_H(\tau/r_1)$ with application of $e^{-i\hat{H}\tau/r_1}$ via Trotter–Suzuki decomposition
        \item Perform Bell-basis measurement on $|\psi\rangle$\\
        $\mathcal{B} \gets \mathcal{B} \cup \{b_i\}$ \quad (store measurement outcome)
    \end{enumerate}
    \vspace{-3mm}
    \item Post-process: Return support set $\mathcal{S}$ corresponding to frequently observed Pauli terms in $\mathcal{B}$
\end{enumerate}
\end{alg}

It is natural to ask whether the entanglement in the structure learning step is necessary. Surprisingly, we provide a negative answer to this question by showing an alternative algorithm with product state input and single-qubit measurement at the cost of a moderate increase in $M$ dependence. The key observation is that applying random Pauli gates before and after the evolution $e^{-iH\tau}$ will transform it into an effective Pauli channel, a process called Pauli twirling \cite{RC,SymmetrizedNoise,PauliNoise,PEC,PEC2}. In the single-qubit instance, the effective channel after applying random Pauli gates is
\eqs{
\Lambda_P(\rho)&=\underset{\sigma_T\sim \mathbb{P}}{\mathbb{E}}\sigma_T^{\dagger}e^{-iH\tau}\sigma_T \rho\sigma_T^{\dagger}e^{iH\tau}\sigma_T\\
&\approx \rho+\mu_x^2\tau^2 X\rho X+\mu_y^2\tau^2 Y\rho Y+\mu_z^2\tau^2 Z\rho Z,
}
which is a Pauli channel with Pauli error rates $\mu_i^2\tau^2$. It can be generalized to multi-qubit systems by applying random Pauli gates on each qubit. Employing the Pauli error rates estimation protocol using product state inputs and single-qubit measurements \cite{Flammia2021paulierror}, we obtain an alternative approach $\mathcal{A}^{I'}$ for structure learning. In \Cref{sec:single-copy}, we provide the details of this approach.

\subsection{Coefficient-learning algorithm}
\label{sec:Coefficient-learning algorithm}
The coefficient-learning algorithm $\mathcal{A}^{II}$ takes the structure $\mathcal{S}$ as input and outputs the coefficients $\{\mu_s:s\in \mathcal{S}\}$. The key idea is to isolate each term $\mu_s P_s$ during the evolution and learn each individual coefficient $\mu_s$. The first step can be achieved by \emph{Hamiltonian reshaping} \cite{HuangTongFangSu2023learning,ma2024learningkbodyhamiltonianscompressed}, which inserts random single-qubit Pauli gates between evolutions of $H$. Each $\mu_s$ can be learned with Heisenberg-limited scaling using \emph{robust frequency estimation} \cite{KimmelLowYoder2015robust,ma2024learningkbodyhamiltonianscompressed}.

\subsubsection{Hamiltonian reshaping}

The goal of Hamiltonian reshaping is to approximate the time evolution of a specific term or a subset of commuting Pauli terms in $H$. 
We focus on the case where the target Hamiltonian after reshaping is a single traceless Pauli operator $P_s$ for illustration. Define $\mathcal{K}_{P_s}$ as the set containing all Pauli operators that commute with $P_s$. For any traceless Pauli operator $P_l\neq P_s$, the ensemble average of the transformed operator $P^{\dagger}P_lP$ over set $\mathcal{K}_{P_s}$ will result in zero, since half of the Pauli operators in $\mathcal{K}_{P_s}$ commute with $P_l$ and the other half anti-commute with it. On the other hand, for $P_s$, since every Pauli operator in $\mathcal{K}_{P_s}$ commutes with it, this transformation will leave $P_s$ unchanged. We implement this reshaping strategy by interleaving short-time evolutions under $H$, denoted $e^{-iH\tau}$, with random conjugations by Pauli operators $Q_k \in \mathcal{K}_{P_s}$. The resulting circuit takes the form:
\begin{equation}
    \label{eq:hamiltonian_reshaping}
    Q_{r_2} e^{-iH\tau}Q_{r_2}\cdots Q_2 e^{-iH\tau}Q_2Q_1 e^{-iH\tau}Q_1.
\end{equation}
Averaging over all possible choices of $Q_k$, this sequence implements a quantum channel that approximates evolution under an effective Hamiltonian which is $P_s$. This can be seen by analyzing a single time step quantum channel:
\begin{equation}
\begin{aligned}
    \rho &\mapsto \frac{1}{2^{2n-1}}\sum_{Q\in \mathcal{K}_{P_s}} Qe^{-iH\tau}Q\rho Qe^{iH\tau}Q 
    \\
    &= \rho - i\tau [H_{\mathrm{eff}},\rho]+\mathcal{O}(\tau^2) 
    ,
\end{aligned}
\end{equation}
where the effective Hamiltonian is
\begin{equation}
    H_{\mathrm{eff}} = \frac{1}{2^{2n-1}}\sum_{Q\in \mathcal{K}_{P_s}} QHQ=P_s.
\end{equation}

In the context of Hamiltonian learning, this procedure allows us to isolate the contribution of a single Pauli term $P_s$, as identified during the structure-learning phase. By applying this randomized reshaping channel over $r_2$ short-time steps, we approximate the desired evolution $e^{-i \mu_s P_s t}$, enabling accurate estimation of the coefficient $\mu_s$. This approach is closely related to randomized Hamiltonian simulation techniques such as qDRIFT~\cite{qDrift,ConcentrationForRPF}, and achieves diamond-norm error scaling of $\mathcal{O}(M^2t^2/r_2)$, where $M$ is the total number of Pauli terms in $H$ (see \Cref{sec:app_Hamiltonian_reshaping} for details).

\begin{alg}[htbp]{Coefficient learning subroutine $\mathcal{A}^{II}$}
\label{alg:algorithm_II}
\textbf{Input}: \vspace{-2mm}
\begin{enumerate}
    \item Unknown Hamiltonian $H$ with black-box access to its evolution $U_{H}(t)=e^{-iHt}$\vspace{-2mm}
    \item Desired Pauli operator $P_s$\vspace{-2mm}
    \item Desired learning accurarcy $\epsilon$ for cofficient $\mu_s$\vspace{-2mm}
    \item An estimated number of terms in the Hamiltonian $M$
\end{enumerate}\vspace{-2mm}

\textbf{Output}: Estimated coefficient $\hat{\mu}_s$ which satisfies $|\hat{\mu}_s-\mu_s|\leq \epsilon$\vspace{2mm}

\textbf{Pseudo-code}:

Set $a=-\pi$ and $b=\pi$

\textbf{For} $l = 1$ \textbf{to}
 $\log_{3/2}(2\pi/\epsilon)$ \textbf{do}:\vspace{-2mm}
    \begin{enumerate}
        \item Prepare $|\psi_{+}\rangle$ and $|\psi_{-}\rangle$ state in parallel. \\(see \Cref{sec:experimental_setup_of_A_2} for definitions of $|\psi_{+}\rangle,|\psi_{-}\rangle,O_+,O_-$).\vspace{-0mm}
        \item Set evolution time $\tau \gets \frac{\pi}{2(b-a)}$
        % \item Set evolution time $\tau \gets \Theta\left(\frac{\pi}{2(b-a)}\cdot (\frac{3}{2})^{l-1}\right)$
        \item Set Trotterization steps $r_2\gets \Theta(M^2\tau^2)$
        \item Apply evolution with Hamiltonian reshaping: \\
        $|\psi_{+}'\rangle \gets \prod_{i=1}^{r_2}(Q_ie^{-iH\tau/r_2}Q_i)\ket{\psi_+}$
    
        $|\psi_{-}'\rangle \gets \prod_{i=1}^{r_2}(Q_ie^{-iH\tau/r_2}Q_i)\ket{\psi_-}$
        
        where $Q_i\in \mathcal{K}_{P_s}$ are sampled uniformly.
        \item Measure $O_+$ on $|\psi_{+}'\rangle$ and $O_-$ on $|\psi_{-}'\rangle$ each $\mathfrak{M}$ times and forms an estimate $\hat{S}=\langle O_{+}\rangle_{\psi_+'}+i\langle O_-\rangle_{\psi_-'}$ of $\exp(-i \frac{2\mu_s\pi}{b-a})$.
        \item If $\text{Im}\left(e^{-i\frac{(a+b)\pi}{2(b-a)}}\hat{S}\right)\leq 0$, then $b\gets (a+2b)/3$; else $a\gets (2a+b)/3$
    \end{enumerate}
Return $\hat{\mu}_s=(b-a)/4$
\end{alg}

\subsubsection{Robust frequency estimation}

With a good approximation of the time evolution channel under a single-term Hamiltonian $H_{\mathrm{eff}} = \mu_sP_s$ by Hamiltonian reshaping, we can robustly estimate $\mu_s$ to an accuracy $\epsilon$ with Heisenberg-limited scaling using the robust frequency estimation protocol. This protocol follows the same idea as the robust phase estimation protocol \cite{KimmelLowYoder2015robust}, but is modified to use only product-state input and to a good confidence interval rather than minimizing the mean-squared error. 

Consider a simple example with $P_s=IZZX$ and unknown $\mu_s$. We design the following two experiments: 1. the ($+$) experiment with input state $|\psi_{+}\rangle= \ket{0}\otimes\ket{+}\otimes \ket{0}\otimes\ket{+}$ and measuring observable $O_{+}=IXZX$ after evolution time $t$; 2. the ($-$) experiment with input state $|\psi_{-}\rangle=\ket{0}\otimes\ket{-}\otimes \ket{0}\otimes\ket{+}$ and measuring observable $O_{-}=IYZX$ after evolution time $t$. It can be easily verified that the expectation values are $\langle O_{+}\rangle=\cos(2\mu_s t)$ and $\langle O_{-}\rangle=\sin(2\mu_s t)$, which together form a simple oscillation $e^{i2\mu_st}$. The idea of robust frequency estimation is to narrow down the frequency $\theta$ of an oscillation signal $e^{i\theta t}$ in a hierarchical manner. Given $\theta\in [a,b]$, one can correctly distinguish whether $\theta$ is in $\theta\in[a,(a+2b)/3]$ or in $\theta\in [(2a+b)/3,b]$ by finite sampling in $t=\pi/(b-a)$. This process is then repeated $\mathcal{O}\left(\log_{3/2}((b-a)/\epsilon)
\right)$ times with the range divided by $2/3$ each time. 
The choice of the overlapped region makes this method robust against errors. 
In \Cref{sec:robust_frequency_estimation_details}, we describe the setup of the robust frequency estimation for arbitrary Pauli operator and show it is robust against finite-sample error and channel approximation error from the Hamiltonian reshaping. 

We provide a high-level pseudocode in Algorithm~\hyperref[alg:algorithm_II]{2}. The inputs to this subroutine are: (1) black-box access to the time evolution generated by the unknown Hamiltonian H; (2) the desired Pauli term $P_s$ for coefficient estimation; (3) the desired learning accuracy $\epsilon$ for coefficient $\mu_s$; and (4) the number of terms in the Hamiltonian $M$.

\subsection{Hierarchical learning\label{subsec:hierarchy_learning}}
As we have described above, $\mathcal{A}^I$ (or $\mathcal{A}^{I'}$) can learn the structure of the Hamiltonian and $\mathcal{A}^{II}$ can estimate the coefficients. However, directly concatenating $\mathcal{A}^I$ (or $\mathcal{A}^{I'}$) with $\mathcal{A}^{II}$ will give an algorithm with a complexity of $\widetilde{\mathcal{O}}(1/\epsilon^3)$ dependence on the target accuracy $\epsilon$, not enough for achieving Heisenberg-limited scaling. 
We address this issue by introducing a hierarchical learning protocol, where we divide the terms in $H$ into $J=\left\lceil\log_2(1/\epsilon)\right\rceil$ levels. For the $j$-th level, we learn all terms with coefficient $2^{-(j+1)}<|\mu_s|\leq 2^{-j}$ to $\epsilon$ accuracy by applying $\mathcal{A}^I$ (or $\mathcal{A}^{I'}$) and $\mathcal{A}^{II}$. Crucially, since we already have an estimation of coefficients for terms with $2^{-j}\leq\mu_s$, we can approximately cancel out those terms by interfacing the black-box evolution with the quantum computer in Hamiltonian simulation. 

More concretely, let $\hat{H}_j$ denote the partial Hamiltonian composed of terms with $|\mu_s| > 2^{-j}$ that have been learned in previous rounds. Using this classical description, we can approximate the evolution under the residual Hamiltonian $H - \hat{H}_j$ by interleaving the black-box evolution $e^{-iH\tau}$ with the engineered evolution $e^{i\hat{H}_j\tau}$ implemented on a quantum computer. Specifically, we construct the sequence $(e^{-iH t/r} e^{i\hat{H}_j t/r})^r$, which approximates $e^{-i(H - \hat{H}_j)t}$ via the Lie product formula. This procedure effectively isolates the remaining unlearned terms with $|\mu_s| \leq 2^{-j}$, without requiring access to the time-reversed evolution $e^{+iHt}$.

\begin{alg}[htbp]{Ansatz-free Hamiltonian learning algorithm}
\label{alg:algorithm_all}
    \textbf{Input}: \vspace{-2mm}
\begin{enumerate}
    \item Unknown Hamiltonian $H$ with black-box access to its evolution $U_{H}(t)=e^{-iHt}$\vspace{-2mm}
    \item Desired learning accurarcy $\epsilon$ for cofficients\vspace{-2mm}
    \item An estimated number of terms in the Hamiltonian $M$\vspace{-2mm}
    \item Constant number of measurements $\mathfrak{M}$
\end{enumerate}\vspace{-2mm}

\textbf{Output}: Estimated coefficient $\hat{\mu}_s$ which satisfies $|\hat{\mu}_s-\mu_s|\leq \epsilon$\vspace{2mm}

\textbf{Pseudo-code}:\vspace{0mm}

Set $\hat{H}=0$,

\textbf{For} $j = 0$ \textbf{to} $\lceil\log_{2}(1/\epsilon)\rceil-1$ \textbf{do}:\vspace{-2mm}
\begin{enumerate}
    \item $\mu_m = 2^{-(j+1)}$ \vspace{-2mm}
    \item $\mathcal{S}_j=\mathcal{A}^{I}((H,U_H(t)),\hat{H},\mu_m,M,\mathfrak{M})$ \vspace{-2mm}
    \item \textbf{For} $s\in \mathcal{S}_j$ \textbf{do}:
    
        $\hat{\mu}_s=\mathcal{A}^{II}((H,U_H(t)),P_s,\epsilon,M)$
    \vspace{-2mm}
    \item Update $\hat{H}\gets \hat{H}+\sum_{s\in \mathcal{S}_j}\hat{\mu}_sP_s$
\end{enumerate}\vspace{-2mm}
Return $\hat{H}$
\end{alg}

With the approximated time evolution channel with all learned large terms cancelled out, we then rescale the Hamiltonian by extending the evolution time from $t$ to $t/2^{-j}$, thereby boosting the initially small coefficients in $j$-th level to a constant scale. The steps of this procedure are depicted in \Cref{fig:protocol} (c). Both the term cancellation and rescaling can be achieved using Trotterization (see \Cref{app:structure_learning} for detailed error analysis). We also provide a high-level pseudocode in Algorithm~\hyperref[alg:algorithm_all]{3}.

\subsection{Total time complexity analysis\label{subsec:total_time_complexity}}

We now analyze the scaling of the total experimental time. As defined in \Cref{eq:total_time_main}, this quantity is given by the sum of all evolution times multiplied by the number of experimental repetitions. At each level $j$, we first perform structure learning to identify the support set
\eqs{
\mathfrak{S}_j=\{s:2^{-(j+1)}<|\mu_s|\leq 2^{-j}\}
}
In each such experiment, the evolution time scales as $t_{j,l}^{(1)}=\mathcal{O}(2^j/M)$, where the $2^j$ factor arises from rescaling the Hamiltonian to boost small coefficients. As discussed previously, we require $L=\widetilde{\mathcal{O}}(M^2)$ repetitions to ensure that each $s\in \mathfrak{S}_j$ is sampled \emph{at least once} with high probability. Thus, the total experimental time spent on structure learning at level $j$ is
\eqs{
T_{1}^{j\text{th}}=Lt_{j,l}^{(1)}=\widetilde{\mathcal{O}}(2^{j}M).
}

Next, we perform coefficient learning using robust frequency estimation to estimate each $\mu_s$ for $s\in \mathfrak{S}_j$ to within precision $\epsilon$. Each experiment requires evolution time $t^{(2)}_{j,l}=\widetilde{O}(1/\epsilon)$, and the number of repetitions per term is constant due to the robustness of the frequency estimation method, which tolerates a constant additive error. In the worst case, we need to estimate up to $L_j=\mathcal{O}(M^2)$ coefficients at level $j$, so the total time for coefficient learning at level $j$ is
\eqs{
T_2^{j\text{th}}=L_jt^{(2)}_{j,l}=\widetilde{\mathcal{O}}(M^2/\epsilon).
}

Combining the contributions from structure and coefficient learning across all levels $j\in\{0,1,\ldots,\left\lceil\log_2(1/\epsilon)\right\rceil-1\}$, the total experimental time is:
\eqs{
T&=\sum_{j=0}^{\lceil \log_{2}(1/\epsilon)\rceil-1}(T_1^{j\text{th}}+T_2^{j\text{th}})\\
&=\sum_{j=0}^{\lceil \log_{2}(1/\epsilon)\rceil-1}(\widetilde{\mathcal{O}}(2^j M)+\widetilde{\mathcal{O}}(M^2/\epsilon))\\
&=\widetilde{\mathcal{O}}(M/\epsilon)+\widetilde{\mathcal{O}}(M^{2}/\epsilon),
}
which confirms that the overall protocol achieves Heisenberg-limited scaling, $T = \widetilde{\mathcal{O}}(1/\epsilon)$, in total experimental time. In addition, we show this protocol is robust against SPAM errors (see \Cref{app:structure_learning}) and the time complexity with the alternative approach is provided in \Cref{sec:single-copy}.

\section{Applications}

To demonstrate the effectiveness of our method, we consider three representative Hamiltonian learning tasks and benchmark our performance against the state-of-the-art Heisenberg-limited learning approach introduced in Ref.~\cite{HuangTongFangSu2023learning}. Our ansatz-free method achieves the same gold-standard Heisenberg scaling—where the total experimental time $T_{\text{total}}$ scales inversely with the target precision $\epsilon$ as $T_{\text{total}} \sim \epsilon^{-1}$—while offering significantly broader applicability. In particular, our method accurately recovers \emph{non-local} and \emph{multi-qubit} interactions without relying on a predefined ansatz. This is especially important in scenarios where the structure of the effective Hamiltonian is unknown or difficult to model, making ansatz-free learning a critical tool for validating analog pulse design and analog quantum simulation protocols.

Our first example involves a disordered XY Hamiltonian with nearest-neighbor interactions, augmented by additional non-local couplings (mimicking cross-talk) and an all-to-all interaction term. We applied both our ansatz-free method and the state-of-the-art Heisenberg-limited learning approach \cite{HuangTongFangSu2023learning} (referred to as the vanilla method) to learn the Hamiltonian coefficients. The results are shown in \figref{fig:example1}(a) as a bar plot. Due to the geometric locality ansatz, the vanilla method fails to capture the non-local and many-body terms, effectively zeroing them out during its divide-and-conquer procedure. This is highlighted by the missing yellow bars within the red dashed box. In contrast, our ansatz-free method accurately recovers all significant interactions, including the non-local and multi-qubit terms, with just one iteration of structure and coefficient learning. The inset of \figref{fig:example1}(a) shows the total experimental time versus target precision $\epsilon$ for learning one Pauli operator. An empirical fit yields $T_{\text{total}} \sim \epsilon^{-0.96}$, in close agreement with the theoretically optimal Heisenberg-limited scaling as we proved in the previous sections.

\begin{figure}[!htbp]
    \centering
    \includegraphics[width=0.99\linewidth]{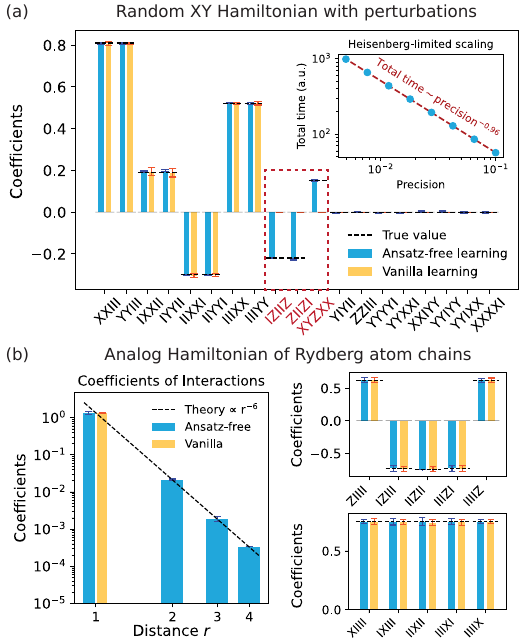}
    \caption{\textbf{Ansatz-free Hamiltonian learning with non-local and many-body interactions.} (a)  Learning a 1D disordered XY model with nearest-neighbor interactions ($X_iX_{i+1} + Y_iY_{i+1}$), supplemented by additional non-local cross-talk and an all-to-all coupling. The ansatz-free method (this work) is compared with the vanilla approach based on a nearest-neighbor ansatz \cite{HuangTongFangSu2023learning}. Each run uses 2000 measurements for structure learning and 1000 for coefficient estimation. The inset shows total measurement time versus target precision; a fit yields $T_{\text{total}} \propto \text{precision}^{-0.96}$, consistent with Heisenberg-limited scaling. The red dashed box highlights failure of the vanilla method to recover non-local and many-body terms. False positives in structure learning (Pauli strings) are shown; coefficient learning confirms them as negligible.
(b) Learning the Hamiltonian of a 1D neutral atom chain with $10\mu \text{m}$ spacing, uniform Rabi drive ($\Omega = 1.5/2\pi$ MHz), and global detuning ($\Delta = -4/2\pi$ MHz). Our method identifies the correct power-law decay from dipolar interactions after 5 rounds of structure and coefficient learning, using the same measurement counts as in (a). In contrast, the vanilla method captures only nearest-neighbor terms. All error bars indicate one standard deviation.}
    \label{fig:example1}
\end{figure}

 In many natural quantum systems, interactions often exhibit long-range tails, and uncovering the full interaction profile — free from ansatz-induced bias — is essential for revealing the underlying physics. A prominent example arises in neutral atoms or polar molecules trapped in optical tweezers, where atoms in highly excited Rydberg states interact via long-range forces. If one encodes a qubit in the ground and the highly excited states also known as Rydberg states, the dominant interaction between nearby atoms resembles an Ising-type $Z_iZ_j$ coupling. However, the interaction is not strictly local — it decays algebraically with distance, and the precise power-law exponent depends on the physical origin, such as van der Waals ($1/r^6$) or dipole-dipole ($1/r^3$) interactions. Capturing these subtle but physically significant long-range features requires an ansatz-free Hamiltonian learning protocol that does not assume a  structure of the interactions. Moreover, to resolve small-magnitude couplings efficiently, such a method must attain Heisenberg-limited scaling, ensuring maximal learning efficiency and minimal experimental time within the fundamental limits set by quantum mechanics. 

 To illustrate this, we simulate a physical Hamiltonian for a chain of five neutral atoms driven by a uniform Rabi frequency $\Omega$ and a global detuning $\Delta$. The system is described by the Hamiltonian
\eqs{
H/\hbar = &\frac{\Omega}{2} \sum_{l} \left( |g_l\rangle \langle r_l| + |r_l\rangle \langle g_l| \right) - \Delta \sum_{l} |r_l\rangle \langle r_l| \\
&+ \sum_{j<l} V_{jl} |r_j\rangle \langle r_j| \otimes |r_l\rangle \langle r_l| ,
}
where $V_{jl} = \mathrm{C} / |\vec{x}_j - \vec{x}_l|^6$ describes the van der Waals interaction between atoms $j$ and $l$. We use $\mathrm{C} = 862{,}690 \times 2\pi$ MHz·$\mu$m$^6$, corresponding to the interaction strength for Rydberg states $70S_{1/2}$ of $^{87}\text{Rb}$ atoms. The atoms are equally spaced with a separation of $10\mu\text{m}$, and the drive parameters are set to $\Omega = 1.5/2\pi$ MHz and $\Delta = -4/2\pi$ MHz. Due to the $1/r^6$ decay of the van der Waals interaction, the interaction strength rapidly diminishes with distance.

As shown in \figref{fig:example1}(b), the vanilla learning method successfully captures the dominant nearest-neighbor interactions (yellow bars) but fails to identify the weaker long-range components, as evident in the left panel. In contrast, our ansatz-free method accurately reconstructs all significant terms, including the long-range tails, after five iterations of structure and coefficient learning. Remarkably, the recovered interaction strengths closely follow the theoretical $1/r^6$ decay expected from van der Waals interactions. A key advantage of our Heisenberg-limited learning framework is that the number of measurements per iteration remains fixed at 2000 for structure learning and 1000 for coefficient estimation, regardless of the interaction strength. This allows the method to efficiently resolve weak interactions that are several orders of magnitude smaller than the leading terms. In contrast, learning algorithms that do not achieve Heisenberg-limited scaling must require an increasing number of measurements to resolve smaller couplings, making them substantially less efficient for identifying subtle features in the Hamiltonian. In Appendix~\ref{sec:numerics_details}, we provide additional details of the numerical simulations and include plots of all Pauli terms identified at each iteration, including false positives.

\begin{figure}[!htbp]
    \centering
    \includegraphics[width=0.99\linewidth]{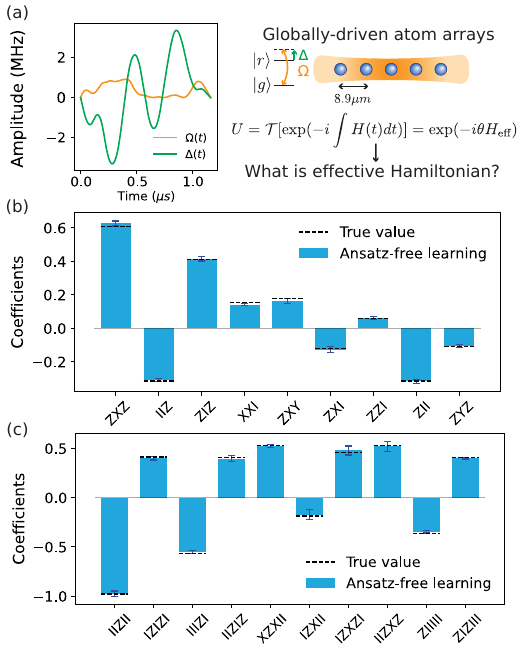}
    \caption{\textbf{Verifying effective Hamiltonians in engineered analog quantum simulations.} (a) A 1D array of three Rydberg atoms spaced by $8.9~\mu\text{m}$ is driven by time-dependent Rabi frequency $\Omega(t)$ and global detuning $\Delta(t)$, engineered to approximate evolution under a symmetry-protected topological (SPT) Hamiltonian with three-body terms $Z_iX_{i+1}Z_{i+2}$, achieving a unitary fidelity of 90$\%$. Verifying the resulting effective Hamiltonian is essential for validating analog quantum simulations and pulse engineering. Applying our ansatz-free learning method, we identify ZXZ as the leading interaction. (b) Extending the same pulse to a five-atom array results in a different effective Hamiltonian. Our method successfully identifies all significant Pauli terms, revealing that the three-atom ansatz fails to generalize. True values are obtained via exact simulation; error bars represent one standard deviation.}
    \label{fig:example2}
\end{figure}

Ansatz-free Hamiltonian learning also serves as a powerful tool for validating analog quantum simulations, acting as an unbiased verifier of the effective dynamics \cite{QuantumVerification,EntanglementHamiltonianLearning}. A central challenge—and opportunity—in analog quantum simulation lies in designing robust pulse sequences that realize effective Hamiltonians beyond the hardware’s native interactions. This task is closely connected to multi-qubit gate synthesis and analog gadget engineering. To illustrate this application, we consider a time-dependent Hamiltonian implemented in a one-dimensional Rydberg atom array, as used in a recent experiment \cite{Hu2025expressivity}, where the control fields $\Omega(t)$ and $\Delta(t)$ are shown in \figref{fig:example2}(a). This pulse sequence is engineered to approximate the effective time evolution of the cluster-Ising model with three-body ZXZ interactions:
\eqs{
U = \mathcal{T} \exp\left(-i \int H(t), dt\right) = \exp(-i\theta H_{\text{eff}}),
}
where $H_{\text{eff}} = \sum_i Z_i X_{i+1} Z_{i+2}$ and $\theta = 0.1$. Due to hardware limitations, the resulting evolution achieves a unitary fidelity of approximately $91\%$ in the three-atom system. This motivates the need to independently learn and verify the effective Hamiltonian realized by the pulse.

Using our ansatz-free method, we performed Hamiltonian learning on both three-atom and five-atom chains driven by the same global time-dependent controls, assuming only that the target rotation angle is $\theta \approx 0.1$. As in previous examples, we used 2000 measurements for structure learning and 1000 for coefficient estimation. The results, shown in \figref{fig:example2}(b,c), confirm that all relevant terms in the effective Hamiltonian are accurately recovered. For the three-atom system, the ZXZ interaction emerges as the dominant term, accompanied by smaller perturbative corrections. However, when the same control sequence is applied to a five-atom system, the resulting effective Hamiltonian differs, despite still featuring global ZXZ components. Additional details regarding the numerical simulations are provided in \Cref{sec:numerics_details}.

This observation highlights a crucial insight: ansatz derived from small systems (e.g., three atoms) may not generalize to larger systems. Imposing such ansatz can introduce significant bias and lead to the omission of important interaction terms. As demonstrated in \figref{fig:example1}(a), our ansatz-free approach avoids this pitfall and enables unbiased reconstruction of complex, engineered Hamiltonians.

\section{Trade-off between total evolution time and quantum control}

\begin{figure*}[ht]
    \centering
    \includegraphics[width=0.99\textwidth]{./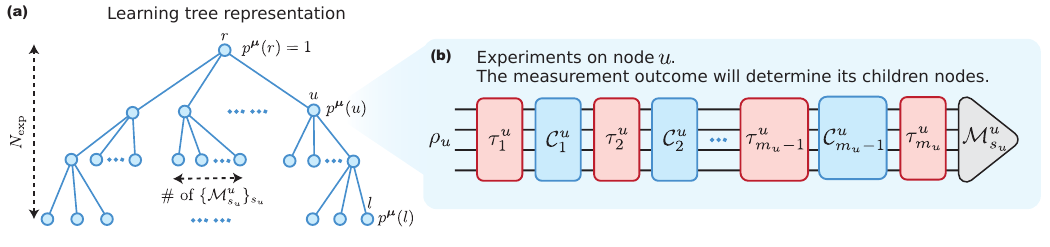}
    \caption{(a) Learning tree representation. Each node $u$ represents an experiment. Starting from the root experiment $r$, the number of child nodes depends on the possible POVM measurements. The transition probabilities are determined by Born's rule (also see \Cref{app_eq:transition_prob}). After $N_\text{exp}$ experiments, one arrives at the leaves $l$ of the learning tree. (b) In each node u, the learning model prepares an arbitrary state $\rho_u$, applies discrete control channels $\mathcal{C}^u_{k}$, and queries real-time Hamiltonian evolutions with time $\tau_k^{u}$ multiple times. The protocols can also incorporate ancilla qubits (quantum memory).
 }
    \label{fig:main_learn_lower_model}
\end{figure*}

Compared to previous approaches that learn Hamiltonians from eigenstates or Gibbs states, our protocol surpasses the standard quantum limit and achieves Heisenberg-limited scaling. This is accomplished by sequentially querying the real-time dynamics of the system, interleaved with discrete quantum controls implemented via Trotterized quantum evolution on a quantum computer. A natural question arises: are quantum controls or having a quantum computer essential for achieving Heisenberg-limited scaling? We answer this affirmatively by proving a trade-off between the total evolution time required to learn an unknown Hamiltonian $H$ and the number of discrete control operations. We emphasize that here, ``control" refers specifically to unitary operations we apply during time evolution, not controlled-unitary gates.

Intuitively, we can generalize experiments with discrete quantum controls and sequential queries to the Hamiltonian as the following theoretical model. 
Given a Hamiltonian $H$ with coefficient vector $\bm{\mu}$ defined in \eqref{eq:unknown_Hamiltonian}, the protocol performs multiple experiments and measures at the end of each experiment.
In each experiment, the protocol prepares an input state (possibly with ancilla qubits) and queries the Hamiltonian multiple times with a discrete quantum control channel between every two neighboring queries, as illustrated in \figref{fig:main_learn_lower_model}(b).

The entire process, including all experimental branches and outcomes, can be organized into a tree structure—referred to as the learning tree representation, in which each node $u$ (see \figref{fig:main_learn_lower_model}(a)) corresponds to a particular experiment. The protocol can be adaptive in the sense that it can dynamically decide how to prepare the input state, query the real-time evolutions, perform quantum controls, and measure the final state based on the history of the previous experiments. Each node in the tree is associated with a probability $p^{\bm{\mu}}(u)$, with the root node having $p^{\bm{\mu}}(r) = 1$. The probability of reaching any other node is determined by the full measurement trajectory leading from the root, in accordance with Born’s rule. A formal definition of the learning tree model is provided in \Cref{app:lower}.

In a recent work~\cite{dutkiewicz2024advantage}, it is shown through quantum Fisher information~\cite{PhysRevD.23.357,PhysRevLett.72.3439,BRAUNSTEIN1996135} that for \emph{unbiased} learning algorithm, Heisenberg-limited scaling is not possible in the absence of quantum control. 
In contrast, we establish a rigorous and fully general lower bound that applies to any adaptive protocol—whether biased or unbiased—that may employ ancillary quantum memory. Our result leverages tools from quantum Fisher information theory and the learning tree representation to characterize the fundamental trade-offs in Hamiltonian learning.
\begin{result}[Informal version of \Cref{thm:lower}]\label{thm:lower_informal}
For any protocol that can be possibly adaptive, biased, and ancilla-assisted, with total evolution time $T$ and at most $\mathcal{L}$ discrete quantum controls per experiment, there exists some Hamiltonian $H$ that requires $T = \Omega(\mathcal{L}^{-1}\epsilon^{-2})$ to estimate the coefficient of the unknown Hamiltonian $H$ within additive error $\epsilon$.

\end{result}
To establish this result, we consider a distinguishing task between two Hamiltonians: the original Hamiltonian $H(\bm{\mu})$ and a perturbed version $H(\bm{\mu} + \delta\bm{\mu})$, where the perturbation satisfies $|\delta\bm{\mu}|_{\infty} \propto \epsilon$. The objective is to determine which of the two Hamiltonians governs the underlying dynamics. It is straightforward to see that any Hamiltonian learning algorithm capable of estimating $\bm{\mu}$ can also be used to solve this distinguishing problem.

Within the learning tree framework, a necessary condition for successful discrimination between these two hypotheses—according to Le Cam’s two-point method~\cite{yu1997assouad}—is that the total variation distance between the probability distributions over the leaf nodes of the two learning trees must be large. Quantitatively, this condition requires
\eqs{
\frac{1}{2} \sum_{\ell \in \text{leaf}(\mathcal{T})} \left| p^{\bm{\mu} + \delta\bm{\mu}}(\ell) - p^{\bm{\mu}}(\ell) \right| = \Theta(1). \label{eq:leaf_prob_diff}
}

Using the martingale likelihood ratio technique developed in a series of recent works~\cite{ComplexityNISQ,2023arXiv230914326C,2021arXiv211207646C}, the total variation lower bound can be translated into a quantitative statement via Lemma 7 of Ref.~\cite{OptimalTradeOff}. Specifically, if there exists a constant $\Delta > 0$ such that
\begin{align}\label{eq:martingale}
\mathbb{E}_{s_u\sim p^{\bm{\mu}}(s_u|u)}[(L_{\delta\bm{\mu}}(u,s_u)-1)^2]\leq\Delta,
\end{align}
where
\begin{align}
L_{\delta\bm{\mu}}(u, s_u) = \frac{p^{\bm{\mu} + \delta\bm{\mu}}(s_u \mid u)}{p^{\bm{\mu}}(s_u \mid u)},
\end{align}
then the number of experiments must satisfy $N_{\exp}\geq\Omega(1/\Delta)$. The physical intuition behind this result is as follows: when the distributions  $p^{\bm{\mu}}(s_u|u)$ and $p^{\bm{\mu}+\delta \bm{\mu}}(s_u|u)$ are nearly indistinguishable at all internal nodes of the learning tree, a deep tree, corresponding to a large number of experimental queries, is required to amplify these differences to a detectable level at the leaves, as captured in \Cref{eq:leaf_prob_diff}. In Appendix~\ref{app:lower}, we show that
\eqs{
&\mathbb{E}_{s_u\sim p^{\bm{\mu}}(s_u|u)}[(L_{\delta\bm{\mu}}(u,s_u)-1)^2] \\
&\leq \max_{\bm{v}:\norm{\bm{v}}_2 3\sqrt{M}\epsilon}\bm{v}^\top I^{(Q)}_{{\bm{\mu}},u}\bm{v},
}
where $I^{(Q)}_{{\bm{\mu}},u}$ is the quantum Fisher information matrix associated with node $u$. Combining this result with the Lemma 17 from Ref.~\cite{dutkiewicz2024advantage}, we establish the claim stated in \Cref{thm:lower_informal}.

This result implies that, in general, no protocol can achieve Heisenberg-limited scaling without employing at least $\mathcal{L} = \Omega(1/\epsilon)$ discrete quantum controls per experiment. This highlights a fundamental trade-off between quantum controllability and learning efficiency. However, this trade-off should not be misinterpreted as allowing Hamiltonian learning protocols to surpass the Heisenberg limit by simply increasing the number of discrete controls. In fact, as shown in Refs.~\cite{HuangTongFangSu2023learning,ma2024learningkbodyhamiltonianscompressed,dutkiewicz2024advantage}, there exists a strict lower bound on the total evolution time, namely $T = \Omega(1/\epsilon)$, that applies regardless of the number of available controls.

\section{Outlook}
In this work, we have introduced the \emph{first} ansatz-free Hamiltonian learning protocol for an $n$-qubit system that makes no assumptions about the interaction structure. Our protocol achieves the gold standard of Heisenberg-limited scaling, with total experimental time scaling as $T=\mathcal{O}(1/\epsilon)$, and remains polynomial in the number of relevant interactions. Importantly, the protocol relies solely on black-box access to the time evolution generated by the unknown Hamiltonian, along with elementary digital controls on a quantum computer. We demonstrate numerically that our method can efficiently learn non-local and many-body interactions in disordered spin chains and Rydberg atom arrays. Furthermore, we establish a fundamental trade-off between total evolution time and quantum control in general Hamiltonian learning protocols. This result offers an intriguing bridge between quantum learning theory and quantum metrology and may serve as a foundation for future theoretical exploration.

Our work opens several promising directions for future research. First, ansatz-free learning provides a critical tool for the verification of analog quantum simulators. In quantum many-body physics, objects such as the entanglement Hamiltonian and the entanglement negativity Hamiltonian play a key role in characterizing quantum correlations in both closed and open systems \cite{EntanglementHamiltonianLearning,Cardy_2016,EntanglementHamiltonianReview,NegativityHamiltonian,2025arXiv250609561T}. While these Hamiltonians are analytically known in only certain solvable models, their structure remains largely unexplored in generic interacting systems. Ansatz-free Hamiltonian learning offers a new tool to empirically probe and verify these objects on programmable quantum devices. In particular, the entanglement negativity Hamiltonian is generally non-Hermitian, motivating the development of new learning protocols tailored to non-Hermitian Hamiltonians \cite{PhysRevB.110.235113}.

Another important direction is to understand the impact of noise during black-box Hamiltonian queries on learning efficiency. Preliminary analysis suggests the existence of a threshold for learning precision $\epsilon$, such that Heisenberg-limited scaling persists when $\epsilon> \mathcal{O}(\eta)$, where $\eta$ characterizes the strength of the noise. Below this threshold, the learning performance degrades, and Heisenberg scaling breaks down. Investigating this transition is an exciting avenue for future work. In particular, techniques from quantum metrology may offer valuable insights in this direction \cite{MetrologyQEC,EnhancedSensing}. Finally, extending ansatz-free learning to the open-system setting is another promising direction. Developing ansatz-free Lindbladian learning protocols would have broad applications in quantum device benchmarking, quantum noise characterization, and open-system dynamics more generally \cite{2024arXiv240803376S,PhysRevA.105.032435}.

\section{Acknowledgements} 
We are thankful for the insightful discussions with Sitan Chen, Friedrich Liyuan Chen, Soonwon Choi, Dong-Ling Deng, Nik O. Gjonbalaj, Yingfei Gu, Hsin-Yuan Huang, Christian Kokail, Yunchao Liu, Francisco Machado, Daniel K. Mark, and Pengfei Zhang. HYH and SFY acknowledge the support from DOE through the QUACQ
program (DE-SC0025572). Author contributions: HYH, MM, and QY conceived the project during Ma’s and Ye’s visit to Harvard. HYH and MM carried out the majority of the calculations. WG and YT contributed to the theoretical proofs. STF and SFY supervised the project throughout.

\newpage

\let\oldaddcontentsline\addcontentsline
\renewcommand{\addcontentsline}[3]{}
\bibliography{ref.bib}

%apsrev4-2.bst 2019-01-14 (MD) hand-edited version of apsrev4-1.bst
%Control: key (0)
%Control: author (8) initials jnrlst
%Control: editor formatted (1) identically to author
%Control: production of article title (0) allowed
%Control: page (0) single
%Control: year (1) truncated
%Control: production of eprint (0) enabled
\begin{thebibliography}{71}%
\makeatletter
\providecommand \@ifxundefined [1]{%
 \@ifx{#1\undefined}
}%
\providecommand \@ifnum [1]{%
 \ifnum #1\expandafter \@firstoftwo
 \else \expandafter \@secondoftwo
 \fi
}%
\providecommand \@ifx [1]{%
 \ifx #1\expandafter \@firstoftwo
 \else \expandafter \@secondoftwo
 \fi
}%
\providecommand \natexlab [1]{#1}%
\providecommand \enquote  [1]{``#1''}%
\providecommand \bibnamefont  [1]{#1}%
\providecommand \bibfnamefont [1]{#1}%
\providecommand \citenamefont [1]{#1}%
\providecommand \href@noop [0]{\@secondoftwo}%
\providecommand \href [0]{\begingroup \@sanitize@url \@href}%
\providecommand \@href[1]{\@@startlink{#1}\@@href}%
\providecommand \@@href[1]{\endgroup#1\@@endlink}%
\providecommand \@sanitize@url [0]{\catcode `\\12\catcode `\$12\catcode `\&12\catcode `\#12\catcode `\^12\catcode `\_12\catcode `\%12\relax}%
\providecommand \@@startlink[1]{}%
\providecommand \@@endlink[0]{}%
\providecommand \url  [0]{\begingroup\@sanitize@url \@url }%
\providecommand \@url [1]{\endgroup\@href {#1}{\urlprefix }}%
\providecommand \urlprefix  [0]{URL }%
\providecommand \Eprint [0]{\href }%
\providecommand \doibase [0]{https://doi.org/}%
\providecommand \selectlanguage [0]{\@gobble}%
\providecommand \bibinfo  [0]{\@secondoftwo}%
\providecommand \bibfield  [0]{\@secondoftwo}%
\providecommand \translation [1]{[#1]}%
\providecommand \BibitemOpen [0]{}%
\providecommand \bibitemStop [0]{}%
\providecommand \bibitemNoStop [0]{.\EOS\space}%
\providecommand \EOS [0]{\spacefactor3000\relax}%
\providecommand \BibitemShut  [1]{\csname bibitem#1\endcsname}%
\let\auto@bib@innerbib\@empty
%</preamble>
\bibitem [{\citenamefont {Daley}(2023)}]{Daley23}%
  \BibitemOpen
  \bibfield  {author} {\bibinfo {author} {\bibfnamefont {A.~J.}\ \bibnamefont {Daley}},\ }\bibfield  {title} {\bibinfo {title} {Twenty-five years of analogue quantum simulation},\ }\href {https://doi.org/10.1038/s42254-023-00666-0} {\bibfield  {journal} {\bibinfo  {journal} {Nature Reviews Physics}\ }\textbf {\bibinfo {volume} {5}},\ \bibinfo {pages} {702} (\bibinfo {year} {2023})}\BibitemShut {NoStop}%
\bibitem [{\citenamefont {Bernien}\ \emph {et~al.}(2017)\citenamefont {Bernien}, \citenamefont {Schwartz}, \citenamefont {Keesling}, \citenamefont {Levine}, \citenamefont {Omran}, \citenamefont {Pichler}, \citenamefont {Choi}, \citenamefont {Zibrov}, \citenamefont {Endres}, \citenamefont {Greiner}, \citenamefont {Vuleti{\'c}},\ and\ \citenamefont {Lukin}}]{ProbingQuantumDynamics}%
  \BibitemOpen
  \bibfield  {author} {\bibinfo {author} {\bibfnamefont {H.}~\bibnamefont {Bernien}}, \bibinfo {author} {\bibfnamefont {S.}~\bibnamefont {Schwartz}}, \bibinfo {author} {\bibfnamefont {A.}~\bibnamefont {Keesling}}, \bibinfo {author} {\bibfnamefont {H.}~\bibnamefont {Levine}}, \bibinfo {author} {\bibfnamefont {A.}~\bibnamefont {Omran}}, \bibinfo {author} {\bibfnamefont {H.}~\bibnamefont {Pichler}}, \bibinfo {author} {\bibfnamefont {S.}~\bibnamefont {Choi}}, \bibinfo {author} {\bibfnamefont {A.~S.}\ \bibnamefont {Zibrov}}, \bibinfo {author} {\bibfnamefont {M.}~\bibnamefont {Endres}}, \bibinfo {author} {\bibfnamefont {M.}~\bibnamefont {Greiner}}, \bibinfo {author} {\bibfnamefont {V.}~\bibnamefont {Vuleti{\'c}}},\ and\ \bibinfo {author} {\bibfnamefont {M.~D.}\ \bibnamefont {Lukin}},\ }\bibfield  {title} {\bibinfo {title} {Probing many-body dynamics on a 51-atom quantum simulator},\ }\href {https://doi.org/10.1038/nature24622} {\bibfield  {journal} {\bibinfo  {journal} {Nature}\ }\textbf {\bibinfo {volume} {551}},\
  \bibinfo {pages} {579} (\bibinfo {year} {2017})}\BibitemShut {NoStop}%
\bibitem [{\citenamefont {Semeghini}\ \emph {et~al.}(2021)\citenamefont {Semeghini}, \citenamefont {Levine}, \citenamefont {Keesling}, \citenamefont {Ebadi}, \citenamefont {Wang}, \citenamefont {Bluvstein}, \citenamefont {Verresen}, \citenamefont {Pichler}, \citenamefont {Kalinowski}, \citenamefont {Samajdar}, \citenamefont {Omran}, \citenamefont {Sachdev}, \citenamefont {Vishwanath}, \citenamefont {Greiner}, \citenamefont {Vuletić},\ and\ \citenamefont {Lukin}}]{SpinLiquid}%
  \BibitemOpen
  \bibfield  {author} {\bibinfo {author} {\bibfnamefont {G.}~\bibnamefont {Semeghini}}, \bibinfo {author} {\bibfnamefont {H.}~\bibnamefont {Levine}}, \bibinfo {author} {\bibfnamefont {A.}~\bibnamefont {Keesling}}, \bibinfo {author} {\bibfnamefont {S.}~\bibnamefont {Ebadi}}, \bibinfo {author} {\bibfnamefont {T.~T.}\ \bibnamefont {Wang}}, \bibinfo {author} {\bibfnamefont {D.}~\bibnamefont {Bluvstein}}, \bibinfo {author} {\bibfnamefont {R.}~\bibnamefont {Verresen}}, \bibinfo {author} {\bibfnamefont {H.}~\bibnamefont {Pichler}}, \bibinfo {author} {\bibfnamefont {M.}~\bibnamefont {Kalinowski}}, \bibinfo {author} {\bibfnamefont {R.}~\bibnamefont {Samajdar}}, \bibinfo {author} {\bibfnamefont {A.}~\bibnamefont {Omran}}, \bibinfo {author} {\bibfnamefont {S.}~\bibnamefont {Sachdev}}, \bibinfo {author} {\bibfnamefont {A.}~\bibnamefont {Vishwanath}}, \bibinfo {author} {\bibfnamefont {M.}~\bibnamefont {Greiner}}, \bibinfo {author} {\bibfnamefont {V.}~\bibnamefont {Vuletić}},\ and\ \bibinfo {author} {\bibfnamefont
  {M.~D.}\ \bibnamefont {Lukin}},\ }\bibfield  {title} {\bibinfo {title} {Probing topological spin liquids on a programmable quantum simulator},\ }\href {https://doi.org/10.1126/science.abi8794} {\bibfield  {journal} {\bibinfo  {journal} {Science}\ }\textbf {\bibinfo {volume} {374}},\ \bibinfo {pages} {1242} (\bibinfo {year} {2021})},\ \Eprint {https://arxiv.org/abs/https://www.science.org/doi/pdf/10.1126/science.abi8794} {https://www.science.org/doi/pdf/10.1126/science.abi8794} \BibitemShut {NoStop}%
\bibitem [{\citenamefont {Ryan-Anderson}\ \emph {et~al.}(2021)\citenamefont {Ryan-Anderson}, \citenamefont {Bohnet}, \citenamefont {Lee}, \citenamefont {Gresh}, \citenamefont {Hankin}, \citenamefont {Gaebler}, \citenamefont {Francois}, \citenamefont {Chernoguzov}, \citenamefont {Lucchetti}, \citenamefont {Brown}, \citenamefont {Gatterman}, \citenamefont {Halit}, \citenamefont {Gilmore}, \citenamefont {Gerber}, \citenamefont {Neyenhuis}, \citenamefont {Hayes},\ and\ \citenamefont {Stutz}}]{PhysRevX.11.041058}%
  \BibitemOpen
  \bibfield  {author} {\bibinfo {author} {\bibfnamefont {C.}~\bibnamefont {Ryan-Anderson}}, \bibinfo {author} {\bibfnamefont {J.~G.}\ \bibnamefont {Bohnet}}, \bibinfo {author} {\bibfnamefont {K.}~\bibnamefont {Lee}}, \bibinfo {author} {\bibfnamefont {D.}~\bibnamefont {Gresh}}, \bibinfo {author} {\bibfnamefont {A.}~\bibnamefont {Hankin}}, \bibinfo {author} {\bibfnamefont {J.~P.}\ \bibnamefont {Gaebler}}, \bibinfo {author} {\bibfnamefont {D.}~\bibnamefont {Francois}}, \bibinfo {author} {\bibfnamefont {A.}~\bibnamefont {Chernoguzov}}, \bibinfo {author} {\bibfnamefont {D.}~\bibnamefont {Lucchetti}}, \bibinfo {author} {\bibfnamefont {N.~C.}\ \bibnamefont {Brown}}, \bibinfo {author} {\bibfnamefont {T.~M.}\ \bibnamefont {Gatterman}}, \bibinfo {author} {\bibfnamefont {S.~K.}\ \bibnamefont {Halit}}, \bibinfo {author} {\bibfnamefont {K.}~\bibnamefont {Gilmore}}, \bibinfo {author} {\bibfnamefont {J.~A.}\ \bibnamefont {Gerber}}, \bibinfo {author} {\bibfnamefont {B.}~\bibnamefont {Neyenhuis}}, \bibinfo {author}
  {\bibfnamefont {D.}~\bibnamefont {Hayes}},\ and\ \bibinfo {author} {\bibfnamefont {R.~P.}\ \bibnamefont {Stutz}},\ }\bibfield  {title} {\bibinfo {title} {Realization of real-time fault-tolerant quantum error correction},\ }\href {https://doi.org/10.1103/PhysRevX.11.041058} {\bibfield  {journal} {\bibinfo  {journal} {Phys. Rev. X}\ }\textbf {\bibinfo {volume} {11}},\ \bibinfo {pages} {041058} (\bibinfo {year} {2021})}\BibitemShut {NoStop}%
\bibitem [{\citenamefont {Postler}\ \emph {et~al.}(2022)\citenamefont {Postler}, \citenamefont {Heuβen}, \citenamefont {Pogorelov}, \citenamefont {Rispler}, \citenamefont {Feldker}, \citenamefont {Meth}, \citenamefont {Marciniak}, \citenamefont {Stricker}, \citenamefont {Ringbauer}, \citenamefont {Blatt}, \citenamefont {Schindler}, \citenamefont {M{\"u}ller},\ and\ \citenamefont {Monz}}]{demo_FTgates}%
  \BibitemOpen
  \bibfield  {author} {\bibinfo {author} {\bibfnamefont {L.}~\bibnamefont {Postler}}, \bibinfo {author} {\bibfnamefont {S.}~\bibnamefont {Heuβen}}, \bibinfo {author} {\bibfnamefont {I.}~\bibnamefont {Pogorelov}}, \bibinfo {author} {\bibfnamefont {M.}~\bibnamefont {Rispler}}, \bibinfo {author} {\bibfnamefont {T.}~\bibnamefont {Feldker}}, \bibinfo {author} {\bibfnamefont {M.}~\bibnamefont {Meth}}, \bibinfo {author} {\bibfnamefont {C.~D.}\ \bibnamefont {Marciniak}}, \bibinfo {author} {\bibfnamefont {R.}~\bibnamefont {Stricker}}, \bibinfo {author} {\bibfnamefont {M.}~\bibnamefont {Ringbauer}}, \bibinfo {author} {\bibfnamefont {R.}~\bibnamefont {Blatt}}, \bibinfo {author} {\bibfnamefont {P.}~\bibnamefont {Schindler}}, \bibinfo {author} {\bibfnamefont {M.}~\bibnamefont {M{\"u}ller}},\ and\ \bibinfo {author} {\bibfnamefont {T.}~\bibnamefont {Monz}},\ }\bibfield  {title} {\bibinfo {title} {Demonstration of fault-tolerant universal quantum gate operations},\ }\href {https://doi.org/10.1038/s41586-022-04721-1}
  {\bibfield  {journal} {\bibinfo  {journal} {Nature}\ }\textbf {\bibinfo {volume} {605}},\ \bibinfo {pages} {675} (\bibinfo {year} {2022})}\BibitemShut {NoStop}%
\bibitem [{\citenamefont {Bluvstein}\ \emph {et~al.}(2024)\citenamefont {Bluvstein}, \citenamefont {Evered}, \citenamefont {Geim}, \citenamefont {Li}, \citenamefont {Zhou}, \citenamefont {Manovitz}, \citenamefont {Ebadi}, \citenamefont {Cain}, \citenamefont {Kalinowski}, \citenamefont {Hangleiter}, \citenamefont {Bonilla~Ataides}, \citenamefont {Maskara}, \citenamefont {Cong}, \citenamefont {Gao}, \citenamefont {Sales~Rodriguez}, \citenamefont {Karolyshyn}, \citenamefont {Semeghini}, \citenamefont {Gullans}, \citenamefont {Greiner}, \citenamefont {Vuleti{\'c}},\ and\ \citenamefont {Lukin}}]{LogicalRydberg}%
  \BibitemOpen
  \bibfield  {author} {\bibinfo {author} {\bibfnamefont {D.}~\bibnamefont {Bluvstein}}, \bibinfo {author} {\bibfnamefont {S.~J.}\ \bibnamefont {Evered}}, \bibinfo {author} {\bibfnamefont {A.~A.}\ \bibnamefont {Geim}}, \bibinfo {author} {\bibfnamefont {S.~H.}\ \bibnamefont {Li}}, \bibinfo {author} {\bibfnamefont {H.}~\bibnamefont {Zhou}}, \bibinfo {author} {\bibfnamefont {T.}~\bibnamefont {Manovitz}}, \bibinfo {author} {\bibfnamefont {S.}~\bibnamefont {Ebadi}}, \bibinfo {author} {\bibfnamefont {M.}~\bibnamefont {Cain}}, \bibinfo {author} {\bibfnamefont {M.}~\bibnamefont {Kalinowski}}, \bibinfo {author} {\bibfnamefont {D.}~\bibnamefont {Hangleiter}}, \bibinfo {author} {\bibfnamefont {J.~P.}\ \bibnamefont {Bonilla~Ataides}}, \bibinfo {author} {\bibfnamefont {N.}~\bibnamefont {Maskara}}, \bibinfo {author} {\bibfnamefont {I.}~\bibnamefont {Cong}}, \bibinfo {author} {\bibfnamefont {X.}~\bibnamefont {Gao}}, \bibinfo {author} {\bibfnamefont {P.}~\bibnamefont {Sales~Rodriguez}}, \bibinfo {author} {\bibfnamefont
  {T.}~\bibnamefont {Karolyshyn}}, \bibinfo {author} {\bibfnamefont {G.}~\bibnamefont {Semeghini}}, \bibinfo {author} {\bibfnamefont {M.~J.}\ \bibnamefont {Gullans}}, \bibinfo {author} {\bibfnamefont {M.}~\bibnamefont {Greiner}}, \bibinfo {author} {\bibfnamefont {V.}~\bibnamefont {Vuleti{\'c}}},\ and\ \bibinfo {author} {\bibfnamefont {M.~D.}\ \bibnamefont {Lukin}},\ }\bibfield  {title} {\bibinfo {title} {Logical quantum processor based on reconfigurable atom arrays},\ }\href {https://doi.org/10.1038/s41586-023-06927-3} {\bibfield  {journal} {\bibinfo  {journal} {Nature}\ }\textbf {\bibinfo {volume} {626}},\ \bibinfo {pages} {58} (\bibinfo {year} {2024})}\BibitemShut {NoStop}%
\bibitem [{\citenamefont {{G}oogle~{Q}uantum AI}\ and\ \citenamefont {Collaborators}(2024)}]{GoogleBreakEven}%
  \BibitemOpen
  \bibfield  {author} {\bibinfo {author} {\bibnamefont {{G}oogle~{Q}uantum AI}}\ and\ \bibinfo {author} {\bibnamefont {Collaborators}},\ }\bibfield  {title} {\bibinfo {title} {Quantum error correction below the surface code threshold},\ }\bibfield  {journal} {\bibinfo  {journal} {Nature}\ }\href {https://doi.org/10.1038/s41586-024-08449-y} {10.1038/s41586-024-08449-y} (\bibinfo {year} {2024})\BibitemShut {NoStop}%
\bibitem [{\citenamefont {{Zhou}}\ \emph {et~al.}(2024)\citenamefont {{Zhou}}, \citenamefont {{Zhao}}, \citenamefont {{Cain}}, \citenamefont {{Bluvstein}}, \citenamefont {{Duckering}}, \citenamefont {{Hu}}, \citenamefont {{Wang}}, \citenamefont {{Kubica}},\ and\ \citenamefont {{Lukin}}}]{AlgorithmicFT}%
  \BibitemOpen
  \bibfield  {author} {\bibinfo {author} {\bibfnamefont {H.}~\bibnamefont {{Zhou}}}, \bibinfo {author} {\bibfnamefont {C.}~\bibnamefont {{Zhao}}}, \bibinfo {author} {\bibfnamefont {M.}~\bibnamefont {{Cain}}}, \bibinfo {author} {\bibfnamefont {D.}~\bibnamefont {{Bluvstein}}}, \bibinfo {author} {\bibfnamefont {C.}~\bibnamefont {{Duckering}}}, \bibinfo {author} {\bibfnamefont {H.-Y.}\ \bibnamefont {{Hu}}}, \bibinfo {author} {\bibfnamefont {S.-T.}\ \bibnamefont {{Wang}}}, \bibinfo {author} {\bibfnamefont {A.}~\bibnamefont {{Kubica}}},\ and\ \bibinfo {author} {\bibfnamefont {M.~D.}\ \bibnamefont {{Lukin}}},\ }\bibfield  {title} {\bibinfo {title} {{Algorithmic Fault Tolerance for Fast Quantum Computing}},\ }\href {https://doi.org/10.48550/arXiv.2406.17653} {\bibfield  {journal} {\bibinfo  {journal} {arXiv e-prints}\ ,\ \bibinfo {eid} {arXiv:2406.17653}} (\bibinfo {year} {2024})},\ \Eprint {https://arxiv.org/abs/2406.17653} {arXiv:2406.17653 [quant-ph]} \BibitemShut {NoStop}%
\bibitem [{\citenamefont {Carrasco}\ \emph {et~al.}(2021)\citenamefont {Carrasco}, \citenamefont {Elben}, \citenamefont {Kokail}, \citenamefont {Kraus},\ and\ \citenamefont {Zoller}}]{QuantumVerification}%
  \BibitemOpen
  \bibfield  {author} {\bibinfo {author} {\bibfnamefont {J.}~\bibnamefont {Carrasco}}, \bibinfo {author} {\bibfnamefont {A.}~\bibnamefont {Elben}}, \bibinfo {author} {\bibfnamefont {C.}~\bibnamefont {Kokail}}, \bibinfo {author} {\bibfnamefont {B.}~\bibnamefont {Kraus}},\ and\ \bibinfo {author} {\bibfnamefont {P.}~\bibnamefont {Zoller}},\ }\bibfield  {title} {\bibinfo {title} {Theoretical and experimental perspectives of quantum verification},\ }\href {https://doi.org/10.1103/PRXQuantum.2.010102} {\bibfield  {journal} {\bibinfo  {journal} {PRX Quantum}\ }\textbf {\bibinfo {volume} {2}},\ \bibinfo {pages} {010102} (\bibinfo {year} {2021})}\BibitemShut {NoStop}%
\bibitem [{\citenamefont {Hangleiter}\ \emph {et~al.}(2024)\citenamefont {Hangleiter}, \citenamefont {Roth}, \citenamefont {Fuksa}, \citenamefont {Eisert},\ and\ \citenamefont {Roushan}}]{RobustLearningSuperConducting}%
  \BibitemOpen
  \bibfield  {author} {\bibinfo {author} {\bibfnamefont {D.}~\bibnamefont {Hangleiter}}, \bibinfo {author} {\bibfnamefont {I.}~\bibnamefont {Roth}}, \bibinfo {author} {\bibfnamefont {J.}~\bibnamefont {Fuksa}}, \bibinfo {author} {\bibfnamefont {J.}~\bibnamefont {Eisert}},\ and\ \bibinfo {author} {\bibfnamefont {P.}~\bibnamefont {Roushan}},\ }\bibfield  {title} {\bibinfo {title} {Robustly learning the hamiltonian dynamics of a superconducting quantum processor},\ }\href {https://doi.org/10.1038/s41467-024-52629-3} {\bibfield  {journal} {\bibinfo  {journal} {Nature Communications}\ }\textbf {\bibinfo {volume} {15}},\ \bibinfo {pages} {9595} (\bibinfo {year} {2024})}\BibitemShut {NoStop}%
\bibitem [{\citenamefont {Huang}\ \emph {et~al.}(2023)\citenamefont {Huang}, \citenamefont {Tong}, \citenamefont {Fang},\ and\ \citenamefont {Su}}]{HuangTongFangSu2023learning}%
  \BibitemOpen
  \bibfield  {author} {\bibinfo {author} {\bibfnamefont {H.-Y.}\ \bibnamefont {Huang}}, \bibinfo {author} {\bibfnamefont {Y.}~\bibnamefont {Tong}}, \bibinfo {author} {\bibfnamefont {D.}~\bibnamefont {Fang}},\ and\ \bibinfo {author} {\bibfnamefont {Y.}~\bibnamefont {Su}},\ }\bibfield  {title} {\bibinfo {title} {Learning many-body hamiltonians with heisenberg-limited scaling},\ }\href {https://doi.org/10.1103/PhysRevLett.130.200403} {\bibfield  {journal} {\bibinfo  {journal} {Phys. Rev. Lett.}\ }\textbf {\bibinfo {volume} {130}},\ \bibinfo {pages} {200403} (\bibinfo {year} {2023})}\BibitemShut {NoStop}%
\bibitem [{\citenamefont {{Ma}}\ \emph {et~al.}(2024)\citenamefont {{Ma}}, \citenamefont {{Flammia}}, \citenamefont {{Preskill}},\ and\ \citenamefont {{Tong}}}]{ma2024learningkbodyhamiltonianscompressed}%
  \BibitemOpen
  \bibfield  {author} {\bibinfo {author} {\bibfnamefont {M.}~\bibnamefont {{Ma}}}, \bibinfo {author} {\bibfnamefont {S.~T.}\ \bibnamefont {{Flammia}}}, \bibinfo {author} {\bibfnamefont {J.}~\bibnamefont {{Preskill}}},\ and\ \bibinfo {author} {\bibfnamefont {Y.}~\bibnamefont {{Tong}}},\ }\bibfield  {title} {\bibinfo {title} {{Learning $k$-body Hamiltonians via compressed sensing}},\ }\href {https://doi.org/10.48550/arXiv.2410.18928} {\bibfield  {journal} {\bibinfo  {journal} {arXiv e-prints}\ ,\ \bibinfo {eid} {arXiv:2410.18928}} (\bibinfo {year} {2024})},\ \Eprint {https://arxiv.org/abs/2410.18928} {arXiv:2410.18928 [quant-ph]} \BibitemShut {NoStop}%
\bibitem [{\citenamefont {Li}\ \emph {et~al.}(2024)\citenamefont {Li}, \citenamefont {Tong}, \citenamefont {Gefen}, \citenamefont {Ni},\ and\ \citenamefont {Ying}}]{LiTongNiGefenYing2023heisenberg}%
  \BibitemOpen
  \bibfield  {author} {\bibinfo {author} {\bibfnamefont {H.}~\bibnamefont {Li}}, \bibinfo {author} {\bibfnamefont {Y.}~\bibnamefont {Tong}}, \bibinfo {author} {\bibfnamefont {T.}~\bibnamefont {Gefen}}, \bibinfo {author} {\bibfnamefont {H.}~\bibnamefont {Ni}},\ and\ \bibinfo {author} {\bibfnamefont {L.}~\bibnamefont {Ying}},\ }\bibfield  {title} {\bibinfo {title} {Heisenberg-limited hamiltonian learning for interacting bosons},\ }\href {https://doi.org/10.1038/s41534-024-00881-2} {\bibfield  {journal} {\bibinfo  {journal} {npj Quantum Information}\ }\textbf {\bibinfo {volume} {10}},\ \bibinfo {pages} {83} (\bibinfo {year} {2024})}\BibitemShut {NoStop}%
\bibitem [{\citenamefont {Ni}\ \emph {et~al.}(2024)\citenamefont {Ni}, \citenamefont {Li},\ and\ \citenamefont {Ying}}]{QLFH}%
  \BibitemOpen
  \bibfield  {author} {\bibinfo {author} {\bibfnamefont {H.}~\bibnamefont {Ni}}, \bibinfo {author} {\bibfnamefont {H.}~\bibnamefont {Li}},\ and\ \bibinfo {author} {\bibfnamefont {L.}~\bibnamefont {Ying}},\ }\bibfield  {title} {\bibinfo {title} {Quantum hamiltonian learning for the fermi-hubbard model},\ }\href {https://doi.org/10.1007/s10440-024-00651-4} {\bibfield  {journal} {\bibinfo  {journal} {Acta Applicandae Mathematicae}\ }\textbf {\bibinfo {volume} {191}},\ \bibinfo {pages} {2} (\bibinfo {year} {2024})}\BibitemShut {NoStop}%
\bibitem [{\citenamefont {Bakshi}\ \emph {et~al.}(2024)\citenamefont {Bakshi}, \citenamefont {Liu}, \citenamefont {Moitra},\ and\ \citenamefont {Tang}}]{Bakshi_2024}%
  \BibitemOpen
  \bibfield  {author} {\bibinfo {author} {\bibfnamefont {A.}~\bibnamefont {Bakshi}}, \bibinfo {author} {\bibfnamefont {A.}~\bibnamefont {Liu}}, \bibinfo {author} {\bibfnamefont {A.}~\bibnamefont {Moitra}},\ and\ \bibinfo {author} {\bibfnamefont {E.}~\bibnamefont {Tang}},\ }\bibfield  {title} {\bibinfo {title} {Structure learning of hamiltonians from real-time evolution},\ }in\ \href {https://doi.org/10.1109/focs61266.2024.00069} {\emph {\bibinfo {booktitle} {2024 IEEE 65th Annual Symposium on Foundations of Computer Science (FOCS)}}}\ (\bibinfo  {publisher} {IEEE},\ \bibinfo {year} {2024})\ p.\ \bibinfo {pages} {1037–1050}\BibitemShut {NoStop}%
\bibitem [{\citenamefont {Mirani}\ and\ \citenamefont {Hayden}(2024)}]{PhysRevA.110.062421}%
  \BibitemOpen
  \bibfield  {author} {\bibinfo {author} {\bibfnamefont {A.}~\bibnamefont {Mirani}}\ and\ \bibinfo {author} {\bibfnamefont {P.}~\bibnamefont {Hayden}},\ }\bibfield  {title} {\bibinfo {title} {Learning interacting fermionic hamiltonians at the heisenberg limit},\ }\href {https://doi.org/10.1103/PhysRevA.110.062421} {\bibfield  {journal} {\bibinfo  {journal} {Phys. Rev. A}\ }\textbf {\bibinfo {volume} {110}},\ \bibinfo {pages} {062421} (\bibinfo {year} {2024})}\BibitemShut {NoStop}%
\bibitem [{\citenamefont {Stilck~Fran{\c c}a}\ \emph {et~al.}(2024)\citenamefont {Stilck~Fran{\c c}a}, \citenamefont {Markovich}, \citenamefont {Dobrovitski}, \citenamefont {Werner},\ and\ \citenamefont {Borregaard}}]{RobustEstimation}%
  \BibitemOpen
  \bibfield  {author} {\bibinfo {author} {\bibfnamefont {D.}~\bibnamefont {Stilck~Fran{\c c}a}}, \bibinfo {author} {\bibfnamefont {L.~A.}\ \bibnamefont {Markovich}}, \bibinfo {author} {\bibfnamefont {V.~V.}\ \bibnamefont {Dobrovitski}}, \bibinfo {author} {\bibfnamefont {A.~H.}\ \bibnamefont {Werner}},\ and\ \bibinfo {author} {\bibfnamefont {J.}~\bibnamefont {Borregaard}},\ }\bibfield  {title} {\bibinfo {title} {Efficient and robust estimation of many-qubit hamiltonians},\ }\href {https://doi.org/10.1038/s41467-023-44012-5} {\bibfield  {journal} {\bibinfo  {journal} {Nature Communications}\ }\textbf {\bibinfo {volume} {15}},\ \bibinfo {pages} {311} (\bibinfo {year} {2024})}\BibitemShut {NoStop}%
\bibitem [{\citenamefont {Yu}\ \emph {et~al.}(2023)\citenamefont {Yu}, \citenamefont {Sun}, \citenamefont {Han},\ and\ \citenamefont {Yuan}}]{Yu2023robustefficient}%
  \BibitemOpen
  \bibfield  {author} {\bibinfo {author} {\bibfnamefont {W.}~\bibnamefont {Yu}}, \bibinfo {author} {\bibfnamefont {J.}~\bibnamefont {Sun}}, \bibinfo {author} {\bibfnamefont {Z.}~\bibnamefont {Han}},\ and\ \bibinfo {author} {\bibfnamefont {X.}~\bibnamefont {Yuan}},\ }\bibfield  {title} {\bibinfo {title} {Robust and {E}fficient {H}amiltonian {L}earning},\ }\href {https://doi.org/10.22331/q-2023-06-29-1045} {\bibfield  {journal} {\bibinfo  {journal} {{Quantum}}\ }\textbf {\bibinfo {volume} {7}},\ \bibinfo {pages} {1045} (\bibinfo {year} {2023})}\BibitemShut {NoStop}%
\bibitem [{\citenamefont {Anshu}\ \emph {et~al.}(2021)\citenamefont {Anshu}, \citenamefont {Arunachalam}, \citenamefont {Kuwahara},\ and\ \citenamefont {Soleimanifar}}]{SampleEfficientLearning}%
  \BibitemOpen
  \bibfield  {author} {\bibinfo {author} {\bibfnamefont {A.}~\bibnamefont {Anshu}}, \bibinfo {author} {\bibfnamefont {S.}~\bibnamefont {Arunachalam}}, \bibinfo {author} {\bibfnamefont {T.}~\bibnamefont {Kuwahara}},\ and\ \bibinfo {author} {\bibfnamefont {M.}~\bibnamefont {Soleimanifar}},\ }\bibfield  {title} {\bibinfo {title} {Sample-efficient learning of interacting quantum systems},\ }\href {https://doi.org/10.1038/s41567-021-01232-0} {\bibfield  {journal} {\bibinfo  {journal} {Nature Physics}\ }\textbf {\bibinfo {volume} {17}},\ \bibinfo {pages} {931} (\bibinfo {year} {2021})}\BibitemShut {NoStop}%
\bibitem [{\citenamefont {Kokail}\ \emph {et~al.}(2021)\citenamefont {Kokail}, \citenamefont {van Bijnen}, \citenamefont {Elben}, \citenamefont {Vermersch},\ and\ \citenamefont {Zoller}}]{EntanglementHamiltonianLearning}%
  \BibitemOpen
  \bibfield  {author} {\bibinfo {author} {\bibfnamefont {C.}~\bibnamefont {Kokail}}, \bibinfo {author} {\bibfnamefont {R.}~\bibnamefont {van Bijnen}}, \bibinfo {author} {\bibfnamefont {A.}~\bibnamefont {Elben}}, \bibinfo {author} {\bibfnamefont {B.}~\bibnamefont {Vermersch}},\ and\ \bibinfo {author} {\bibfnamefont {P.}~\bibnamefont {Zoller}},\ }\bibfield  {title} {\bibinfo {title} {Entanglement hamiltonian tomography in quantum simulation},\ }\href {https://doi.org/10.1038/s41567-021-01260-w} {\bibfield  {journal} {\bibinfo  {journal} {Nature Physics}\ }\textbf {\bibinfo {volume} {17}},\ \bibinfo {pages} {936} (\bibinfo {year} {2021})}\BibitemShut {NoStop}%
\bibitem [{\citenamefont {Olsacher}\ \emph {et~al.}(2025)\citenamefont {Olsacher}, \citenamefont {Kraft}, \citenamefont {Kokail}, \citenamefont {Kraus},\ and\ \citenamefont {Zoller}}]{Olsacher_2025}%
  \BibitemOpen
  \bibfield  {author} {\bibinfo {author} {\bibfnamefont {T.}~\bibnamefont {Olsacher}}, \bibinfo {author} {\bibfnamefont {T.}~\bibnamefont {Kraft}}, \bibinfo {author} {\bibfnamefont {C.}~\bibnamefont {Kokail}}, \bibinfo {author} {\bibfnamefont {B.}~\bibnamefont {Kraus}},\ and\ \bibinfo {author} {\bibfnamefont {P.}~\bibnamefont {Zoller}},\ }\bibfield  {title} {\bibinfo {title} {Hamiltonian and liouvillian learning in weakly-dissipative quantum many-body systems},\ }\href {https://doi.org/10.1088/2058-9565/ad9ed5} {\bibfield  {journal} {\bibinfo  {journal} {Quantum Science and Technology}\ }\textbf {\bibinfo {volume} {10}},\ \bibinfo {pages} {015065} (\bibinfo {year} {2025})}\BibitemShut {NoStop}%
\bibitem [{\citenamefont {{Ott}}\ \emph {et~al.}(2024)\citenamefont {{Ott}}, \citenamefont {{Zache}}, \citenamefont {{Pr{\"u}fer}}, \citenamefont {{Erne}}, \citenamefont {{Tajik}}, \citenamefont {{Pichler}}, \citenamefont {{Schmiedmayer}},\ and\ \citenamefont {{Zoller}}}]{2024arXiv240101308O}%
  \BibitemOpen
  \bibfield  {author} {\bibinfo {author} {\bibfnamefont {R.}~\bibnamefont {{Ott}}}, \bibinfo {author} {\bibfnamefont {T.~V.}\ \bibnamefont {{Zache}}}, \bibinfo {author} {\bibfnamefont {M.}~\bibnamefont {{Pr{\"u}fer}}}, \bibinfo {author} {\bibfnamefont {S.}~\bibnamefont {{Erne}}}, \bibinfo {author} {\bibfnamefont {M.}~\bibnamefont {{Tajik}}}, \bibinfo {author} {\bibfnamefont {H.}~\bibnamefont {{Pichler}}}, \bibinfo {author} {\bibfnamefont {J.}~\bibnamefont {{Schmiedmayer}}},\ and\ \bibinfo {author} {\bibfnamefont {P.}~\bibnamefont {{Zoller}}},\ }\bibfield  {title} {\bibinfo {title} {{Hamiltonian Learning in Quantum Field Theories}},\ }\href {https://doi.org/10.48550/arXiv.2401.01308} {\bibfield  {journal} {\bibinfo  {journal} {arXiv e-prints}\ ,\ \bibinfo {eid} {arXiv:2401.01308}} (\bibinfo {year} {2024})},\ \Eprint {https://arxiv.org/abs/2401.01308} {arXiv:2401.01308 [cond-mat.quant-gas]} \BibitemShut {NoStop}%
\bibitem [{\citenamefont {Caro}(2024)}]{PauliTransfer}%
  \BibitemOpen
  \bibfield  {author} {\bibinfo {author} {\bibfnamefont {M.~C.}\ \bibnamefont {Caro}},\ }\bibfield  {title} {\bibinfo {title} {Learning quantum processes and hamiltonians via the pauli transfer matrix},\ }\bibfield  {journal} {\bibinfo  {journal} {ACM Transactions on Quantum Computing}\ }\textbf {\bibinfo {volume} {5}},\ \href {https://doi.org/10.1145/3670418} {10.1145/3670418} (\bibinfo {year} {2024})\BibitemShut {NoStop}%
\bibitem [{\citenamefont {M{\"o}bus}\ \emph {et~al.}(2023)\citenamefont {M{\"o}bus}, \citenamefont {Bluhm}, \citenamefont {Caro}, \citenamefont {Werner},\ and\ \citenamefont {Rouz{\'e}}}]{MobusBluhmCaroEtAl2023dissipation}%
  \BibitemOpen
  \bibfield  {author} {\bibinfo {author} {\bibfnamefont {T.}~\bibnamefont {M{\"o}bus}}, \bibinfo {author} {\bibfnamefont {A.}~\bibnamefont {Bluhm}}, \bibinfo {author} {\bibfnamefont {M.~C.}\ \bibnamefont {Caro}}, \bibinfo {author} {\bibfnamefont {A.~H.}\ \bibnamefont {Werner}},\ and\ \bibinfo {author} {\bibfnamefont {C.}~\bibnamefont {Rouz{\'e}}},\ }\bibfield  {title} {\bibinfo {title} {Dissipation-enabled bosonic hamiltonian learning via new information-propagation bounds},\ }\href@noop {} {\bibfield  {journal} {\bibinfo  {journal} {arXiv preprint arXiv:2307.15026}\ } (\bibinfo {year} {2023})}\BibitemShut {NoStop}%
\bibitem [{\citenamefont {Haah}\ \emph {et~al.}(2022)\citenamefont {Haah}, \citenamefont {Kothari},\ and\ \citenamefont {Tang}}]{HighTempGibbs}%
  \BibitemOpen
  \bibfield  {author} {\bibinfo {author} {\bibfnamefont {J.}~\bibnamefont {Haah}}, \bibinfo {author} {\bibfnamefont {R.}~\bibnamefont {Kothari}},\ and\ \bibinfo {author} {\bibfnamefont {E.}~\bibnamefont {Tang}},\ }\bibfield  {title} {\bibinfo {title} {Optimal learning of quantum hamiltonians from high-temperature gibbs states},\ }in\ \href {https://doi.org/10.1109/FOCS54457.2022.00020} {\emph {\bibinfo {booktitle} {2022 IEEE 63rd Annual Symposium on Foundations of Computer Science (FOCS)}}}\ (\bibinfo {year} {2022})\ pp.\ \bibinfo {pages} {135--146}\BibitemShut {NoStop}%
\bibitem [{\citenamefont {{Bakshi}}\ \emph {et~al.}(2023)\citenamefont {{Bakshi}}, \citenamefont {{Liu}}, \citenamefont {{Moitra}},\ and\ \citenamefont {{Tang}}}]{2023arXiv231002243B}%
  \BibitemOpen
  \bibfield  {author} {\bibinfo {author} {\bibfnamefont {A.}~\bibnamefont {{Bakshi}}}, \bibinfo {author} {\bibfnamefont {A.}~\bibnamefont {{Liu}}}, \bibinfo {author} {\bibfnamefont {A.}~\bibnamefont {{Moitra}}},\ and\ \bibinfo {author} {\bibfnamefont {E.}~\bibnamefont {{Tang}}},\ }\bibfield  {title} {\bibinfo {title} {{Learning quantum Hamiltonians at any temperature in polynomial time}},\ }\href {https://doi.org/10.48550/arXiv.2310.02243} {\bibfield  {journal} {\bibinfo  {journal} {arXiv e-prints}\ ,\ \bibinfo {eid} {arXiv:2310.02243}} (\bibinfo {year} {2023})},\ \Eprint {https://arxiv.org/abs/2310.02243} {arXiv:2310.02243 [quant-ph]} \BibitemShut {NoStop}%
\bibitem [{\citenamefont {Gu}\ \emph {et~al.}(2024)\citenamefont {Gu}, \citenamefont {Cincio},\ and\ \citenamefont {Coles}}]{PracticalGibbsLearning}%
  \BibitemOpen
  \bibfield  {author} {\bibinfo {author} {\bibfnamefont {A.}~\bibnamefont {Gu}}, \bibinfo {author} {\bibfnamefont {L.}~\bibnamefont {Cincio}},\ and\ \bibinfo {author} {\bibfnamefont {P.~J.}\ \bibnamefont {Coles}},\ }\bibfield  {title} {\bibinfo {title} {Practical hamiltonian learning with unitary dynamics and gibbs states},\ }\href {https://doi.org/10.1038/s41467-023-44008-1} {\bibfield  {journal} {\bibinfo  {journal} {Nature Communications}\ }\textbf {\bibinfo {volume} {15}},\ \bibinfo {pages} {312} (\bibinfo {year} {2024})}\BibitemShut {NoStop}%
\bibitem [{\citenamefont {Qi}\ and\ \citenamefont {Ranard}(2019)}]{Qi2019determininglocal}%
  \BibitemOpen
  \bibfield  {author} {\bibinfo {author} {\bibfnamefont {X.-L.}\ \bibnamefont {Qi}}\ and\ \bibinfo {author} {\bibfnamefont {D.}~\bibnamefont {Ranard}},\ }\bibfield  {title} {\bibinfo {title} {Determining a local {H}amiltonian from a single eigenstate},\ }\href {https://doi.org/10.22331/q-2019-07-08-159} {\bibfield  {journal} {\bibinfo  {journal} {{Quantum}}\ }\textbf {\bibinfo {volume} {3}},\ \bibinfo {pages} {159} (\bibinfo {year} {2019})}\BibitemShut {NoStop}%
\bibitem [{\citenamefont {Li}\ \emph {et~al.}(2020)\citenamefont {Li}, \citenamefont {Zou},\ and\ \citenamefont {Hsieh}}]{PhysRevLett.124.160502}%
  \BibitemOpen
  \bibfield  {author} {\bibinfo {author} {\bibfnamefont {Z.}~\bibnamefont {Li}}, \bibinfo {author} {\bibfnamefont {L.}~\bibnamefont {Zou}},\ and\ \bibinfo {author} {\bibfnamefont {T.~H.}\ \bibnamefont {Hsieh}},\ }\bibfield  {title} {\bibinfo {title} {Hamiltonian tomography via quantum quench},\ }\href {https://doi.org/10.1103/PhysRevLett.124.160502} {\bibfield  {journal} {\bibinfo  {journal} {Phys. Rev. Lett.}\ }\textbf {\bibinfo {volume} {124}},\ \bibinfo {pages} {160502} (\bibinfo {year} {2020})}\BibitemShut {NoStop}%
\bibitem [{\citenamefont {{Evans}}\ \emph {et~al.}(2019)\citenamefont {{Evans}}, \citenamefont {{Harper}},\ and\ \citenamefont {{Flammia}}}]{BayesianHamiltonianLearning}%
  \BibitemOpen
  \bibfield  {author} {\bibinfo {author} {\bibfnamefont {T.~J.}\ \bibnamefont {{Evans}}}, \bibinfo {author} {\bibfnamefont {R.}~\bibnamefont {{Harper}}},\ and\ \bibinfo {author} {\bibfnamefont {S.~T.}\ \bibnamefont {{Flammia}}},\ }\bibfield  {title} {\bibinfo {title} {{Scalable Bayesian Hamiltonian learning}},\ }\href {https://doi.org/10.48550/arXiv.1912.07636} {\bibfield  {journal} {\bibinfo  {journal} {arXiv e-prints}\ ,\ \bibinfo {eid} {arXiv:1912.07636}} (\bibinfo {year} {2019})},\ \Eprint {https://arxiv.org/abs/1912.07636} {arXiv:1912.07636 [quant-ph]} \BibitemShut {NoStop}%
\bibitem [{\citenamefont {{Zhao}}(2024)}]{2024arXiv241021635Z}%
  \BibitemOpen
  \bibfield  {author} {\bibinfo {author} {\bibfnamefont {A.}~\bibnamefont {{Zhao}}},\ }\bibfield  {title} {\bibinfo {title} {{Learning the structure of any Hamiltonian from minimal assumptions}},\ }\href {https://doi.org/10.48550/arXiv.2410.21635} {\bibfield  {journal} {\bibinfo  {journal} {arXiv e-prints}\ ,\ \bibinfo {eid} {arXiv:2410.21635}} (\bibinfo {year} {2024})},\ \Eprint {https://arxiv.org/abs/2410.21635} {arXiv:2410.21635 [quant-ph]} \BibitemShut {NoStop}%
\bibitem [{\citenamefont {{Arunachalam}}\ \emph {et~al.}(2024)\citenamefont {{Arunachalam}}, \citenamefont {{Dutt}},\ and\ \citenamefont {{Escudero Guti{\'e}rrez}}}]{2024arXiv241100082A}%
  \BibitemOpen
  \bibfield  {author} {\bibinfo {author} {\bibfnamefont {S.}~\bibnamefont {{Arunachalam}}}, \bibinfo {author} {\bibfnamefont {A.}~\bibnamefont {{Dutt}}},\ and\ \bibinfo {author} {\bibfnamefont {F.}~\bibnamefont {{Escudero Guti{\'e}rrez}}},\ }\bibfield  {title} {\bibinfo {title} {{Testing and learning structured quantum Hamiltonians}},\ }\href {https://doi.org/10.48550/arXiv.2411.00082} {\bibfield  {journal} {\bibinfo  {journal} {arXiv e-prints}\ ,\ \bibinfo {eid} {arXiv:2411.00082}} (\bibinfo {year} {2024})},\ \Eprint {https://arxiv.org/abs/2411.00082} {arXiv:2411.00082 [quant-ph]} \BibitemShut {NoStop}%
\bibitem [{\citenamefont {{Lidar}}(2012)}]{ReviewOfDD}%
  \BibitemOpen
  \bibfield  {author} {\bibinfo {author} {\bibfnamefont {D.~A.}\ \bibnamefont {{Lidar}}},\ }\bibfield  {title} {\bibinfo {title} {{Review of Decoherence Free Subspaces, Noiseless Subsystems, and Dynamical Decoupling}},\ }\href {https://doi.org/10.48550/arXiv.1208.5791} {\bibfield  {journal} {\bibinfo  {journal} {arXiv e-prints}\ ,\ \bibinfo {eid} {arXiv:1208.5791}} (\bibinfo {year} {2012})},\ \Eprint {https://arxiv.org/abs/1208.5791} {arXiv:1208.5791 [quant-ph]} \BibitemShut {NoStop}%
\bibitem [{\citenamefont {Choi}\ \emph {et~al.}(2020)\citenamefont {Choi}, \citenamefont {Zhou}, \citenamefont {Knowles}, \citenamefont {Landig}, \citenamefont {Choi},\ and\ \citenamefont {Lukin}}]{PhysRevX.10.031002}%
  \BibitemOpen
  \bibfield  {author} {\bibinfo {author} {\bibfnamefont {J.}~\bibnamefont {Choi}}, \bibinfo {author} {\bibfnamefont {H.}~\bibnamefont {Zhou}}, \bibinfo {author} {\bibfnamefont {H.~S.}\ \bibnamefont {Knowles}}, \bibinfo {author} {\bibfnamefont {R.}~\bibnamefont {Landig}}, \bibinfo {author} {\bibfnamefont {S.}~\bibnamefont {Choi}},\ and\ \bibinfo {author} {\bibfnamefont {M.~D.}\ \bibnamefont {Lukin}},\ }\bibfield  {title} {\bibinfo {title} {Robust dynamic hamiltonian engineering of many-body spin systems},\ }\href {https://doi.org/10.1103/PhysRevX.10.031002} {\bibfield  {journal} {\bibinfo  {journal} {Phys. Rev. X}\ }\textbf {\bibinfo {volume} {10}},\ \bibinfo {pages} {031002} (\bibinfo {year} {2020})}\BibitemShut {NoStop}%
\bibitem [{\citenamefont {Choi}\ \emph {et~al.}(2017)\citenamefont {Choi}, \citenamefont {Yao},\ and\ \citenamefont {Lukin}}]{PhysRevLett.119.183603}%
  \BibitemOpen
  \bibfield  {author} {\bibinfo {author} {\bibfnamefont {S.}~\bibnamefont {Choi}}, \bibinfo {author} {\bibfnamefont {N.~Y.}\ \bibnamefont {Yao}},\ and\ \bibinfo {author} {\bibfnamefont {M.~D.}\ \bibnamefont {Lukin}},\ }\bibfield  {title} {\bibinfo {title} {Dynamical engineering of interactions in qudit ensembles},\ }\href {https://doi.org/10.1103/PhysRevLett.119.183603} {\bibfield  {journal} {\bibinfo  {journal} {Phys. Rev. Lett.}\ }\textbf {\bibinfo {volume} {119}},\ \bibinfo {pages} {183603} (\bibinfo {year} {2017})}\BibitemShut {NoStop}%
\bibitem [{\citenamefont {{Evert}}\ \emph {et~al.}(2024)\citenamefont {{Evert}}, \citenamefont {{Gonzalez Izquierdo}}, \citenamefont {{Sud}}, \citenamefont {{Hu}}, \citenamefont {{Grabbe}}, \citenamefont {{Rieffel}}, \citenamefont {{Reagor}},\ and\ \citenamefont {{Wang}}}]{SDD}%
  \BibitemOpen
  \bibfield  {author} {\bibinfo {author} {\bibfnamefont {B.}~\bibnamefont {{Evert}}}, \bibinfo {author} {\bibfnamefont {Z.}~\bibnamefont {{Gonzalez Izquierdo}}}, \bibinfo {author} {\bibfnamefont {J.}~\bibnamefont {{Sud}}}, \bibinfo {author} {\bibfnamefont {H.-Y.}\ \bibnamefont {{Hu}}}, \bibinfo {author} {\bibfnamefont {S.}~\bibnamefont {{Grabbe}}}, \bibinfo {author} {\bibfnamefont {E.~G.}\ \bibnamefont {{Rieffel}}}, \bibinfo {author} {\bibfnamefont {M.~J.}\ \bibnamefont {{Reagor}}},\ and\ \bibinfo {author} {\bibfnamefont {Z.}~\bibnamefont {{Wang}}},\ }\bibfield  {title} {\bibinfo {title} {{Syncopated Dynamical Decoupling for Suppressing Crosstalk in Quantum Circuits}},\ }\href {https://doi.org/10.48550/arXiv.2403.07836} {\bibfield  {journal} {\bibinfo  {journal} {arXiv e-prints}\ ,\ \bibinfo {eid} {arXiv:2403.07836}} (\bibinfo {year} {2024})},\ \Eprint {https://arxiv.org/abs/2403.07836} {arXiv:2403.07836 [quant-ph]} \BibitemShut {NoStop}%
\bibitem [{\citenamefont {Kimmel}\ \emph {et~al.}(2015)\citenamefont {Kimmel}, \citenamefont {Low},\ and\ \citenamefont {Yoder}}]{KimmelLowYoder2015robust}%
  \BibitemOpen
  \bibfield  {author} {\bibinfo {author} {\bibfnamefont {S.}~\bibnamefont {Kimmel}}, \bibinfo {author} {\bibfnamefont {G.~H.}\ \bibnamefont {Low}},\ and\ \bibinfo {author} {\bibfnamefont {T.~J.}\ \bibnamefont {Yoder}},\ }\bibfield  {title} {\bibinfo {title} {Robust calibration of a universal single-qubit gate set via robust phase estimation},\ }\href {https://doi.org/10.1103/PhysRevA.92.062315} {\bibfield  {journal} {\bibinfo  {journal} {Phys. Rev. A}\ }\textbf {\bibinfo {volume} {92}},\ \bibinfo {pages} {062315} (\bibinfo {year} {2015})}\BibitemShut {NoStop}%
\bibitem [{\citenamefont {Flammia}\ and\ \citenamefont {O'Donnell}(2021)}]{Flammia2021paulierror}%
  \BibitemOpen
  \bibfield  {author} {\bibinfo {author} {\bibfnamefont {S.~T.}\ \bibnamefont {Flammia}}\ and\ \bibinfo {author} {\bibfnamefont {R.}~\bibnamefont {O'Donnell}},\ }\bibfield  {title} {\bibinfo {title} {Pauli error estimation via {P}opulation {R}ecovery},\ }\href {https://doi.org/10.22331/q-2021-09-23-549} {\bibfield  {journal} {\bibinfo  {journal} {{Quantum}}\ }\textbf {\bibinfo {volume} {5}},\ \bibinfo {pages} {549} (\bibinfo {year} {2021})}\BibitemShut {NoStop}%
\bibitem [{\citenamefont {Wallman}\ and\ \citenamefont {Emerson}(2016)}]{RC}%
  \BibitemOpen
  \bibfield  {author} {\bibinfo {author} {\bibfnamefont {J.~J.}\ \bibnamefont {Wallman}}\ and\ \bibinfo {author} {\bibfnamefont {J.}~\bibnamefont {Emerson}},\ }\bibfield  {title} {\bibinfo {title} {Noise tailoring for scalable quantum computation via randomized compiling},\ }\href {https://doi.org/10.1103/PhysRevA.94.052325} {\bibfield  {journal} {\bibinfo  {journal} {Phys. Rev. A}\ }\textbf {\bibinfo {volume} {94}},\ \bibinfo {pages} {052325} (\bibinfo {year} {2016})}\BibitemShut {NoStop}%
\bibitem [{\citenamefont {Emerson}\ \emph {et~al.}(2007)\citenamefont {Emerson}, \citenamefont {Silva}, \citenamefont {Moussa}, \citenamefont {Ryan}, \citenamefont {Laforest}, \citenamefont {Baugh}, \citenamefont {Cory},\ and\ \citenamefont {Laflamme}}]{SymmetrizedNoise}%
  \BibitemOpen
  \bibfield  {author} {\bibinfo {author} {\bibfnamefont {J.}~\bibnamefont {Emerson}}, \bibinfo {author} {\bibfnamefont {M.}~\bibnamefont {Silva}}, \bibinfo {author} {\bibfnamefont {O.}~\bibnamefont {Moussa}}, \bibinfo {author} {\bibfnamefont {C.}~\bibnamefont {Ryan}}, \bibinfo {author} {\bibfnamefont {M.}~\bibnamefont {Laforest}}, \bibinfo {author} {\bibfnamefont {J.}~\bibnamefont {Baugh}}, \bibinfo {author} {\bibfnamefont {D.~G.}\ \bibnamefont {Cory}},\ and\ \bibinfo {author} {\bibfnamefont {R.}~\bibnamefont {Laflamme}},\ }\bibfield  {title} {\bibinfo {title} {Symmetrized characterization of noisy quantum processes},\ }\href {https://doi.org/10.1126/science.1145699} {\bibfield  {journal} {\bibinfo  {journal} {Science}\ }\textbf {\bibinfo {volume} {317}},\ \bibinfo {pages} {1893} (\bibinfo {year} {2007})},\ \Eprint {https://arxiv.org/abs/https://www.science.org/doi/pdf/10.1126/science.1145699} {https://www.science.org/doi/pdf/10.1126/science.1145699} \BibitemShut {NoStop}%
\bibitem [{\citenamefont {Chen}\ \emph {et~al.}(2023{\natexlab{a}})\citenamefont {Chen}, \citenamefont {Liu}, \citenamefont {Otten}, \citenamefont {Seif}, \citenamefont {Fefferman},\ and\ \citenamefont {Jiang}}]{PauliNoise}%
  \BibitemOpen
  \bibfield  {author} {\bibinfo {author} {\bibfnamefont {S.}~\bibnamefont {Chen}}, \bibinfo {author} {\bibfnamefont {Y.}~\bibnamefont {Liu}}, \bibinfo {author} {\bibfnamefont {M.}~\bibnamefont {Otten}}, \bibinfo {author} {\bibfnamefont {A.}~\bibnamefont {Seif}}, \bibinfo {author} {\bibfnamefont {B.}~\bibnamefont {Fefferman}},\ and\ \bibinfo {author} {\bibfnamefont {L.}~\bibnamefont {Jiang}},\ }\bibfield  {title} {\bibinfo {title} {The learnability of pauli noise},\ }\href {https://doi.org/10.1038/s41467-022-35759-4} {\bibfield  {journal} {\bibinfo  {journal} {Nature Communications}\ }\textbf {\bibinfo {volume} {14}},\ \bibinfo {pages} {52} (\bibinfo {year} {2023}{\natexlab{a}})}\BibitemShut {NoStop}%
\bibitem [{\citenamefont {van~den Berg}\ \emph {et~al.}(2023)\citenamefont {van~den Berg}, \citenamefont {Minev}, \citenamefont {Kandala},\ and\ \citenamefont {Temme}}]{PEC}%
  \BibitemOpen
  \bibfield  {author} {\bibinfo {author} {\bibfnamefont {E.}~\bibnamefont {van~den Berg}}, \bibinfo {author} {\bibfnamefont {Z.~K.}\ \bibnamefont {Minev}}, \bibinfo {author} {\bibfnamefont {A.}~\bibnamefont {Kandala}},\ and\ \bibinfo {author} {\bibfnamefont {K.}~\bibnamefont {Temme}},\ }\bibfield  {title} {\bibinfo {title} {Probabilistic error cancellation with sparse pauli--lindblad models on noisy quantum processors},\ }\href {https://doi.org/10.1038/s41567-023-02042-2} {\bibfield  {journal} {\bibinfo  {journal} {Nature Physics}\ }\textbf {\bibinfo {volume} {19}},\ \bibinfo {pages} {1116} (\bibinfo {year} {2023})}\BibitemShut {NoStop}%
\bibitem [{\citenamefont {{Hu}}\ \emph {et~al.}(2024)\citenamefont {{Hu}}, \citenamefont {{Gu}}, \citenamefont {{Majumder}}, \citenamefont {{Ren}}, \citenamefont {{Zhang}}, \citenamefont {{Wang}}, \citenamefont {{You}}, \citenamefont {{Minev}}, \citenamefont {{Yelin}},\ and\ \citenamefont {{Seif}}}]{PEC2}%
  \BibitemOpen
  \bibfield  {author} {\bibinfo {author} {\bibfnamefont {H.-Y.}\ \bibnamefont {{Hu}}}, \bibinfo {author} {\bibfnamefont {A.}~\bibnamefont {{Gu}}}, \bibinfo {author} {\bibfnamefont {S.}~\bibnamefont {{Majumder}}}, \bibinfo {author} {\bibfnamefont {H.}~\bibnamefont {{Ren}}}, \bibinfo {author} {\bibfnamefont {Y.}~\bibnamefont {{Zhang}}}, \bibinfo {author} {\bibfnamefont {D.~S.}\ \bibnamefont {{Wang}}}, \bibinfo {author} {\bibfnamefont {Y.-Z.}\ \bibnamefont {{You}}}, \bibinfo {author} {\bibfnamefont {Z.}~\bibnamefont {{Minev}}}, \bibinfo {author} {\bibfnamefont {S.~F.}\ \bibnamefont {{Yelin}}},\ and\ \bibinfo {author} {\bibfnamefont {A.}~\bibnamefont {{Seif}}},\ }\bibfield  {title} {\bibinfo {title} {{Demonstration of Robust and Efficient Quantum Property Learning with Shallow Shadows}},\ }\href {https://doi.org/10.48550/arXiv.2402.17911} {\bibfield  {journal} {\bibinfo  {journal} {arXiv e-prints}\ ,\ \bibinfo {eid} {arXiv:2402.17911}} (\bibinfo {year} {2024})},\ \Eprint {https://arxiv.org/abs/2402.17911}
  {arXiv:2402.17911 [quant-ph]} \BibitemShut {NoStop}%
\bibitem [{\citenamefont {Campbell}(2019)}]{qDrift}%
  \BibitemOpen
  \bibfield  {author} {\bibinfo {author} {\bibfnamefont {E.}~\bibnamefont {Campbell}},\ }\bibfield  {title} {\bibinfo {title} {Random compiler for fast hamiltonian simulation},\ }\href {https://doi.org/10.1103/PhysRevLett.123.070503} {\bibfield  {journal} {\bibinfo  {journal} {Phys. Rev. Lett.}\ }\textbf {\bibinfo {volume} {123}},\ \bibinfo {pages} {070503} (\bibinfo {year} {2019})}\BibitemShut {NoStop}%
\bibitem [{\citenamefont {Chen}\ \emph {et~al.}(2021)\citenamefont {Chen}, \citenamefont {Huang}, \citenamefont {Kueng},\ and\ \citenamefont {Tropp}}]{ConcentrationForRPF}%
  \BibitemOpen
  \bibfield  {author} {\bibinfo {author} {\bibfnamefont {C.-F.}\ \bibnamefont {Chen}}, \bibinfo {author} {\bibfnamefont {H.-Y.}\ \bibnamefont {Huang}}, \bibinfo {author} {\bibfnamefont {R.}~\bibnamefont {Kueng}},\ and\ \bibinfo {author} {\bibfnamefont {J.~A.}\ \bibnamefont {Tropp}},\ }\bibfield  {title} {\bibinfo {title} {Concentration for random product formulas},\ }\href {https://doi.org/10.1103/PRXQuantum.2.040305} {\bibfield  {journal} {\bibinfo  {journal} {PRX Quantum}\ }\textbf {\bibinfo {volume} {2}},\ \bibinfo {pages} {040305} (\bibinfo {year} {2021})}\BibitemShut {NoStop}%
\bibitem [{\citenamefont {Hu}\ \emph {et~al.}(2025)\citenamefont {Hu}, \citenamefont {Gomez}, \citenamefont {Chen}, \citenamefont {Trowbridge}, \citenamefont {Goldschmidt}, \citenamefont {Chong}, \citenamefont {Jaffe},\ and\ \citenamefont {Yelin}}]{Hu2025expressivity}%
  \BibitemOpen
  \bibfield  {author} {\bibinfo {author} {\bibfnamefont {H.-Y.}\ \bibnamefont {Hu}}, \bibinfo {author} {\bibfnamefont {A.~M.}\ \bibnamefont {Gomez}}, \bibinfo {author} {\bibfnamefont {L.}~\bibnamefont {Chen}}, \bibinfo {author} {\bibfnamefont {A.}~\bibnamefont {Trowbridge}}, \bibinfo {author} {\bibfnamefont {A.~J.}\ \bibnamefont {Goldschmidt}}, \bibinfo {author} {\bibfnamefont {F.~T.}\ \bibnamefont {Chong}}, \bibinfo {author} {\bibfnamefont {A.}~\bibnamefont {Jaffe}},\ and\ \bibinfo {author} {\bibfnamefont {S.~F.}\ \bibnamefont {Yelin}},\ }\bibfield  {title} {\bibinfo {title} {Expressivity of analog quantum simulators with global controls}} (\bibinfo {year} {2025}),\ \bibinfo {note} {in preparation}\BibitemShut {NoStop}%
\bibitem [{\citenamefont {Dutkiewicz}\ \emph {et~al.}(2024)\citenamefont {Dutkiewicz}, \citenamefont {O'Brien},\ and\ \citenamefont {Schuster}}]{dutkiewicz2024advantage}%
  \BibitemOpen
  \bibfield  {author} {\bibinfo {author} {\bibfnamefont {A.}~\bibnamefont {Dutkiewicz}}, \bibinfo {author} {\bibfnamefont {T.~E.}\ \bibnamefont {O'Brien}},\ and\ \bibinfo {author} {\bibfnamefont {T.}~\bibnamefont {Schuster}},\ }\bibfield  {title} {\bibinfo {title} {The advantage of quantum control in many-body {H}amiltonian learning},\ }\href {https://doi.org/10.22331/q-2024-11-26-1537} {\bibfield  {journal} {\bibinfo  {journal} {{Quantum}}\ }\textbf {\bibinfo {volume} {8}},\ \bibinfo {pages} {1537} (\bibinfo {year} {2024})}\BibitemShut {NoStop}%
\bibitem [{\citenamefont {Wootters}(1981)}]{PhysRevD.23.357}%
  \BibitemOpen
  \bibfield  {author} {\bibinfo {author} {\bibfnamefont {W.~K.}\ \bibnamefont {Wootters}},\ }\bibfield  {title} {\bibinfo {title} {Statistical distance and hilbert space},\ }\href {https://doi.org/10.1103/PhysRevD.23.357} {\bibfield  {journal} {\bibinfo  {journal} {Phys. Rev. D}\ }\textbf {\bibinfo {volume} {23}},\ \bibinfo {pages} {357} (\bibinfo {year} {1981})}\BibitemShut {NoStop}%
\bibitem [{\citenamefont {Braunstein}\ and\ \citenamefont {Caves}(1994)}]{PhysRevLett.72.3439}%
  \BibitemOpen
  \bibfield  {author} {\bibinfo {author} {\bibfnamefont {S.~L.}\ \bibnamefont {Braunstein}}\ and\ \bibinfo {author} {\bibfnamefont {C.~M.}\ \bibnamefont {Caves}},\ }\bibfield  {title} {\bibinfo {title} {Statistical distance and the geometry of quantum states},\ }\href {https://doi.org/10.1103/PhysRevLett.72.3439} {\bibfield  {journal} {\bibinfo  {journal} {Phys. Rev. Lett.}\ }\textbf {\bibinfo {volume} {72}},\ \bibinfo {pages} {3439} (\bibinfo {year} {1994})}\BibitemShut {NoStop}%
\bibitem [{\citenamefont {Braunstein}\ \emph {et~al.}(1996)\citenamefont {Braunstein}, \citenamefont {Caves},\ and\ \citenamefont {Milburn}}]{BRAUNSTEIN1996135}%
  \BibitemOpen
  \bibfield  {author} {\bibinfo {author} {\bibfnamefont {S.~L.}\ \bibnamefont {Braunstein}}, \bibinfo {author} {\bibfnamefont {C.~M.}\ \bibnamefont {Caves}},\ and\ \bibinfo {author} {\bibfnamefont {G.}~\bibnamefont {Milburn}},\ }\bibfield  {title} {\bibinfo {title} {Generalized uncertainty relations: Theory, examples, and lorentz invariance},\ }\href {https://doi.org/https://doi.org/10.1006/aphy.1996.0040} {\bibfield  {journal} {\bibinfo  {journal} {Annals of Physics}\ }\textbf {\bibinfo {volume} {247}},\ \bibinfo {pages} {135} (\bibinfo {year} {1996})}\BibitemShut {NoStop}%
\bibitem [{\citenamefont {Yu}(1997)}]{yu1997assouad}%
  \BibitemOpen
  \bibfield  {author} {\bibinfo {author} {\bibfnamefont {B.}~\bibnamefont {Yu}},\ }\bibfield  {title} {\bibinfo {title} {Assouad, {F}ano, and {L}e {C}am},\ }\href {https://www.stat.berkeley.edu/~binyu/summer08/yu.assoua.pdf} {\bibfield  {journal} {\bibinfo  {journal} {Festschrift for Lucien Le Cam: research papers in probability and statistics}\ ,\ \bibinfo {pages} {423}} (\bibinfo {year} {1997})}\BibitemShut {NoStop}%
\bibitem [{\citenamefont {Chen}\ \emph {et~al.}(2023{\natexlab{b}})\citenamefont {Chen}, \citenamefont {Cotler}, \citenamefont {Huang},\ and\ \citenamefont {Li}}]{ComplexityNISQ}%
  \BibitemOpen
  \bibfield  {author} {\bibinfo {author} {\bibfnamefont {S.}~\bibnamefont {Chen}}, \bibinfo {author} {\bibfnamefont {J.}~\bibnamefont {Cotler}}, \bibinfo {author} {\bibfnamefont {H.-Y.}\ \bibnamefont {Huang}},\ and\ \bibinfo {author} {\bibfnamefont {J.}~\bibnamefont {Li}},\ }\bibfield  {title} {\bibinfo {title} {The complexity of nisq},\ }\href {https://doi.org/10.1038/s41467-023-41217-6} {\bibfield  {journal} {\bibinfo  {journal} {Nature Communications}\ }\textbf {\bibinfo {volume} {14}},\ \bibinfo {pages} {6001} (\bibinfo {year} {2023}{\natexlab{b}})}\BibitemShut {NoStop}%
\bibitem [{\citenamefont {{Chen}}\ and\ \citenamefont {{Gong}}(2023)}]{2023arXiv230914326C}%
  \BibitemOpen
  \bibfield  {author} {\bibinfo {author} {\bibfnamefont {S.}~\bibnamefont {{Chen}}}\ and\ \bibinfo {author} {\bibfnamefont {W.}~\bibnamefont {{Gong}}},\ }\bibfield  {title} {\bibinfo {title} {{Efficient Pauli channel estimation with logarithmic quantum memory}},\ }\href {https://doi.org/10.48550/arXiv.2309.14326} {\bibfield  {journal} {\bibinfo  {journal} {arXiv e-prints}\ ,\ \bibinfo {eid} {arXiv:2309.14326}} (\bibinfo {year} {2023})},\ \Eprint {https://arxiv.org/abs/2309.14326} {arXiv:2309.14326 [quant-ph]} \BibitemShut {NoStop}%
\bibitem [{\citenamefont {{Chen}}\ and\ \citenamefont {{Brand{\~a}o}}(2021)}]{2021arXiv211207646C}%
  \BibitemOpen
  \bibfield  {author} {\bibinfo {author} {\bibfnamefont {C.-F.}\ \bibnamefont {{Chen}}}\ and\ \bibinfo {author} {\bibfnamefont {F.~G.~S.~L.}\ \bibnamefont {{Brand{\~a}o}}},\ }\bibfield  {title} {\bibinfo {title} {{Fast Thermalization from the Eigenstate Thermalization Hypothesis}},\ }\href {https://doi.org/10.48550/arXiv.2112.07646} {\bibfield  {journal} {\bibinfo  {journal} {arXiv e-prints}\ ,\ \bibinfo {eid} {arXiv:2112.07646}} (\bibinfo {year} {2021})},\ \Eprint {https://arxiv.org/abs/2112.07646} {arXiv:2112.07646 [quant-ph]} \BibitemShut {NoStop}%
\bibitem [{\citenamefont {Chen}\ \emph {et~al.}(2024)\citenamefont {Chen}, \citenamefont {Gong},\ and\ \citenamefont {Ye}}]{OptimalTradeOff}%
  \BibitemOpen
  \bibfield  {author} {\bibinfo {author} {\bibfnamefont {S.}~\bibnamefont {Chen}}, \bibinfo {author} {\bibfnamefont {W.}~\bibnamefont {Gong}},\ and\ \bibinfo {author} {\bibfnamefont {Q.}~\bibnamefont {Ye}},\ }\bibfield  {title} {\bibinfo {title} {Optimal tradeoffs for estimating pauli observables},\ }in\ \href {https://doi.org/10.1109/FOCS61266.2024.00072} {\emph {\bibinfo {booktitle} {2024 IEEE 65th Annual Symposium on Foundations of Computer Science (FOCS)}}}\ (\bibinfo {year} {2024})\ pp.\ \bibinfo {pages} {1086--1105}\BibitemShut {NoStop}%
\bibitem [{\citenamefont {Cardy}\ and\ \citenamefont {Tonni}(2016)}]{Cardy_2016}%
  \BibitemOpen
  \bibfield  {author} {\bibinfo {author} {\bibfnamefont {J.}~\bibnamefont {Cardy}}\ and\ \bibinfo {author} {\bibfnamefont {E.}~\bibnamefont {Tonni}},\ }\bibfield  {title} {\bibinfo {title} {Entanglement hamiltonians in two-dimensional conformal field theory},\ }\href {https://doi.org/10.1088/1742-5468/2016/12/123103} {\bibfield  {journal} {\bibinfo  {journal} {Journal of Statistical Mechanics: Theory and Experiment}\ }\textbf {\bibinfo {volume} {2016}},\ \bibinfo {pages} {123103} (\bibinfo {year} {2016})}\BibitemShut {NoStop}%
\bibitem [{\citenamefont {Dalmonte}\ \emph {et~al.}(2022)\citenamefont {Dalmonte}, \citenamefont {Eisler}, \citenamefont {Falconi},\ and\ \citenamefont {Vermersch}}]{EntanglementHamiltonianReview}%
  \BibitemOpen
  \bibfield  {author} {\bibinfo {author} {\bibfnamefont {M.}~\bibnamefont {Dalmonte}}, \bibinfo {author} {\bibfnamefont {V.}~\bibnamefont {Eisler}}, \bibinfo {author} {\bibfnamefont {M.}~\bibnamefont {Falconi}},\ and\ \bibinfo {author} {\bibfnamefont {B.}~\bibnamefont {Vermersch}},\ }\bibfield  {title} {\bibinfo {title} {Entanglement hamiltonians: From field theory to lattice models and experiments},\ }\href {https://doi.org/https://doi.org/10.1002/andp.202200064} {\bibfield  {journal} {\bibinfo  {journal} {Annalen der Physik}\ }\textbf {\bibinfo {volume} {534}},\ \bibinfo {pages} {2200064} (\bibinfo {year} {2022})},\ \Eprint {https://arxiv.org/abs/https://onlinelibrary.wiley.com/doi/pdf/10.1002/andp.202200064} {https://onlinelibrary.wiley.com/doi/pdf/10.1002/andp.202200064} \BibitemShut {NoStop}%
\bibitem [{\citenamefont {Murciano}\ \emph {et~al.}(2022)\citenamefont {Murciano}, \citenamefont {Vitale}, \citenamefont {Dalmonte},\ and\ \citenamefont {Calabrese}}]{NegativityHamiltonian}%
  \BibitemOpen
  \bibfield  {author} {\bibinfo {author} {\bibfnamefont {S.}~\bibnamefont {Murciano}}, \bibinfo {author} {\bibfnamefont {V.}~\bibnamefont {Vitale}}, \bibinfo {author} {\bibfnamefont {M.}~\bibnamefont {Dalmonte}},\ and\ \bibinfo {author} {\bibfnamefont {P.}~\bibnamefont {Calabrese}},\ }\bibfield  {title} {\bibinfo {title} {Negativity hamiltonian: An operator characterization of mixed-state entanglement},\ }\href {https://doi.org/10.1103/PhysRevLett.128.140502} {\bibfield  {journal} {\bibinfo  {journal} {Phys. Rev. Lett.}\ }\textbf {\bibinfo {volume} {128}},\ \bibinfo {pages} {140502} (\bibinfo {year} {2022})}\BibitemShut {NoStop}%
\bibitem [{\citenamefont {{Travaglino}}\ \emph {et~al.}(2025)\citenamefont {{Travaglino}}, \citenamefont {{Rylands}},\ and\ \citenamefont {{Calabrese}}}]{2025arXiv250609561T}%
  \BibitemOpen
  \bibfield  {author} {\bibinfo {author} {\bibfnamefont {R.}~\bibnamefont {{Travaglino}}}, \bibinfo {author} {\bibfnamefont {C.}~\bibnamefont {{Rylands}}},\ and\ \bibinfo {author} {\bibfnamefont {P.}~\bibnamefont {{Calabrese}}},\ }\bibfield  {title} {\bibinfo {title} {{Quench dynamics of negativity Hamiltonians}},\ }\href {https://doi.org/10.48550/arXiv.2506.09561} {\bibfield  {journal} {\bibinfo  {journal} {arXiv e-prints}\ ,\ \bibinfo {eid} {arXiv:2506.09561}} (\bibinfo {year} {2025})},\ \Eprint {https://arxiv.org/abs/2506.09561} {arXiv:2506.09561 [quant-ph]} \BibitemShut {NoStop}%
\bibitem [{\citenamefont {Xie}\ \emph {et~al.}(2024)\citenamefont {Xie}, \citenamefont {Xue},\ and\ \citenamefont {Zhang}}]{PhysRevB.110.235113}%
  \BibitemOpen
  \bibfield  {author} {\bibinfo {author} {\bibfnamefont {X.-D.}\ \bibnamefont {Xie}}, \bibinfo {author} {\bibfnamefont {Z.-Y.}\ \bibnamefont {Xue}},\ and\ \bibinfo {author} {\bibfnamefont {D.-B.}\ \bibnamefont {Zhang}},\ }\bibfield  {title} {\bibinfo {title} {Determining the non-hermitian parent hamiltonian from a single eigenstate},\ }\href {https://doi.org/10.1103/PhysRevB.110.235113} {\bibfield  {journal} {\bibinfo  {journal} {Phys. Rev. B}\ }\textbf {\bibinfo {volume} {110}},\ \bibinfo {pages} {235113} (\bibinfo {year} {2024})}\BibitemShut {NoStop}%
\bibitem [{\citenamefont {Zhou}\ \emph {et~al.}(2018)\citenamefont {Zhou}, \citenamefont {Zhang}, \citenamefont {Preskill},\ and\ \citenamefont {Jiang}}]{MetrologyQEC}%
  \BibitemOpen
  \bibfield  {author} {\bibinfo {author} {\bibfnamefont {S.}~\bibnamefont {Zhou}}, \bibinfo {author} {\bibfnamefont {M.}~\bibnamefont {Zhang}}, \bibinfo {author} {\bibfnamefont {J.}~\bibnamefont {Preskill}},\ and\ \bibinfo {author} {\bibfnamefont {L.}~\bibnamefont {Jiang}},\ }\bibfield  {title} {\bibinfo {title} {Achieving the heisenberg limit in quantum metrology using quantum error correction},\ }\href {https://doi.org/10.1038/s41467-017-02510-3} {\bibfield  {journal} {\bibinfo  {journal} {Nature Communications}\ }\textbf {\bibinfo {volume} {9}},\ \bibinfo {pages} {78} (\bibinfo {year} {2018})}\BibitemShut {NoStop}%
\bibitem [{\citenamefont {{Allen}}\ \emph {et~al.}(2025)\citenamefont {{Allen}}, \citenamefont {{Machado}}, \citenamefont {{Chuang}}, \citenamefont {{Huang}},\ and\ \citenamefont {{Choi}}}]{EnhancedSensing}%
  \BibitemOpen
  \bibfield  {author} {\bibinfo {author} {\bibfnamefont {R.~R.}\ \bibnamefont {{Allen}}}, \bibinfo {author} {\bibfnamefont {F.}~\bibnamefont {{Machado}}}, \bibinfo {author} {\bibfnamefont {I.~L.}\ \bibnamefont {{Chuang}}}, \bibinfo {author} {\bibfnamefont {H.-Y.}\ \bibnamefont {{Huang}}},\ and\ \bibinfo {author} {\bibfnamefont {S.}~\bibnamefont {{Choi}}},\ }\bibfield  {title} {\bibinfo {title} {{Quantum Computing Enhanced Sensing}},\ }\href@noop {} {\bibfield  {journal} {\bibinfo  {journal} {arXiv e-prints}\ ,\ \bibinfo {eid} {arXiv:2501.07625}} (\bibinfo {year} {2025})},\ \Eprint {https://arxiv.org/abs/2501.07625} {arXiv:2501.07625 [quant-ph]} \BibitemShut {NoStop}%
\bibitem [{\citenamefont {{Seif}}\ \emph {et~al.}(2024)\citenamefont {{Seif}}, \citenamefont {{Chen}}, \citenamefont {{Majumder}}, \citenamefont {{Liao}}, \citenamefont {{Wang}}, \citenamefont {{Malekakhlagh}}, \citenamefont {{Javadi-Abhari}}, \citenamefont {{Jiang}},\ and\ \citenamefont {{Minev}}}]{2024arXiv240803376S}%
  \BibitemOpen
  \bibfield  {author} {\bibinfo {author} {\bibfnamefont {A.}~\bibnamefont {{Seif}}}, \bibinfo {author} {\bibfnamefont {S.}~\bibnamefont {{Chen}}}, \bibinfo {author} {\bibfnamefont {S.}~\bibnamefont {{Majumder}}}, \bibinfo {author} {\bibfnamefont {H.}~\bibnamefont {{Liao}}}, \bibinfo {author} {\bibfnamefont {D.~S.}\ \bibnamefont {{Wang}}}, \bibinfo {author} {\bibfnamefont {M.}~\bibnamefont {{Malekakhlagh}}}, \bibinfo {author} {\bibfnamefont {A.}~\bibnamefont {{Javadi-Abhari}}}, \bibinfo {author} {\bibfnamefont {L.}~\bibnamefont {{Jiang}}},\ and\ \bibinfo {author} {\bibfnamefont {Z.~K.}\ \bibnamefont {{Minev}}},\ }\bibfield  {title} {\bibinfo {title} {{Entanglement-enhanced learning of quantum processes at scale}},\ }\href {https://doi.org/10.48550/arXiv.2408.03376} {\bibfield  {journal} {\bibinfo  {journal} {arXiv e-prints}\ ,\ \bibinfo {eid} {arXiv:2408.03376}} (\bibinfo {year} {2024})},\ \Eprint {https://arxiv.org/abs/2408.03376} {arXiv:2408.03376 [quant-ph]} \BibitemShut {NoStop}%
\bibitem [{\citenamefont {Chen}\ \emph {et~al.}(2022{\natexlab{a}})\citenamefont {Chen}, \citenamefont {Zhou}, \citenamefont {Seif},\ and\ \citenamefont {Jiang}}]{PhysRevA.105.032435}%
  \BibitemOpen
  \bibfield  {author} {\bibinfo {author} {\bibfnamefont {S.}~\bibnamefont {Chen}}, \bibinfo {author} {\bibfnamefont {S.}~\bibnamefont {Zhou}}, \bibinfo {author} {\bibfnamefont {A.}~\bibnamefont {Seif}},\ and\ \bibinfo {author} {\bibfnamefont {L.}~\bibnamefont {Jiang}},\ }\bibfield  {title} {\bibinfo {title} {Quantum advantages for pauli channel estimation},\ }\href {https://doi.org/10.1103/PhysRevA.105.032435} {\bibfield  {journal} {\bibinfo  {journal} {Phys. Rev. A}\ }\textbf {\bibinfo {volume} {105}},\ \bibinfo {pages} {032435} (\bibinfo {year} {2022}{\natexlab{a}})}\BibitemShut {NoStop}%
\bibitem [{\citenamefont {Suzuki}(1976)}]{Suzuki1}%
  \BibitemOpen
  \bibfield  {author} {\bibinfo {author} {\bibfnamefont {M.}~\bibnamefont {Suzuki}},\ }\bibfield  {title} {\bibinfo {title} {Generalized trotter's formula and systematic approximants of exponential operators and inner derivations with applications to many-body problems},\ }\href {https://doi.org/10.1007/BF01609348} {\bibfield  {journal} {\bibinfo  {journal} {Communications in Mathematical Physics}\ }\textbf {\bibinfo {volume} {51}},\ \bibinfo {pages} {183} (\bibinfo {year} {1976})}\BibitemShut {NoStop}%
\bibitem [{\citenamefont {Suzuki}(1977)}]{SuzukiConvergence}%
  \BibitemOpen
  \bibfield  {author} {\bibinfo {author} {\bibfnamefont {M.}~\bibnamefont {Suzuki}},\ }\bibfield  {title} {\bibinfo {title} {On the convergence of exponential operators---the zassenhaus formula, bch formula and systematic approximants},\ }\href {https://doi.org/10.1007/BF01614161} {\bibfield  {journal} {\bibinfo  {journal} {Communications in Mathematical Physics}\ }\textbf {\bibinfo {volume} {57}},\ \bibinfo {pages} {193} (\bibinfo {year} {1977})}\BibitemShut {NoStop}%
\bibitem [{\citenamefont {Nakaji}\ \emph {et~al.}(2024)\citenamefont {Nakaji}, \citenamefont {Bagherimehrab},\ and\ \citenamefont {Aspuru-Guzik}}]{PRXQuantum.5.020330}%
  \BibitemOpen
  \bibfield  {author} {\bibinfo {author} {\bibfnamefont {K.}~\bibnamefont {Nakaji}}, \bibinfo {author} {\bibfnamefont {M.}~\bibnamefont {Bagherimehrab}},\ and\ \bibinfo {author} {\bibfnamefont {A.}~\bibnamefont {Aspuru-Guzik}},\ }\bibfield  {title} {\bibinfo {title} {High-order randomized compiler for hamiltonian simulation},\ }\href {https://doi.org/10.1103/PRXQuantum.5.020330} {\bibfield  {journal} {\bibinfo  {journal} {PRX Quantum}\ }\textbf {\bibinfo {volume} {5}},\ \bibinfo {pages} {020330} (\bibinfo {year} {2024})}\BibitemShut {NoStop}%
\bibitem [{\citenamefont {Childs}\ \emph {et~al.}(2021)\citenamefont {Childs}, \citenamefont {Su}, \citenamefont {Tran}, \citenamefont {Wiebe},\ and\ \citenamefont {Zhu}}]{PhysRevX.11.011020}%
  \BibitemOpen
  \bibfield  {author} {\bibinfo {author} {\bibfnamefont {A.~M.}\ \bibnamefont {Childs}}, \bibinfo {author} {\bibfnamefont {Y.}~\bibnamefont {Su}}, \bibinfo {author} {\bibfnamefont {M.~C.}\ \bibnamefont {Tran}}, \bibinfo {author} {\bibfnamefont {N.}~\bibnamefont {Wiebe}},\ and\ \bibinfo {author} {\bibfnamefont {S.}~\bibnamefont {Zhu}},\ }\bibfield  {title} {\bibinfo {title} {Theory of trotter error with commutator scaling},\ }\href {https://doi.org/10.1103/PhysRevX.11.011020} {\bibfield  {journal} {\bibinfo  {journal} {Phys. Rev. X}\ }\textbf {\bibinfo {volume} {11}},\ \bibinfo {pages} {011020} (\bibinfo {year} {2021})}\BibitemShut {NoStop}%
\bibitem [{Note1()}]{Note1}%
  \BibitemOpen
  \bibinfo {note} {By ``non-adaptive'' we mean that the choice of each $t_j$ does not depend on the value of $X(t_{j'})$ or $Y(t_j')$ for any $j'$.}\BibitemShut {Stop}%
\bibitem [{\citenamefont {Chen}\ \emph {et~al.}(2022{\natexlab{b}})\citenamefont {Chen}, \citenamefont {Cotler}, \citenamefont {Huang},\ and\ \citenamefont {Li}}]{learningtree}%
  \BibitemOpen
  \bibfield  {author} {\bibinfo {author} {\bibfnamefont {S.}~\bibnamefont {Chen}}, \bibinfo {author} {\bibfnamefont {J.}~\bibnamefont {Cotler}}, \bibinfo {author} {\bibfnamefont {H.-Y.}\ \bibnamefont {Huang}},\ and\ \bibinfo {author} {\bibfnamefont {J.}~\bibnamefont {Li}},\ }\bibfield  {title} {\bibinfo {title} {Exponential separations between learning with and without quantum memory},\ }in\ \href {https://doi.org/10.1109/FOCS52979.2021.00063} {\emph {\bibinfo {booktitle} {2021 IEEE 62nd Annual Symposium on Foundations of Computer Science (FOCS)}}}\ (\bibinfo {year} {2022})\ pp.\ \bibinfo {pages} {574--585}\BibitemShut {NoStop}%
\bibitem [{\citenamefont {Paris}(2009)}]{Paris2009Quantum}%
  \BibitemOpen
  \bibfield  {author} {\bibinfo {author} {\bibfnamefont {M.~G.~A.}\ \bibnamefont {Paris}},\ }\bibfield  {title} {\bibinfo {title} {Quantum estimation for quantum technology},\ }\href {https://doi.org/10.1142/S0219749909004839} {\bibfield  {journal} {\bibinfo  {journal} {Int. J. Quantum Inf.}\ }\textbf {\bibinfo {volume} {07}},\ \bibinfo {pages} {125} (\bibinfo {year} {2009})}\BibitemShut {NoStop}%
\end{thebibliography}%
\let\addcontentsline\oldaddcontentsline
\clearpage
\newpage

\onecolumngrid

\begin{appendix}
\clearpage

\renewcommand{\appendixname}{Appendix}
\newcommand{\appendixarabic}{
    \renewcommand{\thesubsection}{\arabic{subsection}}
    \renewcommand{\thesubsubsection}{\arabic{subsection}.\arabic{subsubsection}}
    \makeatletter
    \renewcommand{\p@subsection}{}
    \renewcommand{\p@subsubsection}{}
    \makeatother
}

\renewcommand{\figurename}{Supplementary Figure}
\setcounter{secnumdepth}{3}
\makeatletter
     \@addtoreset{figure}{section}
\makeatother

\setcounter{figure}{0}
\renewcommand{\figurename}{Fig.}
\renewcommand{\thefigure}{S\arabic{figure}}
\setcounter{table}{0}
\renewcommand{\tablename}{Table}
\renewcommand{\thetable}{S\arabic{table}}

\begin{center}
    \textbf{CONTENTS}
\end{center}

\smallskip

\noindent \ref{app:notations}.~~\hyperref[app:notations]{NOTATIONS}\dotfill\textbf{\pageref{app:notations}}
\medskip

\noindent \ref{app:main_ancillary}.~~\hyperref[app:main_ancillary]{MAIN RESULTS AND PROOF IDEAS}\dotfill\textbf{\pageref{app:main_ancillary}}
\medskip

\noindent 
\ref{app:structure_learning}.~~\hyperref[app:structure_learning]{HAMILTONIAN STRUCTURE LEARNING $\mathcal{A}^{I}_j$} \dotfill\textbf{\pageref{app:structure_learning}}
\medskip

\noindent \qquad \begin{minipage}{\dimexpr\textwidth-0.82cm}
  % $\bullet$ 
  \hyperref[app:Trotter]{1. Hamiltonian simulation with Trotterization}\dotfill\textbf{\pageref{app:Trotter}}
  \newline
 % $\bullet$ 
\hyperref[app:Taylor]{2. Taylor expansion of the time evolution operator}\dotfill\textbf{\pageref{app:Taylor}}
\newline
\hyperref[sec:Bell_sampling]{3. Bell sampling}\dotfill\textbf{\pageref{sec:Bell_sampling}}
\newline
\hyperref[sec:SPAM_error]{4. SPAM error}\dotfill\textbf{\pageref{sec:SPAM_error}}
\newline
\hyperref[sec:lower_bound_prob]{5. Lower bound for the probability of elements in $\mathfrak{S}_j$}\dotfill\textbf{\pageref{sec:lower_bound_prob}}
\newline
\hyperref[sec:high_prob_detection]{6. Determine the high-probability elements of a sparse probability distribution}\dotfill\textbf{\pageref{sec:high_prob_detection}}
\newline
\hyperref[sec:complexity_I]{7. The complexity of $\mathcal{A}_j^{I}$}\dotfill\textbf{\pageref{sec:complexity_I}}
 \end{minipage}
\medskip

\noindent 
\ref{app:coef_learning}.~~\hyperref[app:coef_learning]{HAMILTONIAN COEFFICIENT LEARNING $\mathcal{A}^{II}_j$} \dotfill\textbf{\pageref{app:coef_learning}}
\medskip

\noindent \qquad \begin{minipage}{\dimexpr\textwidth-0.82cm}
  % $\bullet$ 
  \hyperref[sec:app_Hamiltonian_reshaping]{1. Hamiltonian reshaping}\dotfill\textbf{\pageref{sec:app_Hamiltonian_reshaping}}
  \newline
 % $\bullet$ 
\hyperref[sec:robust_frequency_estimation_details]{2. Robust frequency estimation}\dotfill\textbf{\pageref{sec:robust_frequency_estimation_details}}
\newline
\hyperref[sec:experimental_setup_of_A_2]{3. The experimental setup of $\mathcal{A}_j^{II}$}\dotfill\textbf{\pageref{sec:experimental_setup_of_A_2}}
\newline
\hyperref[sec:complexities_of_A_2]{4. The complexity of $\mathcal{A}_j^{II}$}\dotfill\textbf{\pageref{sec:complexities_of_A_2}}
 \end{minipage}
\medskip

\noindent 
\ref{sec:total_time_complexity}.~~\hyperref[sec:total_time_complexity]{TOTAL EVOLUTION TIME COMPLEXITY} \dotfill\textbf{\pageref{sec:total_time_complexity}}
\medskip

\noindent 
\ref{sec:single-copy}.~~\hyperref[sec:single-copy]{ALTERNATIVE SINGLE-COPY PRODUCT STATE INPUT APPROACH $\mathcal{A}^{I'}_j$} \dotfill\textbf{\pageref{sec:single-copy}}
\medskip

\noindent \qquad \begin{minipage}{\dimexpr\textwidth-0.82cm}
  % $\bullet$ 
  \hyperref[sec:Pauli_error_rate]{1. Pauli error rate for time-evolution channel}\dotfill\textbf{\pageref{sec:Pauli_error_rate}}
  \newline
 % $\bullet$ 
\hyperref[sec:population_recovery]{2. Estimating Pauli error rates via population recovery}\dotfill\textbf{\pageref{sec:population_recovery}}
\newline
\hyperref[sec:complexity_single_copy]{3. Complexity of $\mathcal{A}^{I'}_j$}\dotfill\textbf{\pageref{sec:complexity_single_copy}}
\newline
\hyperref[sec:total_evo_time_for_single_copy]{4. Totoal evolution time complexity of the single-copy product state input Hamiltonian learning protocol}\dotfill\textbf{\pageref{sec:total_evo_time_for_single_copy}}

 \end{minipage}
\medskip

\noindent 
\ref{app:lower}.~~\hyperref[app:lower]{THE PROOF FOR THE TRADE-OFF} \dotfill\textbf{\pageref{app:lower}}
\medskip

\noindent 
\ref{sec:numerics_details}.~~\hyperref[sec:numerics_details]{DETAILS OF THE SIMULATION} \dotfill\textbf{\pageref{sec:numerics_details}}
\medskip

\section{Notations\label{app:notations}}
In this work, the Pauli matrices are denoted by $\sigma^{x},\sigma^{y},\sigma^{z}$.
We use the following notation to denote the Pauli eigenstates:
\begin{equation}
    \label{eq:pauli_lambdastates}
    \begin{aligned}
        &\ket{1,z} = \ket{0},\quad \ket{-1,z} = \ket{1},\quad \ket{1,x} = \ket{+},\quad \ket{-1,x} = \ket{-},\\
        &\ket{1,y} = \frac{1}{\sqrt{2}}(\ket{0}+i\ket{1}),\quad \ket{-1,y} = \frac{1}{\sqrt{2}}(\ket{0}-i\ket{1}).
    \end{aligned}
\end{equation}
We denote the set of all $N$-fold tensor products of single-qubit Pauli matrices (and the identity) by $\mathbb{P}_n$:
\begin{equation}
    \label{eq:defn_pauli_matrices}
    \mathbb{P}_n = \left\{\bigotimes_{i=1}^n P_i:P_i=I,\sigma^x,\sigma^y,\text{ or }\sigma^z\right\}.
\end{equation}

We use the following notation for the four maximally entangled 2-qubit states
\eqs{
&\ket{\Phi^{+}} = \dfrac{\ket{00}+\ket{11}}{\sqrt{2}}\\
&\ket{\Phi^{-}}=\dfrac{\ket{00}-\ket{11}}{\sqrt{2}}\\
&\ket{\Psi^{+}}=\dfrac{\ket{10}+\ket{01}}{\sqrt{2}}\\
&\ket{\Psi^{-}}=i\dfrac{\ket{10}-\ket{01}}{\sqrt{2}},
}
which are also the four eigenstates of the Bell-basis measurement. In the following, we will also use $\ket{\EPR_n}=\ket{\Phi^{+}}^{\otimes n}$.

We consider the $N$-qubit traceless Hamiltonian to be learned as
\begin{equation}
    H = \sum_s\mu_sP_s,\quad |\mu_s|\leq 1,
\end{equation}
where $P_s\in \mathbb{P}_N\setminus\{I^{\otimes N}\}$. 
Let $M$ be the number of terms in the coefficient vector $\boldsymbol{\mu}=(\mu_1,\mu_2,\ldots,\mu_{4^N-1})^T$ which are larger than the threshold constant $\epsilon\in(0,1)$; this is the \emph{sparsity} of $H$. 

Let $S\subseteq \mathbb{P}_n$ be the set of the indices $s$ of terms $P_s$ in $H$ with coefficients $|\mu_s|>\epsilon$. 
For $j\in\{0,1,2,\ldots,\left\lceil \log_2(1/\epsilon) \right\rceil\}$, define the set $S_j$ of all indices $s$ of terms $P_s$ in $H$ such that their coefficients $\mu_s$ satisfy $|\mu_s|\leq 2^{-j}$:
\begin{equation}
\label{eq:S_j}
    S_j = \{s: s\in S, |\mu_s|\leq 2^{-j}\},\quad j\in\{0,1,2,\ldots,\left\lceil \log_2(1/\epsilon) \right\rceil\},
\end{equation}
note that $S=S_0$. Let the set $\mathfrak{S}_j = S_j\setminus S_{j+1}$ be the set of all indices $s$ of terms $P_s$ in $H$ such that their coefficients $\mu_s$ satisfy $2^{-(j+1)}<|\mu_s|\leq 2^{-j}$:
\begin{equation}
    \mathfrak{S}_j = \{s: s\in S, 2^{-(j+1)} < |\mu_s|\leq 2^{-j}\},\quad j\in\{0,1,2,\ldots,\left\lceil \log_2(1/\epsilon) \right\rceil-1\}.
\end{equation}

Let $U_H(t) = e^{-iHt}$ to represent the time evolution operator under $H$.
Let $\ket{\EPR} = \frac{1}{\sqrt{2}}(\ket{00}+\ket{11})$ be the $2$-qubit EPR state, and let $\ket{\EPR_n} = \bigotimes_{j=1}^n\ket{\EPR}$ be the $2n$-qubit EPR state.

Consider a general quantum channel $\Lambda$.  
One can write it in its Kraus operator expression as
\begin{equation}
\label{eq:Kraus_operator}
    \Lambda (\rho) = \sum_kK_k\rho K_k^\dagger,
\end{equation}
where $\sum_k K_k^\dagger K_k = I$. 

Expanding the Kraus operator in the Pauli basis, we find
\begin{equation}
\label{eq:Pauli_expansion_of_Kraus}
    K_j = \sum_{\sigma_{k}\in\{I,\sigma_x,\sigma_y,\sigma_z\}} \alpha_{j,k}\sigma_{k}.
\end{equation}

\section{Main results and proof ideas\label{app:main_ancillary}}
The main result of Hamiltonian learning with ancillary systems can be summarized as
\begin{theorem}[2-copy Heisenberg-limited Hamiltonian learning algorithm]
\label{thm:2-copy_learning_alg}
For an arbitrary $n$-qubit unknown Hamiltonian $H=\sum_{s\in S}\mu_s P_{s}$, with $|\mu_s|\leq 1$ and $|S|\leq M$, there exists a hierarchical learning quantum algorithm which only queries the black box forward evolution of $H$, and a fault-tolerant quantum computer with $n$ ancillary qubits that outputs a classical description $\hat{\mu}_s$ of $\mu_s$ such that $|\hat{\mu}_s-\mu_s|<\epsilon$ with probability at least $1-\delta$, and the total experimental time is 

$$
T = \mathcal{O}\left(\frac{M^2\log(M/\delta)[\log(1/\epsilon)]^2}{\epsilon}\right).
$$
This algorithm requires no non-trivial classical post-processing and is robust against SPAM error.
\end{theorem}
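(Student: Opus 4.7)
The plan is to prove \Cref{thm:2-copy_learning_alg} by assembling the two subroutines $\mathcal{A}^I$ and $\mathcal{A}^{II}$ into the hierarchical scheme of Algorithm~3 and then bounding the total experimental time level by level. First, I would fix the level index $j \in \{0,1,\ldots,\lceil\log_2(1/\epsilon)\rceil-1\}$ and establish the correctness and cost of the two subroutines on the residual Hamiltonian $H-\hat{H}_j$, where $\hat{H}_j$ contains all Pauli terms with $|\mu_s|>2^{-j}$ that have been learned in earlier rounds. The residual has $|\mu_s|\le 2^{-j}$ on its support, so rescaling the evolution time by a factor $2^j$ brings the target coefficients $\mathfrak{S}_j$ into the regime $\mu_m = \Theta(1)$, which is the natural regime in which to analyze both $\mathcal{A}^I$ and $\mathcal{A}^{II}$.

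For the structure-learning step $\mathcal{A}^I$, I would expand the Bell-sampled state $(U_{H-\hat H_j}(\tau)\otimes I)\ket{\EPR_n}$ in the Bell basis via the Pauli--Liouville isomorphism $(P\otimes I)\ket{\EPR_n}\propto \ket{P}\!\rangle_{\mathrm{Bell}}$. A Taylor expansion in $\tau$ then gives that the probability of outcome $s\in\mathfrak{S}_j$ is $p(s\mid\boldsymbol\mu)=\sin^2(\mu_s\tau)+O(M^2\tau^4)$ after absorbing higher-order Pauli cross terms; choosing $\tau=\Theta(1/M)$ and $\mu_s=\Theta(1)$ after rescaling yields $p(s\mid\boldsymbol\mu)=\Omega(1/M^2)$. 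A union bound then shows that $\mathfrak{M}=\Theta(M^2\log(M/\delta))$ Bell samples suffice to hit every element of $\mathfrak{S}_j$ at least once with failure probability $\le\delta/(2\lceil\log_2(1/\epsilon)\rceil)$. Spurious Pauli strings produced by higher-order Taylor terms are flagged but will be pruned by the subsequent coefficient-learning step, so they do not affect correctness.

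For the coefficient-learning step $\mathcal{A}^{II}$, I would first cite the Hamiltonian-reshaping analysis of \Cref{sec:app_Hamiltonian_reshaping} to bound the diamond-norm error between $r_2$ short-time randomized segments and $e^{-i\mu_s P_s t}$ by $O(M^2 t^2/r_2)$, and set $r_2=\Theta(M^2 t^2/\gamma)$ to make the channel error smaller than the error tolerance $\gamma$ of the robust frequency estimation of \Cref{sec:robust_frequency_estimation_details}. The robust frequency estimator then refines an interval for $\mu_s$ by a factor $2/3$ per round over $\Theta(\log(1/\epsilon))$ rounds; because each round uses only a constant number of shots (independent of $\epsilon$), the per-coefficient experimental time is the geometric sum $\sum_{\ell}\pi/((2/3)^\ell(b_0-a_0))=\widetilde{O}(1/\epsilon)$. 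Doing this for all at most $O(M^2)$ candidate Pauli strings output by $\mathcal{A}^I$ gives $\widetilde{O}(M^2/\epsilon)$ time per level. The cancellation of $\hat H_j$ inside the black-box query is executed by Trotterizing $(e^{-iH\tau/r}e^{i\hat H_j\tau/r})^r$; choosing $r$ polylogarithmically in $M$ and $1/\epsilon$ keeps the Trotter error below the budget without affecting the asymptotic time count.

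Summing over $j$ then gives $T=\sum_{j=0}^{\lceil\log_2(1/\epsilon)\rceil-1}(T_1^{(j)}+T_2^{(j)})$ with $T_1^{(j)}=\widetilde O(2^j M)$ from the rescaled structure-learning evolution and $T_2^{(j)}=\widetilde O(M^2/\epsilon)$ from coefficient learning; the first sum telescopes to $\widetilde O(M/\epsilon)$, so the total is dominated by the coefficient-learning contribution $O(M^2\log(M/\delta)[\log(1/\epsilon)]^2/\epsilon)$, which is the stated bound. Finally, SPAM robustness follows because (i) the Bell-sampling step in $\mathcal{A}^I$ only requires that the correct outcome be observed at least once among $\widetilde O(M^2)$ repetitions, a property preserved under any constant-fidelity SPAM noise, and (ii) the robust frequency estimator of Ref.~\cite{KimmelLowYoder2015robust} tolerates a constant additive bias in each single-shot estimator of $e^{-i2\mu_s t}$, which absorbs state-preparation and measurement errors acting symmetrically on the $|\psi_\pm\rangle$ experiments. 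The main obstacle I anticipate is controlling the cross-talk between the residual-Hamiltonian terms in the Taylor expansion inside $\mathcal{A}^I$---specifically, making the lower bound $p(s\mid\boldsymbol\mu)=\Omega(\mu_m^4/M^2)$ genuinely hold once $\hat H_j$ is only approximately cancelled via Trotterization, and propagating that approximation error all the way to the final union bound without inflating the $M$-dependence beyond $M^2$.
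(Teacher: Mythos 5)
Your proposal follows essentially the same route as the paper's own proof: the same decomposition into Bell-sampling structure learning (Taylor expansion, $\tau=\Theta(1/M)$, probability lower bound $\Omega(1/M^2)$, union bound over $\widetilde{\mathcal{O}}(M^2)$ shots) and reshaping-plus-robust-frequency-estimation coefficient learning, assembled hierarchically with Trotterized cancellation of $\hat H_j$ and rescaling by $2^j$, then summed over levels to get the stated bound. The obstacle you flag at the end---propagating the imperfect cancellation of $\hat H_j$ through the Taylor expansion---is exactly what the paper handles by noting that the rescaled residuals $|\mu_s-\hat\mu_s|/2^{-j}\leq 1/2$ stay bounded, so your plan is consistent with the paper's argument.
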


In addition, we proposed a single-copy Hamiltonian learning algorithm without an ancillary system using only single-qubit operations, which can be summarized as:
\begin{theorem}[Single-copy Heisenberg-limited Hamiltonian learning algorithm]
\label{thm:1-copy_learning_alg}
For an arbitrary $n$-qubit unknown Hamiltonian $H=\sum_{s\in S}\mu_s P_{s}$, with $|\mu_s|\leq 1$ and $|S|\leq M$, there exists a hierarchical learning quantum algorithm which only queries the black box forward evolution of $H$, and a fault-tolerant quantum computer with no ancillary qubits that outputs a classical description $\hat{\mu}_s$ of $\mu_s$ such that $|\hat{\mu}_s-\mu_s|<\epsilon$ with probability at least $1-\delta$, and the total experimental time is 
$$
T = \mathcal{O}\left(\frac{M^3(\log(Mn)\log(1/\epsilon) + \log(M/\delta)[\log(1/\epsilon)]^2)}{\epsilon}\right).
$$
This algorithm requires a total classical post-processing time
$$
    T^C = \mathcal{O}\left(M^5n\log(Mn/\delta)\log(1/\epsilon)\right)
$$
\end{theorem}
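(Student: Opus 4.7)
The plan is to mirror the hierarchical proof of \Cref{thm:2-copy_learning_alg}, substituting the Bell-sampling structure-learning subroutine $\mathcal{A}^{I}$ by its ancilla-free counterpart $\mathcal{A}^{I'}$ described in \Cref{sec:single-copy}, while reusing the coefficient-learning subroutine $\mathcal{A}^{II}$ essentially unchanged. As before, the terms of $H$ are partitioned by magnitude into $J=\lceil\log_2(1/\epsilon)\rceil$ levels $\mathfrak{S}_j$; at the $j$th level the previously learned part $\hat{H}_{j-1}$ is Trotter-cancelled by interleaving $e^{-iHt/r}$ with the engineered counter-evolution $e^{i\hat{H}_{j-1}t/r}$ on the quantum computer, and the residual is rescaled by a factor $2^j$ so that the coefficients in $\mathfrak{S}_j$ appear of order unity.

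At each level I would first invoke $\mathcal{A}^{I'}$ to identify $\mathfrak{S}_j$ without any ancillary qubits. Independent uniform single-qubit Pauli twirls are applied before and after the short-time evolution $e^{-i(H-\hat{H}_{j-1})\tau}$, yielding an effective Pauli channel whose off-identity error rates are, to leading order, $\mu_s^2\tau^2$. The sparse vector of Pauli error rates is then reconstructed using the product-state, single-qubit-measurement population-recovery estimator of Flammia and Wallman~\cite{Flammia2021paulierror}, which returns every rate above a specified threshold with high probability; the extra factor of $M$ together with the $\log n$ and $n$ factors appearing in the theorem statement are inherited directly from the sample and classical-time complexity of this estimator. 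Because the estimator is designed to be robust against gate-independent Pauli-diagonal SPAM errors, the overall protocol retains the restricted-SPAM robustness referenced in the theorem. For each Pauli term returned by $\mathcal{A}^{I'}$, I would then invoke $\mathcal{A}^{II}$ verbatim: Hamiltonian reshaping isolates the term into an effective single-Pauli evolution $e^{-i\mu_sP_st}$, and robust frequency estimation extracts $\mu_s$ to additive accuracy $\epsilon$ using $\widetilde{\mathcal{O}}(1/\epsilon)$ evolution time per term.

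The complexity bookkeeping follows the telescoping argument of \Cref{subsec:total_time_complexity}. Summing the structure-learning cost per level, which carries an extra $M\log n$ factor from population recovery relative to the Bell-sampling bound, together with the coefficient-learning cost, which remains $\widetilde{\mathcal{O}}(M^2/\epsilon)$ per level, over $j=0,1,\dots,J-1$ reproduces $T' = \widetilde{\mathcal{O}}(M^3\log(n)/\epsilon)$, where the $\log(M/\delta)$ factor enters through a union bound over the $\widetilde{\mathcal{O}}(M\log(1/\epsilon))$ subroutine invocations at confidence $1-\delta/(JM)$. The classical post-processing cost is dominated by the $J$ calls to population recovery, each of cost $\widetilde{\mathcal{O}}(M^4 n)$, yielding the stated $T^C = \widetilde{\mathcal{O}}(M^5 n\log(1/\epsilon))$.

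The main obstacle is a coupled three-layer error analysis that I would need to carry out carefully. First, the Trotter splitting $(e^{-iHt/r}e^{i\hat{H}_{j-1}t/r})^r$ must approximate $e^{-i(H-\hat{H}_{j-1})t}$ tightly enough that the spurious Pauli rates it induces lie well below the identification threshold $\mu_m^2\tau^2$, which forces $r=\Theta(M^2/\mu_m^2)$. Second, higher-order Taylor contributions of the form $\prod_i \mu_{s_i}\tau P_{s_i}$ must be shown not to push any Pauli term outside $\mathfrak{S}_j$ across the threshold, mirroring the false-positive argument developed for $\mathcal{A}^{I}$ in \Cref{app:structure_learning}. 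Third, the population-recovery guarantees of \cite{Flammia2021paulierror} require the underlying Pauli-rate vector to be effectively $\mathcal{O}(M)$-sparse with bounded $\ell_\infty$ mass, a condition ensured once the first two bounds hold together with the normalization $|\mu_s|\le 1$. Propagating these three error budgets consistently through the hierarchy is what pins down the precise polynomial constants and the $\log n$ factor in the stated total evolution time.
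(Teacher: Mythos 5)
Your proposal takes essentially the same route as the paper's proof: keep the hierarchical framework of \Cref{thm:2-copy_learning_alg}, swap the Bell-sampling subroutine for Pauli twirling followed by the population-recovery estimator of Ref.~\cite{Flammia2021paulierror}, reuse $\mathcal{A}^{II}$ unchanged, and track the same three error sources (Trotterized cancellation of $\hat{H}_j$, higher-order Taylor contributions, and the recovery threshold) that \Cref{sec:single-copy} handles. The one place your accounting drifts is the overhead of $\mathcal{A}^{I'}$: population recovery must \emph{estimate} every Pauli rate to additive precision $\epsilon_1=\Theta(\gamma_j)=\Theta(1/M^2)$, costing $m=\mathcal{O}\left(\epsilon_1^{-2}\log(n/(\epsilon_1\delta))\right)=\mathcal{O}\left(M^4\log(Mn/\delta)\right)$ input states per level rather than an ``extra $M\log n$ factor'' over the $\mathcal{O}(M^2\log M)$ shots Bell sampling needs to see each heavy outcome once, and its classical post-processing is $\mathcal{O}(mn/\epsilon_1)$ per level, which is where the fifth power of $M$ in $T^C$ originates (your stated per-call cost of $\widetilde{\mathcal{O}}(M^4n)$ would only give $\widetilde{\mathcal{O}}(M^4n\log(1/\epsilon))$ in total); your final bounds nonetheless coincide with the theorem.
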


The main subroutine of our learning algorithms is defined as follows:
\begin{defn}[Hierarchical learning subroutine]
    Given 1. an unknown Hamiltonian $H=\sum_{s\in S} \mu_sP_s$, with $|\mu_s|\leq 1$ and $|S|\leq M$, and 2.the classical description of $\hat{H}_j$ (its Pauli operator terms and coefficients), an estimation of all coefficients larger than $2^{-j}$ up to $\epsilon$ term-wise error. There exists a hierarchical learning subroutine $\mathcal{A}_j$ to obtain $\hat{H}_{j+1}$ which is an estimation of all coefficients larger than $2^{-(j+1)}$ up to $\epsilon$ term-wise error, for $j = 0,1,2,\ldots,\lceil\log_2(1/\epsilon)\rceil-1$. The final estimation $\hat{H}_{\lceil\log_2(1/\epsilon)\rceil-1}$ will be a characterization of $H$ to $\epsilon$ term-wise error.
\end{defn}

\textbf{The main ideas: }Based on the definitions provided in the previous section, we partition all unknown coefficients into the following sets: $S=\mathfrak{S}_{0}\cup \mathfrak{S}_{1}\cup\dots\cup \mathfrak{S}_{\lceil\log_2(1/\epsilon)\rceil-1}$, where $\mathfrak{S}_j = \{s: s\in S, 2^{-(j+1)} < |\mu_s|\leq 2^{-j}\}$. At each step $j$, the hierarchical learning subroutine $\mathcal{A}_j$ first identifies the index $s \in \mathfrak{S}_j$ with high probability using the Hamiltonian structure learning algorithm  $\mathcal{A}_j^{I}$. With this information, the corresponding coefficients can then be learned using the Hamiltonian coefficient learning algorithm  $\mathcal{A}_j^{II}$. Below, we provide an overview of the key components of both algorithms, with a detailed analysis deferred to the subsequent sections.

In the Hamiltonian structure learning $\mathcal{A}_{j}^{I}$, we first create the EPR state between unknown system with $N$-qubit ancillary system using transversal gates. Then we perform the time evolution of the unknown system under $\tilde{H}_j = (H-\hat{H}_j)/2^{-j}$ for time $t$ by querying unknown system and a fault-tolerant quantum computer iteratively. Lastly, we perform the Bell-basis measurement on the $2N$-qubit system. We prove that if $t=\mathcal{O}\left(\frac{1}{CM}\right)$ where $C$ is a big constant to supress the error, then all elements $s=\mathfrak{S}_j$ will be sampled once with high probability. There are several approximation errors involved in this step: 1. Trotterization error ($\varepsilon_{\mathrm{Trotter}}$) for simulating time-evolution under $\tilde{H}_j$, 2. truncation error ($\varepsilon_{\mathrm{Taylor}}$) from the Taylor expansion of time $t=\mathcal{O}\left(\frac{1}{CM}\right)$ evolution, and SPAM errors ($\varepsilon_{\mathrm{SPAM}}$). In \hyperref[lem:lower_bound_for_prob_dist]{Lemma 3}, we prove that for elements in $\mathfrak{S}_j$, their probability is lower bounded by 
$\widetilde{\Omega}\left(\frac{1}{4C^2M^2}-\frac{1}{C^3M^2}-\frac{2^{2j}}{C^2r}-\varepsilon_{\text{SPAM}}\right)$, where $r$ is the Trotter steps. Then one can sample all elements in $\mathfrak{S}_j$ at least once with probability $1-\delta$ by querying this algorithm $\mathcal{O}(C^2M^2\log(M/\delta))$ times. Then the total evolution time of all $\mathcal{A}^{I}_j$ is $\mathcal{O}\left(\frac{M \log(M/\delta)}{\epsilon}\right)$, which reaches the Heisenberg-limited scaling.

Then we use the Hamiltonian coefficient learning $\mathcal{A}^{II}_j$ to learn the coefficients $\mu_s$ for $s$ in the set $\widetilde{\mathfrak{S}_{j}}$, which we identified in the previous step. There are at most $|\widetilde{\mathfrak{S}_{j}}|\equiv L_j=\mathcal{O}(M^2\log(M/\delta))$ terms sampled in the previous step. We first use the Hamiltonian reshaping technique as introduced in \Cref{sec:app_Hamiltonian_reshaping} to approximate the time-evolution under a single term $P_s$ for $s\in \widetilde{\mathfrak{S}_{j}}$. We show the approximation error is bounded by $\mathcal{O}(\frac{M^2t^2}{r_2})$ for total evolution time $t$ with $r_2$ Trotter steps. Then we adapt the robust frequency estimation (RFE) as introduced in \Cref{sec:robust_frequency_estimation_details} to estimate $\mu_s$ within $\epsilon$ accuracy. In Theorem \ref{thm:robust_frequency_estimation}, we show the total evolution time in each RFE procedure is $\mathcal{O}(\frac{\log(1/\epsilon)+\log \log(2^{-j}/\epsilon)}{\epsilon})$. As there are at most $L_j$ terms, the total evolution time for $\mathcal{A}_j^{II}$ is $\mathcal{O}(\frac{M^2\log(M/\delta)(\log(1/\epsilon)+\log \log(2^{-j}/\epsilon))}{\epsilon})$, which also reaches the Heisenberg-limited scaling. Moreover, since we estimate each coefficient at a time, the effective Hamiltonian $H_{eff}$ after reshaping is a single term Hamiltonian, we only need product state input in RFE instead of highly entangled Bell state input as in \cite{ma2024learningkbodyhamiltonianscompressed}.

In Section \ref{sec:single-copy}, we investigate whether quantum entanglement of the 2-copy in structure learning is necessary. Surprisingly, we found a novel structure learning algorithm that only uses product state input without any ancillary qubits that also achieves the Heisenberg limit when used in the Hierarchical learning subroutine.

In Section \ref{app:lower}, we provide rigorous proofs for the lower bound of the general Hamiltonian learning algorithm in \Cref{thm:lower}, where we utilized the learning tree representation and applied Le Cam's two-point method for Hamiltonian distinguishing problem. The lower bound implies that for a Hamiltonian learning algorithm, there is a trade-off between the total evolution time complexity and the number of quantum controls.

\section{Hamiltonian structure learning \texorpdfstring{$\mathcal{A}^I_j$}{A I j}\label{app:structure_learning}}
We here provide details for Hamiltonian structure learning $\mathcal{A}^I_j$. In $\mathcal{A}^I_j$, we first use Trotterization to approximate the time evolution channel under $\tilde{H}_j = (H-\hat{H}_j)/2^{-j}$. Then, we evolve an $N$-qubit EPR state into this channel and perform a Bell-basis measurement at the end. By analyzing the Taylor expansion of the time evolution operator, and analyzing how considering the probability distribution of Bell-basis measurement will deviate if only considering the first-order truncation. Taking the Trotterization error $\varepsilon_{\text{Trotter}}$, first-order truncation error $\varepsilon_{\text{Taylor}}$, error caused by the imperfections of $\hat{H}_j$, and the state-preparation and measurement (SPAM) error $\varepsilon_{\text{SPAM}}$ into consideration, we show that $\mathcal{A}^I_j$ can identify all the terms in $H$ that is larger than $2^{-(j+1)}$ with high probability.

\subsection{Hamiltonian simulation with Trotterization\label{app:Trotter}}
\label{sec:Hamiltonian_simulation_with_Trotterization}
By the Lie product formula, the time evolution operator $e^{-i\tilde{H}_jt}$ can be approximated by interleaving the time evolution operator $e^{-iHt/(2^{-j}r)}$ and $e^{i\hat{H}_jt/(2^{-j}r)}$ in the asymptotic limit
\begin{equation}
     e^{-i\tilde{H}_jt} = e^{-i(H-\hat{H}_j)t/2^{-j}} = \lim_{r\rightarrow \infty} \left(e^{-iHt/(2^{-j}r)}e^{i\hat{H}_jt/(2^{-j}r)}\right)^r,
\end{equation}
where $e^{i\hat{H}_jt/(2^{-j}r)}$ is realizable through Hamiltonian simulation since the classical characterization of $\hat{H}_j$ is known.

We consider $r$ is a finite integer, then by the Trotter-Suzuki formula \cite{Suzuki1, SuzukiConvergence, PRXQuantum.5.020330, PhysRevX.11.011020}, we have
\begin{equation}
\label{eq:trotter_evolution}
    e^{-i\tilde{H}_jt} = e^{-i(H-\hat{H}_j)t/2^{-j}} =  \left(e^{-iHt/(2^{-j}r)}e^{i\hat{H}_jt/(2^{-j}r)}\right)^r + \mathcal{O}\left([H, \hat{H}_j]\frac{(t/2^{-j})^2}{r}\right).
\end{equation}
To characterize the accuracy of approximating the time evolution channel, denote the time evolution channel under the Hamiltonian $\tilde{H}_j$ as
\begin{equation}
    \mathcal{U}_{\tilde{H}_j,t}(\rho) = e^{-i\tilde{H}_jt}\rho e^{i\tilde{H}_jt},
\end{equation}
and the first-order Trotter-Suzuki time-evolution unitary a Hamiltonian $H - \hat{H}_j$ as
\begin{equation}
    U^{(1)}_{\text{TS},\tilde{H}_j}(t) = U^{(1)}_{\text{TS},H - \hat{H}_j}(t/2^{-j}) = e^{-iHt/2^{-j}}e^{i\hat{H}_jt/2^{-j}}
\end{equation}
with the first-order Trotter-Suzuki time-evolution channel be
\begin{equation}
    \mathcal{U}^{(1)}_{\text{TS},H - \hat{H}_j,t/2^{-j}}(\rho) = U^{(1)}_{\text{TS},H - \hat{H}_j}(t/2^{-j})\rho U^{(1),\dagger}_{\text{TS},H - \hat{H}_j}(t/2^{-j}).
\end{equation}
By \eqref{eq:trotter_evolution}, we can bound the first-order approximation error $\varepsilon^{(1)}_{\text{Trotter}}$,
\begin{equation}
    \varepsilon^{(1)}_{\text{Trotter}} = \|\mathcal{U}_{\tilde{H}_j,t}-(\mathcal{U}^{(1)}_{\text{TS},H - \hat{H}_j,t/(2^{-j}r)})^r\|_{\diamond} = \mathcal{O}\left([\tilde{H}_j, H - \hat{H}_j]\frac{(t/2^{-j})^2}{r}\right),
\end{equation}
where $\|\cdot\|_{\diamond}$ is the diamond norm. Note that if both $H - \hat{H}_j$ and $\tilde{H}_j$ has at most $M$ Pauli terms with coefficients bounded by $1$, then $\varepsilon_{\text{Trotter}}$ can be bounded by
\begin{equation}
\label{eq:eps_trotter}
    \varepsilon_{\text{Trotter}} = \mathcal{O}\left(\frac{M^2(t/2^{-j})^2}{r}\right).
\end{equation}

\subsection{Taylor expansion of the time evolution operator\label{app:Taylor}}

Consider the Taylor expansion of the time evolution operator under $\tilde{H}_j$:
\begin{equation}
\label{eq:Taylor_expansion}
    e^{-i\tilde{H}_jt} = e^{-i(H-\hat{H}_j)t/2^{-j}} = I - it\frac{(H-\hat{H}_j)}{2^{-j}} - \frac{t^2}{2}\left(\frac{(H-\hat{H}_j)}{2^{-j}}\right)^2 + \mathcal{O}(t^3).
\end{equation}
Recall that $H = \sum_s\mu_sP_s$ and we only care about the coefficients such that $|\mu_s|\geq\epsilon$, thus we can express $H-\hat{H}_j$ as a rescaled Hamiltonian
\begin{equation}
\label{eq:tilde_H_j}
    \tilde{H}_j = \frac{H-\hat{H}_j}{2^{-j}} = \sum_{s\in S_j}\frac{\mu_s}{2^{-j}}P_s + \sum_{s\in S_0\setminus S_j}\frac{\mu_s-\hat{\mu}_s}{2^{-j}}P_s,
\end{equation}
where $S_j$ is defined in \eqref{eq:S_j}, and $\hat{\mu}_s$ are the estimated coefficients in $\hat{H}_j$. Note that the total number of Pauli terms in $\tilde{H}_j$ would still be bounded by $M$ and by definition $|\mu_s-\hat{\mu}_s|\leq\epsilon$, indicating that all rescaled coefficients $\frac{\mu_s-\hat{\mu}_s}{2^{-j}}$ in the second terms on the right-hand side of \Cref{eq:tilde_H_j} with $s\in S_0\setminus S_j$ will be upper bounded by $1/2$ since $\hat{\mu}_s$ is at most $\epsilon$ away from the terms in $H_j$ with $s\in S_0\setminus S_j$ in the $\ell^\infty$-norm, and $2^j\geq 1/2\cdot\epsilon$. This means that the error caused by inaccurate estimation of the terms in $\hat{H}_j$ will still be bounded even after rescaling. With this, we can rewrite \eqref{eq:Taylor_expansion} as
\begin{equation}
\begin{split}
\label{eq:Taylor_expansion_terms}
    e^{-i\tilde{H}_jt} = &I - it\left(\sum_{s\in S_j}\frac{\mu_s}{2^{-j}}P_s + \sum_{s\in S_0\setminus S_j}\frac{\mu_s-\hat{\mu}_s}{2^{-j}}P_s\right) - \frac{t^2}{2}\Bigg(\sum_{s,s'\in S_j}\frac{\mu_s\mu_{s'}}{2^{-2j}}P_sP_{s'} + \sum_{\substack{s\in S_j\\s'\in S_0\setminus S_j}}\frac{\mu_s(\mu_{s'}-\hat{\mu}_{s'})}{2^{-2j}}P_sP_{s'}\\
    &+ \sum_{\substack{s'\in S_j\\s\in S_0\setminus S_j}}\frac{\mu_{s'}(\mu_s-\hat{\mu}_s)}{2^{-2j}}P_sP_{s'} + \sum_{s,s'\in S_0\setminus S_j}\frac{(\mu_s-\hat{\mu}_s)(\mu_{s'}-\hat{\mu}_{s'})}{2^{-2j}}P_sP_{s'}\Bigg) + o(t^3).
\end{split}
\end{equation}

\subsection{Bell sampling}
\label{sec:Bell_sampling}
Consider performing Bell sampling to the time evolution under $e^{-i\tilde{H}_jt}$, with a $n$-qubit ancillary system. Input the $2n$-qubit EPR state into the $n$-qubit time evolution channel tensor a $n$-qubit identity channel, by \eqref{eq:Taylor_expansion_terms}, the output state is:
\begin{equation}
\begin{split}
    \ket{\psi_{\text{out}}} =& \left(e^{-i\tilde{H}_jt}\otimes I_n\right)\ket{\EPR_n}\\
    =& \left(\left(I_n - it\tilde{H}_j - \frac{t^2}{2}\tilde{H}_j^{2} + o(t^3) \right)\otimes I_n\right)\ket{\EPR_n}\\
    =& \left(\left(I_n - it\left(\sum_{s\in S_j}\frac{\mu_s}{2^{-j}}P_s + \sum_{s\in S_0\setminus S_j}\frac{\mu_s-\hat{\mu}_s}{2^{-j}}P_s\right) - \frac{t^2}{2}\tilde{H}_j^{(2)} + o(t^3) \right)\otimes I_n\right)\ket{\EPR_n}\\
    =& \ket{\EPR_n} - it\left(\sum_{s\in S_j}\frac{\mu_s}{2^{-j}}\bigotimes_{j=1}^n \left((\sigma_{s,j}\otimes I)\ket{\EPR}\right) + \sum_{s\in S_0\setminus S_j}\frac{\mu_s-\hat{\mu}_s}{2^{-j}}\bigotimes_{j=1}^n \left((\sigma_{s,j}\otimes I)\ket{\EPR}\right)\right)\\
    &+ \left(\left(- \frac{t^2}{2}\tilde{H}_j^{2} + o(t^3)\right)\otimes I_n\right)\ket{\EPR_n}
\end{split}
\end{equation}
where
\begin{equation}
    \tilde{H}_j = \left(\sum_{s\in S_j}\frac{\mu_s}{2^{-j}}P_s + \sum_{s\in S_0\setminus S_j}\frac{\mu_s-\hat{\mu}_s}{2^{-j}}P_s\right)
\end{equation}
and
\begin{equation}
\begin{aligned}
     \tilde{H}_j^{2} &= \Bigg(\sum_{s,s'\in S_j}\frac{\mu_s\mu_{s'}}{2^{-2j}}P_sP_{s'} + \sum_{\substack{s\in S_j\\s'\in S_0\setminus S_j}}\frac{\mu_s(\mu_{s'}-\hat{\mu}_{s'})}{2^{-2j}}P_sP_{s'} \\
     &+ \sum_{\substack{s'\in S_j\\s\in S_0\setminus S_j}}\frac{\mu_{s'}(\mu_s-\hat{\mu}_s)}{2^{-2j}}P_sP_{s'} + \sum_{s,s'\in S_0\setminus S_j}\frac{(\mu_s-\hat{\mu}_s)(\mu_{s'}-\hat{\mu}_{s'})}{2^{-2j}}P_sP_{s'}\Bigg)
\end{aligned}
\end{equation}
are the second-order Pauli terms.
Consider measuring $\ket{\psi_{\text{out}}}$ in the Bell basis states, and denote the probability distributions of the measurement outcomes as $\mathcal{D}_{\text{Bell}}$. 

Notice that $\mathcal{D}_{\text{Bell}}$ is a discrete probability distribution supporting on the $4^n$ indices set of $\mathbb{P}_n$, each corresponding to a $n$-qubit Pauli operator. The goal is to sample all indices in $\mathfrak{S}_j$ from $\mathcal{D}_{\text{Bell}}$ efficiently. As we will show in later sections, the probability of elements in $\mathfrak{S}_j$ will be lower bounded by an inverse polynomial of $M$.

Intuitively, if one only considers the first order terms in $\ket{\psi_{\text{out}}}$, the probability of those elements corresponding to a term $P_s$ in $H$ is approximately $t^2\mu_s^2/2^{-2j}$, and the probability of those elements correspond to a traceless Pauli operator $P$ not in $H$ is zero. However, one needs to take the higher order terms into consideration, and the probability distribution $\mathcal{D}_{\text{Bell}}$ is not entirely supported on the $M$ elements correspond to the $M$ Pauli terms in $H$.

Fortunately, our protocol does not require estimating this probability distribution to high accuracy, it works well as long as the probability of those terms with indices in $\mathfrak{S}_j$ are at least inverse polynomials of $M$. Consider the second order terms, the absolute values of all coefficients in $\tilde{H}_j^{2}$ are upper bounded by $1$. For each term $P_s$ with $s\in\mathfrak{S}_j$, there are at most $M/2$ pairs of $\{s_1,s_2\}$ such that $P_{s_1}P_{s_2}=e^{i\theta}P_s$ in the second-order terms, where $\theta$ is a phase. Similarly, for the $l$-th order terms there will be at most $\mathcal{O}(M^{l-1})$ $l$-element sets $\{s_a,s_2,\ldots,s_l\}$ such that $\prod_{k=1}^l P_{s_k}=e^{i\theta}P_s$, where $\theta$ is a phase. Thus, for an index $s\in\mathfrak{S}_j$, the probability of sampling this element is lower bounded by
\begin{equation}
    \Omega\left(\left(\frac{t}{2} - \sum_{l=2}^\infty\left(M^{l-1}t^l\right)\right)^2\right),
\end{equation}
where $\frac{1}{2}$ is the lower bound of $\mu_s/2^{-j}$ for $s\in\mathfrak{S}_j$, and for higher-ordered terms we use the upper bound $1$ for all coefficients. 
Note that for the terms $s\in S_0\setminus S_j$, their coefficients are the remanent $|\mu_s-\hat{\mu}_s|$ caused by the inaccurate estimations in previous steps will still be upper bounded by $\epsilon/2^{-{j}}\leq 1/2$ and do not contribute more than the other terms.
If 
\begin{equation}
\label{eq:upper_bound_for_t}
    t < \frac{1}{CM},
\end{equation}
where $C$ is a large constant, the probability will be lower bounded by
\begin{equation}
\begin{split}
    &\Omega\left(\left(\frac{1}{2CM}-\sum_{l=2}^\infty\frac{1}{C^lM}\right)^2\right)\\
    =&\Omega\left(\left(\frac{1}{2CM}-\frac{1}{C^2M}\right)^2\right)\\
    =&\Omega\left(\frac{1}{4C^2M^2}-\frac{1}{C^3M^2} + \frac{1}{C^4M^2}\right)\\
    =&\Omega\left(\frac{1}{4C^2M^2}-\frac{1}{C^3M^2}\right).
\end{split}
\end{equation}
Thus, of the measurement outcome will deviate from $\frac{1}{4C^2M^2}$ by at most:
\begin{equation}
\label{eq:eps_Taylor}
    \varepsilon_{\text{Taylor}} = \mathcal{O}\left(\frac{1}{C^3M^2}\right).
\end{equation}

\subsection{SPAM error}
\label{sec:SPAM_error}
We define the state preparation and measurement (SPAM) error in the following way,
\begin{defn}
    \label{defn:SPAM_err}
    Take the noise in the preparation of the initial state as an error channel $\mathcal{E}_{\mathrm{prep}}$ applied after the ideal state preparation channel, and the noise of measurement as an error channel $\mathcal{E}_{\mathrm{meas}}$  applied before the ideal measurement channel.
    We assume that
    \[
    \|\mathcal{E}_{\mathrm{prep}}-\mathcal{I}\|_{\diamond}+\|\mathcal{E}_{\mathrm{meas}}-\mathcal{I}\|_{\diamond}\leq \varepsilon_{\mathrm{SPAM}},
    \]
    where $\varepsilon_{\mathrm{SPAM}}$ is the bound of the SPAM error.
\end{defn}
Therefore, consider a ideal quantum channel $\mathcal{E}$, the channel with SPAM error can be written as:
\begin{equation}
    \widetilde{\mathcal{E}}= \mathcal{E}_{\mathrm{meas}}\circ \mathcal{E}\circ\mathcal{E}_{\text{prep}},
\end{equation}
and 
\begin{equation}
\begin{split}
    &\|\mathcal{E} - \widetilde{\mathcal{E}}\|_{\diamond}\\
    = & \|\mathcal{I}\circ\mathcal{E}\circ\mathcal{I} - \mathcal{E}_{\text{meas}}\circ\mathcal{E}\circ\mathcal{E}_{\text{prep}}\|_{\diamond}\\
    = & \|\mathcal{I}\circ\mathcal{E}\circ\mathcal{I} - \mathcal{E}_{\text{meas}}\circ\mathcal{E}\circ\mathcal{I} + \mathcal{E}_{\text{meas}}\circ\mathcal{E}\circ\mathcal{I}- \mathcal{E}_{\text{meas}}\circ\mathcal{E}\circ\mathcal{E}_{\text{prep}}\|_{\diamond}\\
    = & \|\left(\mathcal{I}-\mathcal{E}_{\text{meas}}\right)\circ\mathcal{E}\circ\mathcal{I} + \mathcal{E}_{\text{meas}}\circ\mathcal{E}\circ\left(\mathcal{I}-\mathcal{E}_{\text{prep}}\right)\|_{\diamond}\\
    \leq & \|\left(\mathcal{I}-\mathcal{E}_{\text{meas}}\right)\circ\mathcal{E}\circ\mathcal{I}\|_{\diamond} + \|\mathcal{E}_{\text{meas}}\circ\mathcal{E}\circ\left(\mathcal{I}-\mathcal{E}_{\text{prep}}\right)\|_{\diamond}\\
    = & \|\left(\mathcal{I}-\mathcal{E}_{\text{meas}}\right)\circ\mathcal{E}\circ\mathcal{I}\|_{\diamond} + \|\left(\mathcal{E}_{\text{meas}}-\mathcal{I}\right)\circ\mathcal{E}\circ\left(\mathcal{I}-\mathcal{E}_{\text{prep}}\right) + \mathcal{I}\circ\mathcal{E}\circ\left(\mathcal{I}-\mathcal{E}_{\text{prep}}\right)\|_{\diamond}\\
    \leq & \|\left(\mathcal{I}-\mathcal{E}_{\text{meas}}\right)\circ\mathcal{E}\circ\mathcal{I}\|_{\diamond} + \|\mathcal{I}\circ\mathcal{E}\circ\left(\mathcal{I}-\mathcal{E}_{\text{prep}}\right)\|_{\diamond} + \|\left(\mathcal{E}_{\text{meas}}-\mathcal{I}\right)\circ\mathcal{E}\circ\left(\mathcal{I}-\mathcal{E}_{\text{prep}}\right)\|_{\diamond}\\
    \leq & \|\mathcal{I}-\mathcal{E}_{\text{meas}}\|_{\diamond}\cdot\|\mathcal{E}\|_{\diamond} + \|\mathcal{E}\|_{\diamond}\cdot\|\mathcal{I}-\mathcal{E}_{\text{prep}}\|_{\diamond} + \|\mathcal{E}_{\text{meas}}-\mathcal{I}\|_{\diamond}\cdot\|\mathcal{E}\|_{\diamond}\cdot\|\mathcal{I}-\mathcal{E}_{\text{prep}}\|_{\diamond}\\
    = & \left(\|\mathcal{I}-\mathcal{E}_{\text{meas}}\|_{\diamond} + \|\mathcal{I}-\mathcal{E}_{\text{prep}}\|_{\diamond} + \|\mathcal{I}-\mathcal{E}_{\text{meas}}\|_{\diamond}\cdot\|\mathcal{I}-\mathcal{E}_{\text{prep}}\|_{\diamond}\right)\cdot \|\mathcal{E}\|_{\diamond}\\
    \leq & \left(\varepsilon_{\text{SPAM}} + \frac{\varepsilon^2_{\text{SPAM}}}{4}\right)\|\mathcal{E}\|_{\diamond}\\
    \leq &\varepsilon_{\text{SPAM}} + \frac{\varepsilon^2_{\text{SPAM}}}{4}.
\end{split}
\end{equation}

\subsection{Lower bound for the probability of elements in \texorpdfstring{$\mathfrak{S}_j$}{S j} \label{sec:lower_bound_prob}}
In our protocol, we do not have direct access to the time evolution channel under $\tilde{H}_j$. Instead, we use the Trotterization method in \Cref{sec:Hamiltonian_simulation_with_Trotterization}. Moreover, SPAM errors in the experiments need to be considered. In analogy with the probability distribution $\widetilde{\mathcal{D}}_{\text{Bell}}$ as the probability distribution of Bell sampling in \Cref{sec:Bell_sampling}, define the probability distribution of doing Bell basis measurement on the output state of input $\ket{\EPR_n}$ to the Trotterization channel for simulation time evolution $e^{-i\tilde{H}_jt}$ as described in \Cref{sec:Hamiltonian_simulation_with_Trotterization} and $t<\frac{1}{CM}$ as in \eqref{eq:upper_bound_for_t}. Consider the probability of the elements in $\mathfrak{S}_j$, we provide the following lemma for a lower bound.

\begin{lemma}
\label{lem:lower_bound_for_prob_dist}
    For the probability distribution $\mathcal{D}_{\text{Bell}}$ defined on the $4^n$ indices set of $\mathbb{P}_n$, the probability of elements in $\mathfrak{S}_j$ is lower bounded by
    \begin{equation}
        \gamma_j = \widetilde{\Omega}\left(\frac{1}{4C^2M^2}-\frac{1}{C^3M^2}-\frac{2^{2j}}{C^2r}-\varepsilon_{\text{SPAM}}\right),
    \end{equation}
    where $C$ is a large constant as in \eqref{eq:upper_bound_for_t}, $M$ is the number of terms in $H$, $r$ is the number of steps in Trotterization,  $\varepsilon_{\text{SPAM}}$ is the SPAM error, and $\widetilde{\Omega}(f) = \Omega(f)-\mathcal{O}(\varepsilon_{\text{SPAM}}^2)$ only keep the leading order for the simplicity of the expression.
\end{lemma}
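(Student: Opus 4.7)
The plan is to build up the bound $\gamma_j$ by accounting, one source at a time, for the three discrepancies between the idealized Bell-sampling distribution sketched in Sec.~B.3 and the distribution actually produced by the experiment: (i) the truncation of the Taylor expansion of $e^{-i\tilde{H}_j t}$, (ii) the Trotterization error of Sec.~B.1, and (iii) the SPAM error of Sec.~B.4. I will express each as a perturbation of the ideal output distribution and then use a reverse-triangle inequality to pass back to a probability lower bound.

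\textbf{Step 1: Ideal lower bound from the Taylor analysis.} Denote by $\mathcal{D}_{\text{Bell}}^{\text{id}}$ the Bell-basis outcome distribution obtained from the exact unitary channel $\mathcal{U}_{\tilde{H}_j,t}$ acting on $\ket{\mathrm{EPR}_n}$. From Sec.~B.3, the amplitude of Bell outcome $s\in\mathfrak{S}_j$ is $-it(\mu_s/2^{-j})$ at first order, and the combined contribution of all higher orders is controlled termwise by the bound $\sum_{l\geq 2} M^{l-1} t^l$ derived from the combinatorics of Pauli products (each $l$-fold product hitting a given $P_s$ comes in $\mathcal{O}(M^{l-1})$ ways, with all coefficients bounded by $1$; crucially, the rescaled residuals $(\mu_s-\hat\mu_s)/2^{-j}$ for $s\in S_0\setminus S_j$ are $\leq 1/2$, so they do not disturb this accounting). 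Under the choice $t<1/(CM)$, the net amplitude on $P_s$ is at least $1/(2CM)-1/(C^2 M)$, so after squaring
\begin{equation}
\mathcal{D}_{\text{Bell}}^{\text{id}}(s)\;\geq\;\Omega\!\left(\tfrac{1}{4C^{2}M^{2}}-\tfrac{1}{C^{3}M^{2}}\right).
\end{equation}

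\textbf{Step 2: Trotter perturbation.} The actual state before measurement is produced by $r$ Trotter steps of $\bigl(e^{-iHt/(2^{-j}r)}e^{i\hat{H}_j t/(2^{-j}r)}\bigr)^r$. By \eqref{eq:eps_trotter}, the diamond-norm distance from $\mathcal{U}_{\tilde{H}_j,t}$ is $\mathcal{O}(M^{2}(t/2^{-j})^{2}/r)$, which under $t<1/(CM)$ reduces to $\mathcal{O}(2^{2j}/(C^{2}r))$. Since the Bell-basis measurement is a fixed POVM, the total variation distance between the ideal Bell distribution and its Trotterized counterpart is bounded by this diamond norm. Adding the $n$-qubit identity channel on the ancilla does not change the diamond-norm bound.

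\textbf{Step 3: SPAM perturbation and final assembly.} From Sec.~B.4, composing with the SPAM channels inflates the diamond-norm distance to the target channel by at most $\varepsilon_{\text{SPAM}}+\varepsilon_{\text{SPAM}}^{2}/4$, and again the induced change in the measurement distribution is at most this quantity in total variation. Let $\mathcal{D}_{\text{Bell}}$ denote the actually-observed distribution. A reverse triangle inequality gives, for every $s\in\mathfrak{S}_j$,
\begin{equation}
\mathcal{D}_{\text{Bell}}(s)\;\geq\;\mathcal{D}_{\text{Bell}}^{\text{id}}(s)\;-\;\mathcal{O}\!\bigl(2^{2j}/(C^{2}r)\bigr)\;-\;\varepsilon_{\text{SPAM}}\;-\;\mathcal{O}(\varepsilon_{\text{SPAM}}^{2}),
\end{equation}
which upon substituting Step 1 yields exactly $\gamma_j$ after the $\widetilde{\Omega}$ convention absorbs the quadratic SPAM term.

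\textbf{Main obstacle.} The genuinely delicate point is Step 1, i.e.\ checking that the previously-learned terms packed into $\hat{H}_j$ (with up to $\epsilon$ error each) do not spoil the first-order dominance of $P_s$ after rescaling by $2^{-j}$. The argument hinges on the observation $\epsilon/2^{-j}\leq 1/2$ for the levels $j$ considered, so the residuals behave like bounded coefficients and can be absorbed into the same higher-order combinatorial bound $\sum_{l\geq 2} M^{l-1} t^l$. The rest (Steps 2 and 3) is essentially bookkeeping with the diamond norm and the data-processing bound $\|p-q\|_{\mathrm{TV}}\leq \tfrac{1}{2}\|\Lambda-\Lambda'\|_{\diamond}$ for a fixed measurement.
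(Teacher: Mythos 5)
Your proposal is correct and follows essentially the same route as the paper's proof: establish the $\Omega\!\left(\tfrac{1}{4C^2M^2}-\tfrac{1}{C^3M^2}\right)$ bound from the first-order Taylor amplitude together with the $\sum_{l\geq 2}M^{l-1}t^l$ higher-order combinatorial bound, then subtract the Trotter and SPAM deviations via the diamond-norm-to-total-variation data-processing argument. The only cosmetic difference is that the paper defines a purely first-order target probability $\tilde{\gamma}_j$ and folds the Taylor truncation error into the deviation term, whereas you absorb it into the ideal amplitude before squaring; the resulting bound is identical.
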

\begin{proof}
    We first set a target probability of any term with index $s\in\mathfrak{S}_j$ as
    \begin{equation}
        \tilde{\gamma}_j = \frac{\left(\min\{\mu_s:s\in\mathfrak{S}_j\}\right)^2\cdot(1/CM)^2}{2^{-2j}}.
    \end{equation}
By definition, $\mu_s/2^{-j}\geq 1/2$ for all $s\in\mathfrak{S}_j$, 
\begin{equation}
    \tilde{\gamma}_j = \Omega\left(\frac{1}{4C^2M^2}\right).
\end{equation}
Now we analyze how $\gamma_j$ deviate from $\tilde{\gamma}_j$. There are three sources that can contribute to the deviation, which are the Trotterization error as analyzed in \Cref{sec:Hamiltonian_simulation_with_Trotterization}, error originated from Bell sampling as in \Cref{sec:Bell_sampling}, and the SPAM error as in \Cref{sec:SPAM_error}. Notice that the Tortterization error $\varepsilon_{\text{Trotter}}$ in \eqref{eq:eps_trotter}, and the SPAM error $\varepsilon_{\text{SPAM}}$ are defined for the diamond distance between the ideal and actual quantum channels. By definition, the diamond distance directly implies an upper bound as the total variational distance between the probability distributions of the measurement outcome of the output state between the ideal and actual quantum channels with any input state, which is again by definition an upper bound of the difference between the probability on any element. Thus, the deviation caused by Trotterization and SPAM error is upper bounded by
\[
\mathcal{O}\left(\varepsilon_{\text{Trotter}}+\varepsilon_{\text{SPAM}}+\frac{\varepsilon_{\text{SPAM}^2}}{4}\right).
\]

Taking the error in Bell sampling $\varepsilon_{\text{Taylor}}$ as in \eqref{eq:eps_Taylor} into consideration, we can upper bound the difference
\begin{equation}
    |\tilde{\gamma}_j-\gamma_j| = \mathcal{O}\left(\frac{1}{C^3M^2}+\frac{2^{2j}}{C^2r}+\varepsilon_{\text{SPAM}}+\frac{\varepsilon_{\text{SPAM}}^2}{4}\right).
\end{equation}
Thus, we have
\begin{equation}
    \begin{split}
        \gamma_j &\geq \tilde{\gamma}_j - \mathcal{O}\left(\frac{1}{C^3M^2}+\frac{2^{2j}}{C^2r}+\varepsilon_{\text{SPAM}}+\frac{\varepsilon_{\text{SPAM}}^2}{4}\right)\\
        &= \widetilde{\Omega}\left(\frac{1}{4C^2M^2}-\frac{1}{C^3M^2}-\frac{2^{2j}}{C^2r}-\varepsilon_{\text{SPAM}}\right).
    \end{split}
\end{equation}
\end{proof}

Note that in the worst case $j=\left\lceil\log_2(1/\epsilon)\right\rceil$, and $2^{2j}\approx 1/\epsilon^2$, one need to choose $r = \mathcal{O}(1/\epsilon^2)$ to make the lower bound $\gamma_j$ independent of $\epsilon$.

\subsection{Determine the high-probability elements of a sparse probability distribution\label{sec:high_prob_detection}}

\begin{lemma}
\label{lem:sample_complexity_from_dist_with_gap}
    Consider a probability distribution $\mathcal{D}$ on a discrete space $\mathfrak{S}$ of size $N$, let the support of $\text{supp}(\mathcal{D})$ be the set of elements in $\mathfrak{S}$ on which $\mathcal{D}$ has non-zero probability, and let $M$ be the \emph{sparsity} of $\mathcal{D}$, i.e. $M = |\text{supp}(\mathcal{D})|\leq N$. Further define the set of high-probability elements $\mathfrak{S}_{\gamma}\subseteq \mathfrak{S}$ as:
\begin{equation}
    \text{supp}_\gamma(\mathcal{D}) = \{e\in\text{supp}(\mathcal{D}):\Pr(e)\geq\gamma\}.
\end{equation}
Note that $|\text{supp}_\gamma(\mathcal{D})|\leq |\text{supp}(\mathcal{D})| = M$. Then with $\mathrm{L} = \mathcal{O}\left(\frac{\log(M/\delta_{\gamma})}{\gamma}\right)$ samples, except for $1-\delta_{\gamma}$ probability, each element in $\text{supp}_\gamma(\mathcal{D})$ is sampled at least once.
\end{lemma}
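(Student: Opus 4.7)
The plan is to prove this by a straightforward union bound combined with a standard tail estimate for Bernoulli trials. The lemma is essentially the standard ``coupon collector'' style bound adapted to a sparse distribution where we only need to cover the high-probability atoms, not every atom in the support.

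First I would fix an arbitrary element $e \in \mathrm{supp}_\gamma(\mathcal{D})$. By definition $\Pr_{\mathcal{D}}(e) \geq \gamma$, so drawing $\mathrm{L}$ independent samples from $\mathcal{D}$, the probability that $e$ is missed in \emph{every} draw is at most
\begin{equation}
(1 - \Pr_{\mathcal{D}}(e))^{\mathrm{L}} \leq (1-\gamma)^{\mathrm{L}} \leq e^{-\gamma \mathrm{L}},
\end{equation}
using the elementary inequality $1 - x \leq e^{-x}$.

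Next I would apply a union bound over the set $\mathrm{supp}_\gamma(\mathcal{D})$. Since by hypothesis $|\mathrm{supp}_\gamma(\mathcal{D})| \leq |\mathrm{supp}(\mathcal{D})| = M$, the probability that \emph{some} high-probability element fails to be sampled at least once is at most $M e^{-\gamma \mathrm{L}}$. Demanding this failure probability to be bounded by $\delta_\gamma$ yields the condition $M e^{-\gamma \mathrm{L}} \leq \delta_\gamma$, which rearranges to $\mathrm{L} \geq \gamma^{-1} \log(M/\delta_\gamma)$. Choosing $\mathrm{L} = \lceil \gamma^{-1} \log(M/\delta_\gamma) \rceil = \mathcal{O}(\log(M/\delta_\gamma)/\gamma)$ then suffices.

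There is no serious obstacle here; the argument is one line of Bernoulli tail bound plus one line of union bound. The only minor subtlety is that the bound is in terms of the \emph{sparsity} $M$ rather than the ambient size $N$ of the alphabet—this is essential for the later application, where $N = 4^n$ is exponentially large but $M$ is only the number of Pauli terms in $H$, so the logarithmic cost $\log(M/\delta_\gamma)$ remains polynomial in $n$ and $\log(1/\delta_\gamma)$. Noting that the bound is tight up to constants (a single element with probability exactly $\gamma$ already requires $\Omega(\log(1/\delta_\gamma)/\gamma)$ samples to be hit with probability $1-\delta_\gamma$) completes the proposal.
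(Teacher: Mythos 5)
Your proof is correct and follows essentially the same route as the paper's: bound the miss probability of a single high-probability element by $(1-\gamma)^{\mathrm{L}}$, union bound over the at most $M$ elements of $\mathrm{supp}_\gamma(\mathcal{D})$, and solve for $\mathrm{L}$. The only difference is that you make the final step explicit via $1-x\leq e^{-x}$, which the paper leaves implicit.
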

\begin{proof}
If we independently sample from $\mathcal{D}$ for $L$ times, the probability that an element $j\in \text{supp}_\gamma(\mathcal{D})$ is not sampled for at least one time is upper bounded by $(1-\gamma)^{\mathrm{L}}$. If we want the probability that all $M$ elements in the support to be sampled for at least one time to exceed $1-\delta_{\gamma}$, using a union bound, we can set $\mathrm{L}$ to satisfy
\begin{equation}
    |\text{supp}_\gamma(\mathcal{D})|(1-\gamma)^{\mathrm{L}}\leq M(1-\gamma)^{\mathrm{L}}\leq \delta_{\gamma}.
\end{equation}

Thus, the required sample complexity to sample each element at least once for $1-\delta_{\gamma}$ probability is
\begin{equation}
\label{eq:sample_complexity_for_support_of_classical}
    \mathrm{L} = \mathcal{O}\left(\frac{\log(M/\delta_{\gamma})}{\gamma}\right).
\end{equation}
\end{proof}

\subsection{The complexities of \texorpdfstring{$\mathcal{A}^{I}_j$}{A I j}\label{sec:complexity_I}}
\label{sec:complexities_of_A_1}
Combining the lower bound $\gamma_j$ in \Cref{lem:lower_bound_for_prob_dist} and the sample complexity in \Cref{lem:sample_complexity_from_dist_with_gap}, we can directly give the sample complexity for sampling all indices $s\in\mathfrak{S}_j$ with probability at least $\delta_j$ as
\begin{equation}
    L_j = \mathcal{O}\left(\frac{\log(M/\delta_j)}{\gamma_j}\right)=\mathcal{O}\left(M^2\log(M/\delta_j)\right).
\end{equation}
For each sample, the total evolution time under $H$ is 
\begin{equation}
    t_{j,l} = \mathcal{O}\left(\frac{1}{CM}\cdot\frac{1}{2^{-j}}\right),
\end{equation}
and the total evolution time for $\mathcal{A}^{I}_j$ is
\begin{equation}
    T_{1,j} = \sum_{l=1}^{L_j} t_{j,l} = \mathcal{O}(2^j M\log(M/\delta_{j})),\quad j = \{0,1,2,\ldots, \left\lceil\log_2(1/\epsilon)\right\rceil\}.
\end{equation}
Note that in when $j=\left\lceil\log_2(1/\epsilon)\right\rceil$, 
\[
T_{1,\left\lceil\log_2(1/\epsilon)\right\rceil} = \mathcal{O}\left(\frac{M\log(M/\delta_{j})}{\epsilon}\right),
\]
which reaches the Heisenberg-limited scaling.

\section{Hamiltonian coefficient learning \texorpdfstring{$\mathcal{A}^{II}_j$}{A II j}}
\label{app:coef_learning}
In this section we provide proofs for learning Hamiltonian coefficients in $\mathcal{A}^{II}_j$, which constitutes of two parts Hamiltonian reshaping and robust frequency estimation as we will further illustrate in the followings. The basic idea is that once we identified all possible terms in $\mathfrak{S}_j$ from $\mathcal{A}^{I}_j$, there are at most $L_j$ possible terms sampled. Let the set of all sampled indices from $\mathcal{A}^{I}_j$ as $\mathfrak{S}_{L_j}$, with $|\mathfrak{S}_{L_j}|\leq L_j = \mathcal{O}\left(M^2\log(M/\delta_j)\right)$. We first use the Hamiltonian reshaping technique as introduced in \Cref{sec:app_Hamiltonian_reshaping} to approximate the time-evolution under a single term $P_s$ for $s\in \mathfrak{S}_{L_j}$. Later, with the time-evolution under the single term $P_s$, we adapt the robust frequency estimation protocol as introduced in \Cref{sec:robust_frequency_estimation_details} to estimate $\mu_s$ to $\epsilon$ accuracy.

\subsection{Hamiltonian reshaping}
\label{sec:app_Hamiltonian_reshaping}
In this work we will need to use the Hamiltonian reshaping technique to obtain an effective Hamiltonian that consists of only a single Pauli operator.
More precisely, given a Pauli operator $P_a\in\{I,X,Y,Z\}^{\otimes n}\setminus\{I^{\otimes n}\}$, we want the effective Hamiltonian to contain only this term, with its coefficient value the same as in the original Hamiltonian $H$. To achieve this, we apply, with an interval of $\tau$, Pauli operators randomly drawn from the set

\begin{equation}
    \label{eq:random_unitaries_ensemble}
    \mathcal{K}_{P_a} = \left\{P\in \mathbb{P}_n:[P_a,P_b]=0\right\},
\end{equation}
where $\mathbb{P}_n$ is the set of all $n$-qubit Pauli operators. Here we can see that $|\mathcal{K}_{P_a}| = 2^{2n-1}$ and $\mathcal{K}_{P_a}$ are the set of all $n$-qubit Pauli operators that commute with $P_a$.
Also, $\tau$ needs to be sufficiently small as will be analyzed in \Cref{thm:hamiltonian_reshaping}.
More precisely, the evolution of the quantum system is described by
\begin{equation}
    \label{eq:hamiltonian_reshaping}
    Q_{r_2} e^{-iH\tau}Q_{r_2}\cdots Q_2 e^{-iH\tau}Q_2Q_1 e^{-iH\tau}Q_1,
\end{equation}
where $Q_k$, $k=1,2,\cdots, r_2$, is uniformly randomly drawn from the set $\mathcal{K}_{P_a}$. In one time step of length $\tau$, the quantum state evolves under the quantum channel as
\begin{equation}
\begin{aligned}
    \rho &\mapsto \frac{1}{2^{2n-1}}\sum_{Q\in \mathcal{K}_{P_a}} Qe^{-iH\tau}Q\rho Qe^{iH\tau}Q 
    = \rho-\frac{1}{2^{2n-1}}\sum_{Q\in \mathcal{K}_{P_a}} i\tau [QHQ,\rho]+\mathcal{O}(\tau^2) \\
    &= \rho - i\tau [H_{\mathrm{eff}},\rho]+\mathcal{O}(\tau^2) 
    =e^{-iH_{\mathrm{eff}}\tau}\rho e^{iH_{\mathrm{eff}}\tau} + \mathcal{O}(\tau^2),
\end{aligned}
\end{equation}
where $H_{\mathrm{eff}}$, the effective Hamiltonian, is
\begin{equation}
    H_{\mathrm{eff}} = \frac{1}{2^{2n-1}}\sum_{Q\in \mathcal{K}_{P_a}} QHQ.
\end{equation}
The above can be interpreted as a linear transformation applied to the Hamiltonian $H$. Consequently, it is natural to examine the impact of this transformation on each Pauli term within the Hamiltonian. For a term $P$, we note that there are two possible outcomes:
\begin{lemma}
    \label{lem:Hamiltonian_reshaping_Pauli_term}
    Let $P$ be a Pauli operator and let $\mathcal{K}_{P_a}$ be as defined in \eqref{eq:random_unitaries_ensemble}. Then
    \begin{equation}
    \frac{1}{2^{2n-1}}\sum_{Q\in \mathcal{K}_{P_a}} QPQ=
    \begin{cases}
        P_a,\text{ if } P=P_a,\\
        0,\text{ if } P\neq P_a.
    \end{cases}
\end{equation}
\end{lemma}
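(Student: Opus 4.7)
The plan is to split on the two cases in the lemma and in each case exploit that any two Pauli operators either commute or anti-commute, together with the fact that Pauli operators square to the identity.

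First I would dispatch the easy case $P = P_a$. Every $Q \in \mathcal{K}_{P_a}$ commutes with $P_a$ by definition, so $QP_aQ = P_a Q^2 = P_a$. Since $|\mathcal{K}_{P_a}| = 2^{2n-1}$, the sum divided by $2^{2n-1}$ is exactly $P_a$.

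For the case $P \neq P_a$ (with $P$ traceless, since the Hamiltonian is assumed traceless), the strategy is to define the sign character $\chi_P : \mathcal{K}_{P_a} \to \{\pm 1\}$ via $QP = \chi_P(Q) PQ$, so that $QPQ = \chi_P(Q)\, Q^2 P = \chi_P(Q)P$. Thus the sum in the lemma equals $\bigl(\sum_{Q \in \mathcal{K}_{P_a}} \chi_P(Q)\bigr) P / 2^{2n-1}$, which vanishes whenever $\chi_P$ is a non-trivial character, because then exactly half of the $Q$'s contribute $+1$ and half contribute $-1$. So the remaining task is to show that $\chi_P$ is non-trivial whenever $P \neq I, \pm P_a$ (as operators modulo global phase). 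The cleanest way is to use the standard symplectic picture: Pauli operators modulo phases form $\mathbb{F}_2^{2n}$ with commutation given by a non-degenerate symplectic form $\omega$, and $\mathcal{K}_{P_a} = \{Q : \omega(Q, P_a) = 0\}$ is a hyperplane. The triviality of $\chi_P$ would say $\omega(Q, P) = 0$ for all $Q$ in this hyperplane, so $P$ would lie in its symplectic dual, which by non-degeneracy is the line spanned by $P_a$. Thus triviality of $\chi_P$ forces $P \in \{I, P_a\}$ modulo phase, and we exclude both by hypothesis (the $I$ case is ruled out by tracelessness and the $P_a$ case by the assumption $P \neq P_a$).

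The only delicate point is the last step: justifying that $\chi_P$ non-trivial forces the sum to be zero, and that $\chi_P$ is indeed non-trivial in the non-degenerate case. The first is routine group theory (a non-trivial homomorphism from a finite abelian-quotient group to $\{\pm 1\}$ is balanced). The second is the symplectic duality argument above; an alternative elementary route is to observe that $\mathcal{K}_{P_a}$ and $\mathcal{K}_P$ are each hyperplanes in $\mathbb{F}_2^{2n}$, so when $P \neq P_a$ (mod phase) they are distinct and their intersection has size $2^{2n-2}$, which gives $|\{Q \in \mathcal{K}_{P_a} : [Q,P]=0\}| = |\{Q \in \mathcal{K}_{P_a} : \{Q,P\}=0\}| = 2^{2n-2}$ directly. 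I expect this counting/duality step to be the only non-trivial part; everything else is a one-line calculation.
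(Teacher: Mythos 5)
Your proposal is correct, and its core cancellation mechanism is the same as the paper's: both reduce the case $P\neq P_a$ to the statement that some $Q_0\in\mathcal{K}_{P_a}$ anti-commutes with $P$, and then conclude that exactly half of $\mathcal{K}_{P_a}$ commutes with $P$ and half anti-commutes (the paper via the explicit bijection $Q\mapsto Q_0Q$, you via the equivalent observation that a non-trivial $\{\pm1\}$-valued character is balanced). Where you genuinely add something is the justification of the existence of $Q_0$: the paper simply asserts it with the remark that Pauli operators either commute or anti-commute, which does not by itself rule out the degenerate possibility that \emph{every} element of $\mathcal{K}_{P_a}$ commutes with $P$. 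Your symplectic-duality argument (the symplectic complement of the hyperplane $\mathcal{K}_{P_a}$ in $\mathbb{F}_2^{2n}$ is the line spanned by $P_a$, so $\chi_P$ trivial forces $P\in\{I,P_a\}$ modulo phase) closes this gap cleanly, and your alternative hyperplane-intersection count of $2^{2n-2}$ gives the same conclusion even more directly. You are also right to flag the exclusions explicitly: the lemma as stated silently requires $P$ traceless (otherwise $P=I^{\otimes n}$ gives $I$, not $0$) and requires $P\neq P_a$ as elements of $\mathbb{P}_n$, i.e.\ modulo phase; both your proof and the paper's use these hypotheses, but only yours names them.
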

\begin{proof}
    If $P=P_a$, then $QPQ=P_a$ for all $Q\in \mathcal{K}_{P_a}$, averaging over all $Q$ therefore yields $P_a$. If $P\neq P_a$, then there exists $Q_0\in \mathcal{K}_{P_a}$ such that $PQ_0=-Q_0P$ (because $\mathcal{K}_{P_a}$ contains all Pauli operators that commute with $P_a$ and two Pauli matrices either commute or anti-commute). Consider the mapping $\phi:\mathcal{K}_{P_a}\rightarrow\mathcal{K}_{P_a}$ defined by $\phi(Q) = Q_0Q$. This is a bijection from $\mathcal{K}_\beta$ to itself, and it can be readily checked that if $Q$ commutes with $P$, then $\phi(Q)$ anti-commutes with $P$; if $Q$ anti-commutes with $P$, then $\phi(Q)$ commutes with $P$. 
    Consequently $QPQ=P$ for half of all $Q\in \mathcal{K}_{P_a}$ and $QPQ=-P$ for the other half. 
    Thus taking the average yields $0$. 
\end{proof}
This lemma allows us to on average single out the Pauli term $P_a$ we want to preserve in the Hamiltonian and discard all other terms. We have:
\begin{equation}
    \label{eq:effective_hamiltonian_terms}
    H_{\mathrm{eff}} = \mu_a P_a.
\end{equation}
Note that the coefficients $\mu_a$ of $P_a$ are preserved in this effective Hamiltonian.

While Lemma~\ref{lem:Hamiltonian_reshaping_Pauli_term} concerns the uniform average over all $2^{2n-1}$ elements of a set of Pauli operators, in our learning protocol we will randomly sample from this set. To assess the protocol's accuracy, we will use the following theorem.

\begin{theorem}[Theorem~3 of \cite{ma2024learningkbodyhamiltonianscompressed}]
    \label{thm:hamiltonian_reshaping}
    Let $P_a$ traceless $n$-qubit Pauli operator, and let $\mathcal{K}_{P_a}$ be as defined in \eqref{eq:random_unitaries_ensemble}.
    Let $U$ be the random unitary defined in \eqref{eq:hamiltonian_reshaping}. Let $V=e^{-iH_{\mathrm{eff}}t}$ for $H_{\mathrm{eff}}$ given in \eqref{eq:effective_hamiltonian_terms}, and $t=r_2\tau$.
    We define the quantum channels $\mathcal{U}$ and $\mathcal{V}$ be
    \[
    \mathcal{U}(\rho) = \mathbb{E}[U\rho U^\dag],\quad \mathcal{V}(\rho) = V\rho V^\dag.
    \]
    Then 
    \[
    \|\mathcal{U}-\mathcal{V}\|_{\diamond}\leq \frac{4M^2 t^2}{r_2}.
    \]
\end{theorem}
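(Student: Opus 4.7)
\textbf{Proof plan for Theorem~\ref{thm:hamiltonian_reshaping}.} My strategy is a standard ``per-step error times number of steps'' telescoping argument, made possible by the fact that Lemma~\ref{lem:Hamiltonian_reshaping_Pauli_term} guarantees that the first-order (in $\tau$) generator of the reshaped channel agrees exactly with $H_{\mathrm{eff}}$, so the per-step discrepancy is $\mathcal{O}(\tau^2)$. Writing $\tau = t/r_2$, let
\begin{equation}
\mathcal{E}_\tau(\rho) \;=\; \frac{1}{|\mathcal{K}_{P_a}|}\sum_{Q\in\mathcal{K}_{P_a}} Q e^{-iH\tau} Q\, \rho\, Q e^{iH\tau} Q,
\qquad
\mathcal{V}_\tau(\rho) \;=\; e^{-iH_{\mathrm{eff}}\tau}\rho\, e^{iH_{\mathrm{eff}}\tau}.
\end{equation}
Since the random Paulis at different time steps are drawn independently, the full reshaped channel factorizes as $\mathcal{U} = \mathcal{E}_\tau^{\,r_2}$, and trivially $\mathcal{V} = \mathcal{V}_\tau^{\,r_2}$. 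I will therefore first reduce the task to a single-step estimate by subadditivity.

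\textbf{Step 1: telescoping.} Using the standard identity $\mathcal{A}^{r_2} - \mathcal{B}^{r_2} = \sum_{k=0}^{r_2-1} \mathcal{A}^{k}(\mathcal{A}-\mathcal{B})\mathcal{B}^{r_2-1-k}$, together with the fact that the diamond norm is submultiplicative under composition and that CPTP maps are contractions in diamond norm, I obtain
\begin{equation}
\|\mathcal{U}-\mathcal{V}\|_\diamond \;\le\; r_2\,\|\mathcal{E}_\tau-\mathcal{V}_\tau\|_\diamond.
\end{equation}

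\textbf{Step 2: single-step bound.} Taylor-expanding $e^{-iH\tau}$ and $e^{-iH_{\mathrm{eff}}\tau}$ to first order and using Lemma~\ref{lem:Hamiltonian_reshaping_Pauli_term} to see that $\mathbb{E}_Q[QHQ] = H_{\mathrm{eff}}$, one finds
\begin{equation}
\mathcal{E}_\tau(\rho) \;=\; \rho - i\tau[H_{\mathrm{eff}},\rho] + \tau^2\, \mathcal{R}_E(\rho),
\qquad
\mathcal{V}_\tau(\rho) \;=\; \rho - i\tau[H_{\mathrm{eff}},\rho] + \tau^2\, \mathcal{R}_V(\rho),
\end{equation}
so the order-$\tau$ pieces cancel and
\begin{equation}
\|\mathcal{E}_\tau-\mathcal{V}_\tau\|_\diamond \;\le\; \tau^2\bigl(\|\mathcal{R}_E\|_\diamond + \|\mathcal{R}_V\|_\diamond\bigr).
\end{equation}
The remainder $\mathcal{R}_V$ comes from the unitary $e^{-iH_{\mathrm{eff}}\tau}$ and is bounded using $\|H_{\mathrm{eff}}\|=|\mu_a|\le 1$. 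The remainder $\mathcal{R}_E$ is bounded by uniformly controlling every Kraus operator $|\mathcal{K}_{P_a}|^{-1/2} Q e^{-iH\tau} Q$ via Taylor's theorem with operator remainder, each of which picks up a factor of $\|H\|$. Crucially, $\|H\| \le \sum_{s}|\mu_s| \le M$, so $\|\mathcal{R}_E\|_\diamond = \mathcal{O}(M^2)$ and $\|\mathcal{R}_V\|_\diamond = \mathcal{O}(1)$, giving the clean bound $\|\mathcal{E}_\tau-\mathcal{V}_\tau\|_\diamond \le 4 M^2 \tau^2$ (with a little bookkeeping to pin down the constant $4$).

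\textbf{Step 3: assembly.} Combining Steps~1 and~2 with $\tau = t/r_2$,
\begin{equation}
\|\mathcal{U}-\mathcal{V}\|_\diamond \;\le\; r_2 \cdot 4 M^2 \tau^2 \;=\; \frac{4M^2 t^2}{r_2},
\end{equation}
as claimed.

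\textbf{Main obstacle.} The nontrivial point is Step~2: converting a triangle-inequality-style operator bound on Kraus operators into a diamond-norm bound on the channel while keeping the prefactor tight. Naively bounding each of the many cross terms in $\mathcal{E}_\tau$ (e.g.\ $QH^2Q\rho$, $QHQ\rho QHQ$, and $\rho QH^2Q$ type contributions) separately can overshoot the constant $4$ significantly, so I would be careful to group them as a single Hermitian remainder acting by commutators/anticommutators and bound the result using $\|H\|\le M$ just once per factor of $\tau$. The reliance on the spectral bound $\|H\|\le M$ (rather than a tighter but structure-dependent quantity) is what injects the $M^2$ scaling; this is the price of having no ansatz on $H$.
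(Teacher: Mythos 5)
Your proposal is correct and follows essentially the same route as the paper's proof: factor $\mathcal{U}$ into independent single-step channels, telescope using contractivity of CPTP maps in diamond norm, and bound each step by $2\|H\|^2\tau^2 + 2\|H_{\mathrm{eff}}\|^2\tau^2 \le 4M^2\tau^2$ after the first-order terms cancel via $\mathbb{E}_Q[QHQ]=H_{\mathrm{eff}}$. The paper executes your Step 2 by testing against an arbitrary observable on the system plus ancilla with a Lagrange-form Taylor remainder, which is exactly the "careful bookkeeping" you anticipate, and it likewise closes with $\|H\|,\|H_{\mathrm{eff}}\|\le M$.
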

Here the norm is the diamond norm (or completely-bounded norm), given in terms of the Schatten $L_1$ norm by $\|\mathcal{U}\|_\diamond := \max_{X : \|X\|_1 \le 1}\bigl\|\bigl(\mathcal{U} \otimes I_n\bigl)(X)\bigr\|_1$. 
\begin{proof}(Appendix~A of \cite{ma2024learningkbodyhamiltonianscompressed})

    Recall from \eqref{eq:hamiltonian_reshaping} that
    \[
    U = Q_r e^{-iH\tau} Q_r\cdots Q_1 e^{-iH\tau} Q_1.
    \]
    We define
    \[
    U_j = Q_j e^{-iH\tau} Q_j,
    \]
    and the quantum channel $\mathcal{U}_j$
    \[
    \mathcal{U}_j(\rho) = \mathbb{E}_{Q_j} U_j\rho U_j^\dag.
    \]
    Because each $Q_j$ is chosen independently, we have
    \[
    \mathcal{U} = \mathcal{U}_{r_2} \mathcal{U}_{r_2-1}\cdots \mathcal{U}_1.
    \]
    We then define $\mathcal{V}_\tau(\rho) = e^{-iH_{\mathrm{eff}}\tau}\rho e^{iH_{\mathrm{eff}}\tau}$, which then gives us $\mathcal{V} = \mathcal{V}_\tau^{r_2}$. We will then focus on obtaining a bound for $\|\mathcal{U}_j-\mathcal{V}_\tau\|_{\diamond}$.

    Consider a quantum state $\rho$ on the joint system consists of the current system and an auxiliary system denoted by $\alpha$, and an observable $O$ on the combined quantum system such that $\|O\|\leq 1$. We have 
    \[
    \begin{aligned}
        &\Tr[O (I_\alpha\otimes U_j)\rho (I_\alpha\otimes U_j^\dag)] \\
        &= \Tr[O (I_\alpha\otimes (Q_j e^{-iH\tau}Q_j))\rho (I_\alpha\otimes (Q_j e^{iH\tau}Q_j))] \\
        &= \Tr[O\rho] - i\tau \Tr[O[I_\alpha\otimes (Q_j H Q_j),\rho]] \\
        &+\tau^2\Tr[O(I_\alpha\otimes (Q_j H Q_j))\rho (I_\alpha\otimes (Q_j H Q_j))]\\
        &-\frac{\tau^2}{2}\Tr[O(I_\alpha\otimes (Q_j H^2 e^{-iH \xi}Q_j))\rho]-\frac{\tau^2}{2}\Tr[O\rho (I_\alpha\otimes (Q_j H^2 e^{+iH \xi}Q_j))],
    \end{aligned}
    \]
    where we are using Taylor's theorem with the remainder in the Lagrange form and $\xi\in [0,\tau]$. This tells us that
    \[
    \left|\Tr[O (I_\alpha\otimes U_j)\rho (I_\alpha\otimes U_j^\dag)] - (\Tr[O\rho] - i\tau \Tr[O[I_\alpha\otimes (Q_j H Q_j),\rho]] )\right|\leq 2\|H\|^2\tau^2
    \]
    Because the two terms on the left-hand side, when we take the expectation value with respect to $Q_j$, become
    \[
    \begin{aligned}
        &\mathbb{E}_{Q_j} \Tr[O (I_\alpha\otimes U_j)\rho (I_\alpha\otimes U_j^\dag)] = \Tr[O (\mathcal{I}_\alpha\otimes \mathcal{U}_j)(\rho)] \\
        &\mathbb{E}_{Q_j} \big[\Tr[O\rho] - i\tau \Tr[O[I_\alpha\otimes (Q_j H Q_j),\rho]] \big] = \Tr[O\rho] - i\tau \Tr[O[I_\alpha\otimes H_{\mathrm{eff}},\rho]],
    \end{aligned}
    \]
    we have
    \begin{equation*}
         \left|\Tr[O (\mathcal{I}_\alpha\otimes \mathcal{U}_j)(\rho)] - (\Tr[O\rho] - i\tau \Tr[O[I_\alpha\otimes H_{\mathrm{eff}},\rho]])\right|\leq 2\|H\|^2\tau^2.
    \end{equation*}
    Because this is true for all $O$ such that $\|O\|\leq 1$,
    \begin{equation}
    \label{eq:random_unitary_short_time_approx}
        \|(\mathcal{I}_\alpha\otimes \mathcal{U}_j)(\rho)-(\rho-i\tau [I_\alpha\otimes H_{\mathrm{eff}},\rho])\|_1\leq 2\|H\|^2\tau^2.
    \end{equation}
    Similarly we can also show that
    \begin{equation}
    \label{eq:effective_ham_short_time_approx}
        \|(\mathcal{I}_\alpha\otimes \mathcal{V}_\tau)(\rho)-(\rho-i\tau [I_\alpha\otimes H_{\mathrm{eff}},\rho])\|_1\leq 2\|H_{\mathrm{eff}}\|^2\tau^2.
    \end{equation}
    Combining \eqref{eq:random_unitary_short_time_approx} and \eqref{eq:effective_ham_short_time_approx}, and by the triangle inequality, we have
    \begin{equation*}
        \|(\mathcal{I}_\alpha\otimes \mathcal{U}_j)(\rho)-(\mathcal{I}_\alpha\otimes \mathcal{V}_\tau)(\rho)\|_1\leq 2\|H\|^2\tau^2 + 2\|H_{\mathrm{eff}}\|^2\tau^2.
    \end{equation*}
    Because the above is true for all auxiliary system $\alpha$ and for all quantum states on the combined quantum system, we have
    \begin{equation}
        \|\mathcal{U}_j-\mathcal{V}_\tau\|_{\diamond}\leq 2\|H\|^2\tau^2 + 2\|H_{\mathrm{eff}}\|^2\tau^2.
    \end{equation}

    Next, we observe that
    \[
    \mathcal{U}-\mathcal{V} = \sum_{k=1}^{r_2} \mathcal{U}_r\cdots \mathcal{U}_{k+1}(\mathcal{U}_k-\mathcal{V}_\tau)\mathcal{V}_\tau^{k-1}.
    \]
    Therefore
    \[
    \|\mathcal{U}-\mathcal{V}\|_\diamond\leq \sum_{k=1}^{r_2} \|\mathcal{U}_k-\mathcal{V}_\tau\|_{\diamond}\leq r_2(2\|H\|^2\tau^2 + 2\|H_{\mathrm{eff}}\|^2\tau^2).
    \]
    The bound in the Theorem follows by observing that $\|H\|,\|H_{\mathrm{eff}}\|\leq M$, and $\tau = t/r_2$.
\end{proof}

\subsection{Robust frequency estimation}
\label{sec:robust_frequency_estimation_details}
In this section we introduce the robust frequency estimation protocol in \cite{LiTongNiGefenYing2023heisenberg}, which largely follows the idea of the robust phase estimation protocol in \cite{KimmelLowYoder2015robust} and is also used in \cite{ma2024learningkbodyhamiltonianscompressed}. However, unlike in the previous works which require highly entangled Bell state input, as described in \Cref{sec:Coefficient-learning algorithm}, we only require product state input in this work, since the effective Hamiltonian as in \Cref{eq:effective_hamiltonian_terms} has only one Pauli term, which means that we only need to set the input state to have maximum sensitivity on one qubit. Here we restate the proofs in \cite{ma2024learningkbodyhamiltonianscompressed}.

\begin{theorem}[Robust frequency estimation \cite{ma2024learningkbodyhamiltonianscompressed}]
    \label{thm:robust_frequency_estimation}
    Let $\theta\in[-A,A]$.
    Let $X(t)$ and $Y(t)$ be independent random variables satisfying
    \begin{equation}
        \begin{aligned}
            &|X(t)-\cos(\theta t)|< 1/\sqrt{2}, \text{ with probability at least }2/3, \\
            &|Y(t)-\sin(\theta t)|< 1/\sqrt{2}, \text{ with probability at least }2/3.
        \end{aligned}
    \end{equation}
    Then with $K$ independent non-adaptive\footnote{By ``non-adaptive'' we mean that the choice of each $t_j$ does not depend on the value of $X(t_{j'})$ or $Y(t_j')$ for any $j'$.} samples $X(t_1),X(t_2),\cdots,X(t_K)$ and $Y(t_1),Y(t_2),\cdots,Y(t_K)$, $t_j\geq 0$, for
    \begin{equation}
        K=\mathcal{O}(\log(A/\epsilon)(\log(1/q)+\log\log(A/\epsilon))),
    \end{equation}
    \begin{equation}
    \begin{aligned}
        T=\sum_{j=1}^Kt_j=\mathcal{O}((1/\epsilon)(\log(1/q)+\log\log(A/\epsilon))),\quad \max_j t_j=\mathcal{O}(1/\epsilon),
    \end{aligned}
    \end{equation}
    we can obtain a random variable $\hat{\theta}$ such that
    \begin{equation}
        \Pr[|\hat{\theta}-\theta|>\epsilon]\leq q.
    \end{equation}
\end{theorem}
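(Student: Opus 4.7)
I would prove this by the standard robust phase estimation strategy~\cite{KimmelLowYoder2015robust}, organized as a hierarchical binary search on the interval containing $\theta$. Starting from $[a_0,b_0]=[-A,A]$ of length $L_0=2A$, I would maintain at each level $\ell$ an interval $[a_\ell,b_\ell]$ of length $L_\ell=L_0\cdot(2/3)^\ell$ containing $\theta$ with high probability, and shrink it by a factor $2/3$ per step. After $L=\lceil\log_{3/2}(A/\epsilon)\rceil=O(\log(A/\epsilon))$ levels the length drops below $2\epsilon$, so the midpoint $\hat\theta=(a_L+b_L)/2$ achieves $|\hat\theta-\theta|\leq\epsilon$. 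Since the schedule of times $t_\ell=\pi/L_\ell$ (with $m_\ell$ repetitions per level) is fixed in advance, only the classical post-processing uses the collected samples, so the protocol is non-adaptive in the sense of the theorem.

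The per-level test decides which third of the current interval to discard. With $t_\ell=\pi/L_\ell$ and compensating phase $\phi_\ell=-(a_\ell+b_\ell)t_\ell/2$, the ideal quantity $\sin(\phi_\ell+\theta t_\ell)$ is $\leq-1/2$ when $\theta$ lies in the left third $[a_\ell,(2a_\ell+b_\ell)/3]$, $\geq 1/2$ when $\theta$ lies in the right third $[(a_\ell+2b_\ell)/3,b_\ell]$, and is anywhere in $[-1,1]$ inside the middle third. Replacing the exact trig values by empirical surrogates $\overline X_\ell,\overline Y_\ell$ and testing the sign of
\begin{equation*}
\sin(\phi_\ell)\overline X_\ell+\cos(\phi_\ell)\overline Y_\ell
\end{equation*}
decides between cutting off the right third (setting $b_{\ell+1}=(a_\ell+2b_\ell)/3$) or the left third (setting $a_{\ell+1}=(2a_\ell+b_\ell)/3$). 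The essential feature of this three-way overlap is that whenever $\theta$ lies in the middle third, either decision preserves the invariant $\theta\in[a_{\ell+1},b_{\ell+1}]$, so only the two extreme thirds impose an actual correctness requirement on the sign test.

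Confidence is amplified by a median-of-means boost: at level $\ell$ use $m_\ell=\Theta(\log(L/q))$ independent copies each of $X(t_\ell)$ and $Y(t_\ell)$ and take the sample medians as $\overline X_\ell,\overline Y_\ell$. A Hoeffding-type argument applied to the indicator of $\{|X(t_\ell)-\cos(\theta t_\ell)|<1/\sqrt2\}$ (and similarly for $Y$) shows that each median lies in its $1/\sqrt2$-window with probability at least $1-e^{-m_\ell/18}$; choosing $m_\ell=\Theta(\log(1/q)+\log L)$ makes this at least $1-q/(2L)$, and a union bound over the $L$ levels yields overall failure probability at most $q$. Summing, $K=\sum_\ell 2m_\ell=O(\log(A/\epsilon)(\log(1/q)+\log\log(A/\epsilon)))$, and since $t_\ell=\pi/L_\ell$ grows geometrically with ratio $3/2$, the total time $T=\sum_\ell m_\ell t_\ell$ is dominated by its final term and equals $O(\epsilon^{-1}(\log(1/q)+\log\log(A/\epsilon)))$, with $\max_j t_j=t_L=O(1/\epsilon)$, matching the claim.

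The most delicate step is translating the coordinate-wise $1/\sqrt2$ bounds on $\overline X_\ell,\overline Y_\ell$ into a guaranteed-correct sign decision whenever $\theta$ lies in a side third. A direct Euclidean estimate only controls the perturbation of $\sin(\phi_\ell)\overline X_\ell+\cos(\phi_\ell)\overline Y_\ell$ by a quantity strictly less than $1$, which is on the same order as the $1/2$ signal margin at the third boundaries. Closing this gap requires using both the strictness of the $1/\sqrt2$ inequality and the over-generous middle-third buffer, so that any worst-case error can misclassify $\theta$ only inside the safe middle zone where the resulting decision still preserves the interval invariant. This is the main technical obstacle, but it reduces to a purely geometric statement about the image of the error disk under rotation and can be handled by a direct case analysis on $\phi_\ell+\theta t_\ell$.
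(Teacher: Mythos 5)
Your proposal follows essentially the same route as the paper's proof: a factor-$2/3$ interval refinement over $L=\lceil\log_{3/2}(A/\epsilon)\rceil$ levels, a sign test on the rotated quantity $\sin(\phi_\ell)\overline X_\ell+\cos(\phi_\ell)\overline Y_\ell$ (which is exactly the paper's $\Im\bigl(e^{-i(a+b)\pi/(2(b-a))}S(\pi/(b-a))\bigr)$ in \Cref{lem:frequency_est_refine}), median amplification with $m_\ell=\Theta(\log(L/q))$ repetitions, a union bound over levels, and a geometric sum for $T$. The counting of $K$ and $T$ matches the paper.

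The one place you diverge is the "delicate step" you flag at the end, and there your proposed resolution does not work as stated. The decision test needs the perturbation of the \emph{rotated imaginary part} to be strictly less than the $1/2$ margin at the third boundaries; with only coordinate-wise bounds $|\Delta_x|,|\Delta_y|\leq 1/\sqrt2$, that perturbation is $|\Delta_x\sin\phi_\ell+\Delta_y\cos\phi_\ell|\leq\tfrac{1}{\sqrt2}(|\sin\phi_\ell|+|\cos\phi_\ell|)$, which equals $1$ when $\phi_\ell$ is near $\pi/4$ — and $\phi_\ell$ is fixed by the level, not something you can case on to avoid. The middle-third buffer does not rescue this either: if $\theta$ sits just inside a side third, its margin is only marginally above $1/2$, and a perturbation close to $1$ can flip the sign and evict $\theta$ from the new interval. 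The clean fix is to tighten the per-coordinate accuracy to $1/(2\sqrt2)$ (e.g., by a constant-factor increase in the averaging before the median), so that $|S(t)-e^{i\theta t}|\leq 1/2$ holds outright; this changes only constants and none of the asymptotics. It is worth noting that the paper's own proof asserts $|S(t)-e^{i\theta t}|\leq 1/2$ directly from the two $1/\sqrt2$ coordinate bounds, which in fact only give $\leq 1$ — so you have correctly identified a soft spot that the paper itself glosses over, but the honest repair is a constant adjustment, not the geometric case analysis you sketch.
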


We modify the protocol in \cite{LiTongNiGefenYing2023heisenberg} because our goal is to obtain an accurate estimate with large probability rather than having an optimal mean-squared error scaling.
The key tool is the following lemma that allows us to incrementally refine the frequency estimate:

\begin{lemma}
    \label{lem:frequency_est_refine}
    Let $\theta\in[a,b]$.
    Let $Z(t)$ be a random variable such that 
    \begin{equation}
        \label{eq:rv_correctness_condition}
        |Z(t)-e^{i\theta t}|< 1/2.
    \end{equation}
    Then we can correctly distinguish between two overlapping cases $\theta\in[a,(a+2b)/3]$ and $\theta\in[(2a+b)/3,b]$ with one sample of $Z(\pi/(b-a))$. 
\end{lemma}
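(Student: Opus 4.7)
The plan is to reduce the lemma to a simple geometric fact on the unit circle. First, I would fix $t=\pi/(b-a)$ and set $\phi_0:=a\pi/(b-a)$, so that as $\theta$ sweeps $[a,b]$, the phase $\phi:=\theta t$ sweeps the half-circle $[\phi_0,\phi_0+\pi]$. A direct calculation shows that the two candidate subintervals $[a,(a+2b)/3]$ and $[(2a+b)/3,b]$ are mapped to the arcs $[\phi_0,\phi_0+2\pi/3]$ and $[\phi_0+\pi/3,\phi_0+\pi]$. These two arcs overlap in a central arc of length $\pi/3$ (which is harmless: either answer is permitted for $\theta$ in the overlap), while their \emph{strict} (non-overlap) portions $[\phi_0,\phi_0+\pi/3]$ and $[\phi_0+2\pi/3,\phi_0+\pi]$ are separated by an angular gap of exactly $\pi/3$.

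Next, I would look for a real linear functional of $Z(t)$ whose value separates the two strict arcs by a margin strictly exceeding the noise tolerance $1/2$. The natural candidate is the projection onto the axis bisecting the half-circle, namely
\begin{equation*}
R(Z)\;:=\;\operatorname{Im}\!\bigl(e^{-i(\phi_0+\pi/2)}Z\bigr)\;=\;\operatorname{Im}\!\bigl(e^{-i(a+b)\pi/(2(b-a))}Z\bigr).
\end{equation*}
For a pure phase $Z=e^{i\phi}$ this evaluates to $R(e^{i\phi})=-\cos(\phi-\phi_0)$. Hence $R(e^{i\phi})\le -1/2$ on the strict Case~1 arc $[\phi_0,\phi_0+\pi/3]$ and $R(e^{i\phi})\ge 1/2$ on the strict Case~2 arc $[\phi_0+2\pi/3,\phi_0+\pi]$. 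Note that this is precisely the projection used in step~6 of Algorithm~\hyperref[alg:algorithm_II]{2}, which is where the specific form $\tfrac{(a+b)\pi}{2(b-a)}$ comes from.

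To finish, I would invoke the fact that $R$ is $1$-Lipschitz (any coordinate projection contracts Euclidean distance), so the hypothesis $|Z(t)-e^{i\phi}|<1/2$ implies $|R(Z(t))-R(e^{i\phi})|<1/2$. Combining with the bounds in the previous paragraph, $R(Z(t))<0$ whenever $\theta$ is in the strict Case~1 region and $R(Z(t))>0$ whenever $\theta$ is in the strict Case~2 region. The decision rule is therefore to output Case~1 if $R(Z(t))\le 0$ and Case~2 otherwise, which decides correctly on the non-overlap part and is allowed to be either on the overlap.

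There is no genuine obstacle here: the argument is purely geometric and deterministic given the bound~\eqref{eq:rv_correctness_condition}. The only design choice is to match the projection axis (bisector of the half-circle) to the arc decomposition produced by the $1/3$--$2/3$ interval split, and this matching is precisely what makes the $1/2$ margin of $R$ align with the $1/2$ noise tolerance on $Z$. The probabilistic content of Theorem~\ref{thm:robust_frequency_estimation} (the $2/3$ success probabilities for $X(t)$ and $Y(t)$) is not used in this lemma itself—it enters only when constructing a $Z(t)$ satisfying~\eqref{eq:rv_correctness_condition} from samples of $X$ and $Y$ via majority voting in the parent theorem.
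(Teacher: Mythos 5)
Your proposal is correct and is essentially the paper's own argument: your functional $R(Z)=\operatorname{Im}\bigl(e^{-i(a+b)\pi/(2(b-a))}Z\bigr)$ evaluated on $e^{i\theta t}$ equals $-\cos(\phi-\phi_0)=\sin\!\bigl(\tfrac{\pi}{b-a}(\theta-\tfrac{a+b}{2})\bigr)$, which is exactly the paper's $f(\theta)$, and the sign-of-$R$ decision rule with the $1$-Lipschitz projection and the $1/2$ margin is the same mechanism the paper uses. The only difference is presentational: the paper argues contrapositively (observed sign $\Rightarrow$ bound on $f(\theta)$ $\Rightarrow$ membership in an interval), whereas you argue forward from the non-overlap regions, but these are logically equivalent.
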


\begin{proof}
    These two situations can be distinguished by looking at the value of 
    \[
    f(\theta)=\sin\left(\frac{\pi}{b-a}\left(\theta-\frac{a+b}{2}\right)\right).
    \]
    We know from \eqref{eq:rv_correctness_condition} that
    \[
    \left|\Im\left(e^{-i\frac{(a+b)\pi}{2(b-a)}}Z(\pi/(b-a))\right)-f(\theta)\right|<1/2.
    \]
    where $t$ in \eqref{eq:rv_correctness_condition} is substituted by $\pi/(b-a)$ and a phase factor is added.
    
    If $$\Im\left(e^{-i\frac{(a+b)\pi}{2(b-a)}}Z(\pi/(b-a))\right)\leq 0,$$ then $f(\theta)<1/2$, which implies $\theta\in[a,(a+2b)/3]$. If $$\Im\left(e^{-i\frac{(a+b)\pi}{2(b-a)}}Z(\pi/(b-a))\right)> 0,$$ then $f(\theta)>-1/2$, which implies $\theta\in[(2a+b)/3,b]$.
\end{proof}

Using this lemma we will prove \Cref{thm:robust_frequency_estimation}, which we restate below:

\begin{theorem*}
    Let $\theta\in[-A,A]$.
    Let $X(t)$ and $Y(t)$ be independent random variables satisfying
    \begin{equation}
        \label{eq:rv_correctness_condition_prob}
        \begin{aligned}
            &|X(t)-\cos(\theta t)|< 1/\sqrt{2}, \text{ with probability at least }2/3, \\
            &|Y(t)-\sin(\theta t)|< 1/\sqrt{2}, \text{ with probability at least }2/3.
        \end{aligned}
    \end{equation}
    Then with independent non-adaptive samples $X(t_1),X(t_2),\cdots,X(t_K)$ and $Y(t_1),Y(t_2),\cdots,Y(t_K)$, $t_j\geq 0$, for
    \begin{equation}
        \label{eq:number_of_samples_RPE}
        K=\mathcal{O}(\log(A/\epsilon)(\log(1/q)+\log\log(A/\epsilon))),
    \end{equation}
    \begin{equation}
    \label{eq:total_evolution_time_RPE}
    \begin{aligned}
        T=\sum_{j=1}^Kt_j=\mathcal{O}((1/\epsilon)(\log(1/q)+\log\log(A/\epsilon))),\quad \max_j t_j=\mathcal{O}(1/\epsilon),
    \end{aligned}
    \end{equation}
    we can obtain a random variable $\hat{\theta}$ such that
    \begin{equation}
        \Pr[|\hat{\theta}-\theta|>\epsilon]\leq q.
    \end{equation}
\end{theorem*}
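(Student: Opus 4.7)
The plan is to iterate Lemma~\ref{lem:frequency_est_refine} along a geometric sequence of shrinking intervals that always contain $\theta$. I set $[a_0,b_0]=[-A,A]$, and for $l=0,1,\dots,L-1$ with $L=\lceil\log_{3/2}(2A/\epsilon)\rceil=\mathcal{O}(\log(A/\epsilon))$ I shrink $[a_l,b_l]$ by a factor $2/3$ to $[a_{l+1},b_{l+1}]$ using the sign test of Lemma~\ref{lem:frequency_est_refine} at time $t_l=\pi/(b_l-a_l)$. After $L$ rounds the remaining interval has width $\leq\epsilon$, so returning its midpoint $\hat\theta$ satisfies $|\hat\theta-\theta|\leq\epsilon/2<\epsilon$ provided every refinement step has produced a correct answer. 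The schedule $\{t_l\}$ and the number of samples per round are fixed in advance, so the protocol is non-adaptive as required.

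To control the failure probability of a single round, I would boost the single-sample guarantee for $(X(t_l),Y(t_l))$ by taking $m$ independent paired samples and performing a majority vote on the binary outcome produced by Lemma~\ref{lem:frequency_est_refine}. Because each of the events $\{|X-\cos(\theta t_l)|<1/\sqrt{2}\}$ and $\{|Y-\sin(\theta t_l)|<1/\sqrt{2}\}$ occurs with probability at least $2/3$, a single pair produces the correct label with some probability $p_0>1/2$ when $\theta$ lies outside the middle overlap third (and in that overlap either label is acceptable). A Chernoff bound then makes the majority-vote correct with probability at least $1-e^{-\Omega(m)}$. Setting the per-round failure probability to $\delta=q/L$ and applying a union bound over the $L$ rounds bounds the overall failure probability by $q$, and requires $m=\mathcal{O}(\log(L/q))=\mathcal{O}(\log(1/q)+\log\log(A/\epsilon))$ pairs per round.

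The total number of samples is $K=2Lm=\mathcal{O}\bigl(\log(A/\epsilon)\,(\log(1/q)+\log\log(A/\epsilon))\bigr)$, matching \eqref{eq:number_of_samples_RPE}. The evolution times $t_l=\pi/((2/3)^l\cdot 2A)$ grow geometrically, so $\max_j t_j=t_{L-1}=\mathcal{O}(1/\epsilon)$, and the geometric sum gives $\sum_l m\,t_l=\mathcal{O}(m/\epsilon)=\mathcal{O}\bigl((1/\epsilon)(\log(1/q)+\log\log(A/\epsilon))\bigr)$, matching \eqref{eq:total_evolution_time_RPE}.

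The main obstacle I expect is the quantitative claim that a single pair produces the correct binary label with probability bounded away from $1/2$: combining the two coordinate bounds $|X-\cos|<1/\sqrt{2}$ and $|Y-\sin|<1/\sqrt{2}$ by Cauchy--Schwarz only gives $|Z-e^{i\theta t}|<1$, which is a factor of two weaker than what Lemma~\ref{lem:frequency_est_refine} literally requires. I would resolve this either (i) by performing an inner median-of-means on each coordinate, using $\mathcal{O}(1)$ extra samples to tighten each tolerance from $1/\sqrt{2}$ down to $1/(2\sqrt{2})$ so that the hypothesis of Lemma~\ref{lem:frequency_est_refine} holds with constant probability, or (ii) by analysing the sign test of Lemma~\ref{lem:frequency_est_refine} directly under the weaker hypothesis when $\theta$ is in the non-overlap region, where the target $f(\theta)$ is separated from $0$ by at least $1/2$ and a direct case analysis on the coordinate errors suffices. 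Either resolution costs only a constant factor and preserves all stated scalings.
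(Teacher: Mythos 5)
Your overall architecture is the same as the paper's: iterate the bisection-with-overlap lemma for $L=\lceil\log_{3/2}(A/\lambda)\rceil$ rounds at geometrically growing times $t_l\propto(3/2)^{l}$, boost each round to failure probability $q/L$ with $m=\mathcal{O}(\log(L/q))$ repetitions, union bound over rounds, and sum the geometric series to get $T=\mathcal{O}(m/\epsilon)$. The scalings and the non-adaptivity argument all match.

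However, your boosting step has a genuine gap. You take a majority vote over the \emph{binary labels} produced by applying the sign test to individual pairs $(X(t_l),Y(t_l))$, which requires the per-pair correct-label probability $p_0$ to be bounded away from $1/2$. But the hypothesis only gives each coordinate event probability $2/3$; even using independence of $X$ and $Y$, both coordinate bounds hold simultaneously with probability only $\geq 4/9<1/2$, and when either fails you have no control over the label. So $p_0>1/2$ is not justified and the Chernoff argument for the majority vote does not go through. The paper avoids this by taking the median \emph{separately in each coordinate}: since each individual sample of $X(t_l)$ lands within $1/\sqrt{2}$ of $\cos(\theta t_l)$ with probability $2/3>1/2$, the median of $m$ samples is within that band with probability $1-e^{-\Omega(m)}$, and a union bound over the two coordinates then gives a single good complex estimate $S(t_l)=X_{\mathrm{median}}+iY_{\mathrm{median}}$ to feed into the sign test once per round. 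You should restructure your boosting this way. Separately, the factor-of-two mismatch you flag (two coordinate errors of $1/\sqrt{2}$ only give $|Z-e^{i\theta t}|\leq 1$, not $\leq 1/2$) is a legitimate observation — the paper's own write-up asserts the $1/2$ bound without justification — but your proposed fix (i) does not work as stated, since a median cannot tighten a worst-case tolerance of $1/\sqrt{2}$ when the hypothesis gives no control on the conditional distribution inside that band; this is a constants issue that must be repaired by adjusting the tolerances in the hypothesis or the overlap fractions in the refinement lemma, not by extra sampling.
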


\begin{proof}
    We let $\lambda=\epsilon/3$.
    We build a random variable $S(t)$ satisfying \eqref{eq:rv_correctness_condition}, with which we will iteratively narrow down the interval $[a,b]$ containing $\theta$ until $|b-a|\leq 2\lambda$, at which point we choose $\hat{\theta}=(a+b)/2$. If $\theta\in[a,b]$ we will then ensure $|\hat{\theta}-\theta|\leq \lambda$. However, each iteration will involve some failure probability, which we will analyze later.

    To build the random variable $S(t)$, we first use $m$ independent samples of $X(t)$ and then take median $X_{\mathrm{median}}(t)$, which satisfies 
    \[
    |X_{\mathrm{median}}(t)-\cos(\theta t)|\leq 1/\sqrt{2}
    \]
    with probability at least $1-\delta/2$, where by \eqref{eq:rv_correctness_condition_prob} and the Chernoff bound 
    \[
    \delta = c_1 e^{-c_2 m},
    \]
    for some universal constant $c_1,c_2$.
    Similarly we can obtain $Y_{\mathrm{median}}(t)$ such that
    \[
    |Y_{\mathrm{median}}(t)-\cos(\theta t)|\leq 1/\sqrt{2}
    \]
    with probability at least $1-\delta/2$. With these medians we then define
    \[
    S(t) = X_{\mathrm{median}}(t) + iY_{\mathrm{median}}(t).
    \]
    This random variable satisfies
    \[
    |S(t)-e^{i\theta t}|\leq 1/2
    \]
    with probability at least $1-\delta$ using the union bound. It therefore allows us to solve the discrimination task in \Cref{lem:frequency_est_refine} with probability at least $1-\delta$.

    Whether each iteration proceeds correctly or not, the algorithm terminates after 
    \[
    L=\lceil \log_{3/2}(A/\lambda) \rceil
    \]
    iterations. In the $l$th iteration we sample $X(s_l)$ and $Y(s_l)$ where $s_l=(\pi/2A)(3/2)^{l-1}$. 
    We use $m$ samples of $X(s_l)$ and $Y(s_l)$ for computing the median in each iteration, and therefore the failure probability is at most $\delta=c_1 e^{-c_2 m}$. 
    With probability at most $L\delta$ using the union bound, one of the iteration fails. 
    In order to ensure the protocol succeeds with probability at least $1-q$, it suffices to let $\delta \leq q/L$, and it therefore suffices to choose
    \[
    m = \lceil c_3 \log(L/q) \rceil=\mathcal{O}(\log(1/q)+\log\log(A/\epsilon)).
    \]
    The total number of samples (for either $X(t)$ or $Y(t)$) is therefore as described in \eqref{eq:number_of_samples_RPE}.
    All the $t$ in each sample added together is
    \begin{equation}
        \label{eq:total_time}
        T = m\sum_{l=1}^{L}  s_l = \frac{m\pi}{2A}\sum_{l=1}^L \left(\frac{3}{2}\right)^{l-1} \leq \frac{3\pi m}{2\epsilon}=\mathcal{O}((1/\epsilon)(\log(1/q)+\log\log(A/\epsilon))),
    \end{equation} 
    thus giving us \eqref{eq:total_evolution_time_RPE}.
\end{proof}

\subsection{The experimental setup of \texorpdfstring{$\mathcal{A}^{II}_j$}{A II j}}
\label{sec:experimental_setup_of_A_2}
With the aforementioned lemmas, we now describe the experimental setup for robustly estimating the coefficient of each term with $\epsilon$ accuracy. In an experiment for learning the coefficient $\mu_s$ of $P_s$, let $\beta_s\in\{I,x,y,z\}^{\otimes n}$ satisfies $P_s = \bigotimes_{j=1}^n \sigma^{\beta_{s,j}}$, where $\beta_{s,j}$ is the $j$-th entries of $\beta_s$. Denote the set of qubit indices where $P_s$ non-trivially act on as $S_{P_s}$. As defined in \Cref{eq:pauli_lambdastates}, we set
\begin{equation}
    \begin{split}
        &\ket{1,I} = \ket{0},  \quad \ket{1,x} = \ket{+}, \quad \ket{1,y} = \frac{1}{\sqrt{2}}(\ket{0}+i\ket{1}), \quad \ket{1,z} = \ket{0},\\
        &\ket{-1,I} = \ket{0}, \quad \ket{-1,x} = \ket{-},\quad \ket{-1,y} = \frac{1}{\sqrt{2}}(\ket{0}-i\ket{1}), \quad \ket{-1,z} = \ket{1},
    \end{split}
\end{equation}
and we define the following $\pm 1$ eigenstates of $P_s$ as:
\begin{equation}
    \ket{1,\beta_s} = \bigotimes_{j=1}^n \ket{1,\beta_{s,j}},\quad \ket{-1,\beta_s} = \bigotimes_{j=1}^n \ket{-1,\beta_{s,j}},
\end{equation}

We set the input state as
\begin{equation}
\label{eq:initial_state}
        \ket{\phi^s_0} = \frac{1}{\sqrt{2}}(\ket{1,\beta_s} + \ket{-1,\beta_s}) = (\bigotimes_{j=1}^{s^\star-1} \ket{1,\beta_{s,j}}) \otimes \frac{1}{\sqrt{2}}\left(\ket{1,\beta_{s,s^\star}} + \ket{-1,\beta_{s,s^\star}}\right) \otimes (\bigotimes_{j=s^\star+1}^{n} \ket{1,\beta_{s,j}}),
\end{equation}
which is a product-state, and is the equal-weight superposition of two eigenstates of $H_{\mathrm{eff}} = \mu_s P_s$ and thus has maximal sensitivity to the time-evolution under $H_{\mathrm{eff}}$. Evolve $\ket{\phi^s_0}$ under the effective Hamiltonian $H_{\mathrm{eff}}$ as described in \Cref{sec:app_Hamiltonian_reshaping}, at time $t$ we will approximately obtain the state
\begin{equation}
\label{eq:exact_state_phase_estimation_experiment}
        \ket{\phi^s_t} = \frac{1}{\sqrt{2}}(e^{-i\mu_st}\ket{1,\beta_s} + e^{i\mu_st}\ket{-1,\beta_s}).
\end{equation}
In the end, we then measure $\ket{\phi^s_t}$ with observables
\begin{equation}
\label{eq:obs_RFE}
\begin{split}
    O^+_s &= (\bigotimes_{j=1}^{s^\star-1} \ket{1,\beta_{s,j}}) \otimes Q^+_{s,j^\star} \otimes (\bigotimes_{j=s^\star+1}^{n} \ket{1,\beta_{s,j}}),\\
    O^-_s &= (\bigotimes_{j=1}^{s^\star-1} \ket{1,\beta_{s,j}}) \otimes Q^-_{s,j^\star} \otimes (\bigotimes_{j=s^\star+1}^{n} \ket{1,\beta_{s,j}}),
\end{split}
\end{equation}
where $Q^+_{s,j^\star}$ and $Q^+_{s,j^\star}$ are chosen to be single-qubit Pauli operators such that
\begin{equation}
\label{eq:Q_j-star}
\begin{split}
    Q^+_{s,j^\star}\ket{1,\beta_{s,j^\star}} &= \ket{-1,\beta_{s,j^\star}},\quad Q^+_{s,j^\star}\ket{-1,\beta_{s,j^\star}} = \ket{1,\beta_{s,j^\star}};\\
    Q^-_{s,j^\star}\ket{1,\beta_{s,j^\star}} &= i\ket{-1,\beta_{s,j^\star}},\quad Q^-_{s,j^\star}\ket{-1,\beta_{s,j^\star}} = -i\ket{1,\beta_{s,j^\star}}.
\end{split}
\end{equation}
Therefore, the expectation value is
\begin{equation}
    \bra{\phi^s_t}O^+_s\ket{\phi^s_t} = \cos(2\mu_s t),\quad \bra{\phi^s_t}O^-_s\ket{\phi^s_t} = \sin(2\mu_s t).
\end{equation}
Note that in order to measure $O^+_s$ and $O^-_s$, we only need to measure the $j^\star$-th qubit since all other qubits are in the $+1$ eigenstates and would not affect the phase.

We summarize the above experiment in the following definition
\begin{defn}[Frequency estimation experiment $\mathcal{A}^{II}$]
\label{def:phase_estimation_experiment}
    We call the procedure below a $(s,t,\tau)$-phase estimation experiment:
    \begin{enumerate}
        \item Prepare the initial product-state $\ket{\phi^s_0}=\frac{1}{\sqrt{2}}(\ket{1,\beta_s} + \ket{-1,\beta_s})$ as in \eqref{eq:initial_state}.
        \item Let the system evolve for time $t$ while applying random Pauli operators from $\mathcal{K}_{P_s}$ (defined in \eqref{eq:random_unitaries_ensemble}) with interval $\tau$.
        \item Measure the observables $O^+_s$ or $O^-_s$ (by measuring the $j^\star$-th qubit in the $Q^+_{s,j^\star}$ or $Q^-_{s,j^\star}$ basis) as defined above to obtain a $\pm 1$ outcome.
    \end{enumerate}
\end{defn}
The goal of the above experiment is to estimate $\mu_s$, for we need the robust frequency estimation protocol as introduced in \Cref{sec:robust_frequency_estimation_details}. We will discuss how this is done and analyze the effect of the Hamiltonian reshaping error and the state preparation and measurement (SPAM) error as defined in \Cref{defn:SPAM_err} in the following.

Through a $(s,t,\tau)$-phase estimation experiment, if we measure $O^+_s$ in the end, we will obtain a random variable $\nu_s^+(t)\in{\pm 1}$. If we have the exact  state $\ket{\phi^s_t}$ as defined in \eqref{eq:exact_state_phase_estimation_experiment}, then we can have $\mathbb{E}[\nu_s^+(t)] = \cos(2\mu_s t)$. However, due to the Hamiltonian reshaping error and the SPAM error, we have

\begin{equation}
\label{eq:phase_estimation_experiment_expectation_deviation}
    |\mathbb{E}[\nu_s^+(t)]-\cos(2\mu_s t)|\leq \frac{4M^2 t^2}{r_2} + \epsilon_{\mathrm{SPAM}}.
\end{equation}
Notice that the first term on the right-hand side comes from the Hamiltonian reshaping error, where we set $t=r_2\tau$ as in Theorem~\ref{thm:hamiltonian_reshaping}. Moreover the variance of $\mu^+(t)$ is at most $1$ because it can only take values $\pm 1$.
We assume that $\varepsilon_{\mathrm{SPAM}}\leq 1/(3\sqrt{2})$, and choose $r_2$ to be $r_2= \mathcal{O}(M^2 t^2)$ so that 
\[
\frac{4M^2 t^2}{r_2} < \frac{1}{3\sqrt{2}}.
\]
Note that $\tau$ and $r_2$ are related through $\tau=t/r_2$.
Then we have
\[
|\mathbb{E}[\nu_s^+(t)]-\cos(2\mu_s t)| < \frac{2}{3\sqrt{2}}.
\]
We then take $54$ independent samples of $\nu_s^+(t)$ and average them, denoting the sample average by $X_s(t)$. 

By Chebyshev's inequality, this ensures
\[
\Pr[|X_s(t)-\mathbb{E}[\nu_s^+(t)]|\geq 1/(3\sqrt{2})]=\Pr[|X_s(t)-\mathbb{E}[X_s(t)]|\geq 1/(3\sqrt{2})]\leq \frac{1}{54\times 1/(3\sqrt{2})^2}=\frac{1}{3}.
\]
Therefore combining the above with \eqref{eq:phase_estimation_experiment_expectation_deviation} we have
\[
\Pr[|X_s(t)-\cos(2\mu_s t)|\geq 1/\sqrt{2}]\leq 1/3.
\]

This guarantees estimating $\cos(2\mu_s t)$ to a constant $1/\sqrt{2}$ accuracy with at least $2/3$ probability, which gives us the $X_s(t)$ required in the robust frequency estimation protocol in Theorem~\ref{thm:robust_frequency_estimation}. The $Y_s(t)$ in Theorem~\ref{thm:robust_frequency_estimation} can be similarly obtained by using $O^-_s$ as the observable. 
We also note that because $|2\mu_s|\leq 2$, we can set $A=2$ in Theorem~\ref{thm:robust_frequency_estimation}. Therefore we can state the following for the phase estimation experiment (using the notation of $\mathcal{K}_{P_s}$ as defined in \eqref{eq:random_unitaries_ensemble}:

\begin{theorem}[\Cref{thm:robust_frequency_estimation} for the Hamiltonian learning task]
    \label{thm:phase_estimation_experiment}
    We assume that the quantum system is evolving under a Hamiltonian $H$ with $M$ terms with the absolute value of the coefficient of each term bounded by $1$.
    With $N_{\mathrm{exp}}$ independent non-adaptive $(s_l,t_l,\tau_l)$-phase estimation experiments (\Cref{def:phase_estimation_experiment}), $j=1,2,\cdots,L$ ($L$ is the number of coefficients of different Pauli terms in $H$ to be estimated), with the SPAM error (\Cref{defn:SPAM_err}) satisfying $\varepsilon_{\mathrm{SPAM}}\leq 1/(3\sqrt{2})$, we can obtain an estimate $\hat{\theta}$ such that
    \[
    \Pr[|\hat{\theta}-2\mu_s|\geq \eta] \leq q,
    \]
    where $\mu_s$ are eigenvalues of the effective Hamiltonian $H_{\mathrm{eff}} = \mu_sP_s$ as defined in \eqref{eq:effective_hamiltonian_terms}.
    In the above $N_{\mathrm{exp}}$, $\{t_l\}$, and $\{\tau_l\}$ satisfy
    \[
    N_{\mathrm{exp}} = \mathcal{O}(\log(1/\eta)(\log(1/q)+\log\log(1/\eta))),
    \]
    \begin{equation}
    \begin{aligned}
        T=\sum_{l=1}^{N_{\mathrm{exp}}} t_l=\mathcal{O}\left(\frac{1}{\eta}(\log(1/q)+\log\log(1/\eta))\right),\quad \max_l t_l=\mathcal{O}(1/\eta),
    \end{aligned}
    \end{equation}
    and $\tau_l = \Omega(1/(M^2 t_l))$.
\end{theorem}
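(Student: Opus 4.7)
My plan is to assemble this theorem by stitching together three earlier ingredients: Hamiltonian reshaping (Theorem~\ref{thm:hamiltonian_reshaping}), the SPAM error accounting (Definition~\ref{defn:SPAM_err}), and the robust frequency estimation protocol (Theorem~\ref{thm:robust_frequency_estimation}). The overall logic is that each $(s_l,t_l,\tau_l)$-phase estimation experiment supplies a noisy but constant-bias estimator of $\cos(2\mu_s t_l)$ or $\sin(2\mu_s t_l)$; a constant amount of averaging converts these into the random variables $X_s(t)$ and $Y_s(t)$ required by the hypotheses of Theorem~\ref{thm:robust_frequency_estimation}; then robust frequency estimation produces the final estimate of $2\mu_s$ with the stated evolution-time budget.

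The first step is to show that a single $(s,t,\tau)$ experiment, measuring $O_s^+$ as in \eqref{eq:obs_RFE}, yields a $\pm 1$ outcome $\nu_s^+(t)$ whose expectation is close to $\cos(2\mu_s t)$. Combining the diamond-norm bound $\|\mathcal{U}-\mathcal{V}\|_\diamond \le 4M^2 t^2/r_2$ from Theorem~\ref{thm:hamiltonian_reshaping} with the SPAM channel bound from Section~\ref{sec:SPAM_error} gives, via the standard bound $|\Tr[O(\mathcal{E}_1-\mathcal{E}_2)(\rho)]| \le \|O\|\,\|\mathcal{E}_1-\mathcal{E}_2\|_\diamond$ for $\|O\|\le 1$, the deviation bound in \eqref{eq:phase_estimation_experiment_expectation_deviation}. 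Choosing $r_2 = \Theta(M^2 t^2)$ (equivalently $\tau = \Theta(1/(M^2 t))$, which matches the stated $\tau_l = \Omega(1/(M^2 t_l))$) makes the reshaping contribution strictly less than $1/(3\sqrt{2})$, and assuming $\varepsilon_{\mathrm{SPAM}} \le 1/(3\sqrt{2})$ the total bias is less than $2/(3\sqrt{2})$.

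The second step is classical amplification. Since $\nu_s^+(t)\in\{\pm 1\}$ has variance at most $1$, averaging a constant number (the paper uses $54$) of independent copies gives an estimator $X_s(t)$ whose sample-mean deviation exceeds $1/(3\sqrt{2})$ with probability at most $1/3$ by Chebyshev. Combined with the bias bound and the triangle inequality, this yields $\Pr[|X_s(t)-\cos(2\mu_s t)| \ge 1/\sqrt{2}] \le 1/3$, which is exactly the hypothesis on $X(t)$ in Theorem~\ref{thm:robust_frequency_estimation}. The analogous construction using $O_s^-$ produces $Y_s(t)$ satisfying the corresponding bound for $\sin(2\mu_s t)$.

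The final step is to apply Theorem~\ref{thm:robust_frequency_estimation} with $\theta = 2\mu_s$, $A = 2$ (since $|\mu_s|\le 1$ so $|2\mu_s|\le 2$), $\epsilon = \eta$, and failure probability $q$. This immediately yields an estimator $\hat{\theta}$ satisfying $\Pr[|\hat{\theta}-2\mu_s|\ge \eta]\le q$, with the claimed sample count $N_{\mathrm{exp}}=\mathcal{O}(\log(1/\eta)(\log(1/q)+\log\log(1/\eta)))$ and total evolution time $T = \mathcal{O}((\log(1/q)+\log\log(1/\eta))/\eta)$, with $\max_l t_l = \mathcal{O}(1/\eta)$; the constant factor $54$ from the amplification step is absorbed into these asymptotics. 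The only subtle point, and the place I would be most careful, is the bookkeeping on $\tau_l$: since the reshaping error scales as $M^2 t_l^2 /r_2$, the required $r_2$ grows with $t_l$, and since $t_l$ ranges up to $\Theta(1/\eta)$ the Trotter step $\tau_l = t_l/r_2 = \Theta(1/(M^2 t_l))$ varies across experiments; one must verify this is compatible with the non-adaptive schedule of Theorem~\ref{thm:robust_frequency_estimation} and that the constants ensuring the reshaping bias is strictly below $1/(3\sqrt 2)$ are absorbed into the $\Theta(\cdot)$ choice of $r_2$.
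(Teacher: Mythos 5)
Your proposal is correct and follows essentially the same route as the paper: bound the bias of each $\pm 1$ outcome by the reshaping diamond-norm error plus SPAM, choose $r_2=\Theta(M^2t^2)$ so the total bias is below $2/(3\sqrt{2})$, amplify with $54$ samples and Chebyshev to obtain the $X_s(t),Y_s(t)$ hypotheses of Theorem~\ref{thm:robust_frequency_estimation}, and invoke that theorem with $A=2$. The $\tau_l$ bookkeeping you flag is handled identically in the paper via $\tau_l=t_l/r_2=\Omega(1/(M^2t_l))$, so there is no gap.
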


\subsection{The complexity of \texorpdfstring{$\mathcal{A}^{II}_j$}{A II j}}
\label{sec:complexities_of_A_2}
In $\mathcal{A}^{II}_j$, we learn each coefficient in $\mathfrak{S}_{L_j}$ to accuracy $\epsilon$. There are $|\mathfrak{S}_{L_j}|$ terms to be estimated by the robust frequency estimation as described in \Cref{sec:robust_frequency_estimation_details}. By \eqref{eq:total_evolution_time_RPE}, the total evolution time under $H$ in each robust frequency estimation instance is
\begin{equation}
    t_{j,l} = \mathcal{O}((1/\epsilon)(\log(1/\epsilon)+\log\log(2^{-j}/\epsilon))),\quad l=\{0,1,2,\ldots,|\mathfrak{S}_{L_j}|\}.
\end{equation}
Therefore, the total evolution time under $H$ in $\mathcal{A}^{II}_j$ is:
\begin{equation}
    T_{2,j} = \sum_{l=1}^{|\mathfrak{S}_{L_j}|} t_{j,l} = |\mathfrak{S}_{L_j}|\cdot  \mathcal{O}((1/\epsilon)(\log(1/\epsilon)+\log\log(2^{-j}/\epsilon))) = \mathcal{O}\left( \frac{M^2\log(M/\delta_j)}{\epsilon}(\log(1/\epsilon)+\log\log(2^{-j}/\epsilon))\right).
\end{equation}

\section{Total evolution time complexity}
\label{sec:total_time_complexity}
Combining the results in \Cref{sec:complexities_of_A_1} and \Cref{sec:complexities_of_A_2}, the total evolution time in $\mathcal{A}_j$ is 
\begin{equation}
    T_j = T_{1,j} + T_{2,j} = \mathcal{O}\left(2^j M\log(M/\delta_{j}) + \frac{M^2\log(M/\delta_j)}{\epsilon}(\log(1/\epsilon)+\log\log(2^{-j}/\epsilon))\right).
\end{equation}
Thus, the total number of experiments is
and the total evolution time complexity of our protocol is
\begin{equation}
\label{eq:total_number_of_exp}
    L = \sum_{j=0}^{\left\lceil\log_2(1/\epsilon)\right\rceil-1} L_j = \mathcal{O}\left(M^2\log(M/\delta)\log(1/\epsilon)\right),
\end{equation}
and the total evolution time complexity of our protocol is
\begin{equation}
\label{eq:total_evolution_time_complexity}
    T = \sum_{j=0}^{\left\lceil\log_2(1/\epsilon)\right\rceil-1} T_j = \mathcal{O}\left(\frac{M\log(M/\delta)\log(1/\epsilon)}{\epsilon} + \frac{M^2\log(M/\delta)\log(1/\epsilon)}{\epsilon}(\log(1/\epsilon)+\log\log(1/\epsilon))\right),
\end{equation}
where $\delta = \min\{\delta_j\}$.

\section{Alternative single-copy product state input approach \texorpdfstring{$\mathcal{A}^{I'}_j$}{A I' j}}
\label{sec:single-copy}
We here provide the proof for the alternative approach of realizing the large-term identification in $\mathcal{A}^{I}_j$ which requires no ancillary qubit and only product state input. With this approach, the entire Hamiltonian learning protocol requires no ancillary qubits, only product state input, and still achieves Heisenberg-limited scaling. This approach replaced the step of identifying terms in $\mathfrak{S}_j$ by using the population recovery protocol for estimating the Pauli error rate of a general quantum channel in \cite{Flammia2021paulierror}. Instead of using this protocol to directly learn the Pauli error rates to a high accuracy, we learn them to a low accuracy and only take those large terms over a threshold to be estimated to high accuracy in $\mathcal{A}^{II}_j$.

\subsection{Pauli error rate for time-evolution channel}
\label{sec:Pauli_error_rate}
Consider a general quantum channel $\Lambda$, one can write it in its Kraus operator expression as in\eqref{eq:Kraus_operator}.
\begin{defn}[Pauli twirling of a quantum channel, Definition~28 in \cite{Flammia2021paulierror}]
\label{def:Pauli_twirl}
    Let $\Lambda$ denote an arbitrary $n$-qubit quantum channel. Its Pauli twirl $\Lambda_P$ is the $n$-qubit quantum channel defined by
    \begin{equation}
        \Lambda_P(\rho) = \mathop{\mathbb{E}}_{\sigma_T\in\{I, \sigma_x, \sigma_y, \sigma_z\}}\left[ \sigma_T^\dagger \left( \Lambda \left( \sigma_T \rho \sigma_T^\dagger \right) \right) \sigma_T \right].
    \end{equation}
\end{defn}

\begin{fact}[Pauli error rates of general quantum channels]
    Suppose we write $K_j$ for the Kraus operators of $\Lambda$, so $\Lambda\rho = \sum_j  K_j\rho K_j^\dagger$ . Further suppose that $K_j$ is represented in the Pauli basis as in \eqref{eq:Pauli_expansion_of_Kraus}. Then $\Lambda$’s  Pauli error rates are given by $p(k) = \sum_j |\alpha_{j,k}|^2$
\end{fact}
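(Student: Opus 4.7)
The plan is to directly compute the Pauli twirl $\Lambda_P$ from Definition~\ref{def:Pauli_twirl} using the Pauli expansion of the Kraus operators in \eqref{eq:Pauli_expansion_of_Kraus}, and show that all off-diagonal cross terms average to zero, leaving a Pauli channel whose diagonal rates are exactly $p(k)=\sum_j |\alpha_{j,k}|^2$. First I would substitute $\Lambda(\rho)=\sum_j K_j\rho K_j^\dagger$ into the twirl and move the conjugating $\sigma_T$'s inside: the key observation is $\sigma_T^\dagger K_j \sigma_T = \sum_k \alpha_{j,k}\,\sigma_T^\dagger \sigma_k \sigma_T = \sum_k \alpha_{j,k}\,\chi_{T}(k)\,\sigma_k$, where $\chi_{T}(k)\in\{+1,-1\}$ is $+1$ if $\sigma_T$ commutes with $\sigma_k$ and $-1$ otherwise (using $\sigma_T^2=I$ and the fact that any two $n$-qubit Paulis either commute or anticommute).

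Next, I would expand $\sigma_T^\dagger K_j \sigma_T \rho \sigma_T^\dagger K_j^\dagger \sigma_T$, giving a double sum over Pauli indices $k,k'$ with coefficient $\alpha_{j,k}\alpha_{j,k'}^{*}\,\chi_T(k)\chi_T(k')$. The crucial step is evaluating $\mathbb{E}_{\sigma_T}[\chi_T(k)\chi_T(k')]$ when $\sigma_T$ is drawn uniformly from the $n$-qubit Paulis. Since $\chi_T(k)\chi_T(k')=\chi_T(kk')$, where $kk'$ denotes the Pauli index of the product $\sigma_k\sigma_{k'}$ (the overall phase does not affect $\chi_T$), this expectation reduces to a sum over $\sigma_T$ of $\pm 1$. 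For $k=k'$ the product is the identity, so $\chi_T(kk')=+1$ for every $\sigma_T$ and the expectation is $1$. For $k\neq k'$ the product is a nontrivial Pauli, and exactly half of all $n$-qubit Paulis commute with it while the other half anticommute, so the expectation vanishes.

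Putting the pieces together yields
\begin{equation}
\Lambda_P(\rho) \;=\; \sum_j \sum_{k} |\alpha_{j,k}|^2\, \sigma_k \rho \sigma_k^\dagger \;=\; \sum_k p(k)\,\sigma_k\rho\sigma_k^\dagger,
\end{equation}
with $p(k)=\sum_j|\alpha_{j,k}|^2$, which is precisely the definition of the Pauli error rates of $\Lambda$. Nonnegativity is immediate, and the trace-preservation condition $\sum_k p(k)=1$ follows from $\sum_j K_j^\dagger K_j = I$ combined with the orthonormality $\mathrm{Tr}(\sigma_k\sigma_{k'})=2^n\delta_{k,k'}$ of the Pauli basis (so $\sum_j\sum_k|\alpha_{j,k}|^2 = 2^{-n}\sum_j\mathrm{Tr}(K_j^\dagger K_j)=1$).

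The only real subtlety is the bookkeeping around the signs $\chi_T(k)$ and the reduction $\chi_T(k)\chi_T(k')=\chi_T(kk')$; once this is made rigorous (observing that any overall phase $\pm 1,\pm i$ in $\sigma_k\sigma_{k'}$ cancels out of the commutator sign), the remaining step is the elementary fact that the character sum of any nontrivial Pauli over the full Pauli group vanishes. I would not expect any substantive obstacle beyond this standard identity, which is essentially the same mechanism used in Lemma~\ref{lem:Hamiltonian_reshaping_Pauli_term} above.
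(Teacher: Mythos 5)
Your proof is correct. The paper itself does not prove this statement at all: it imports it verbatim as Fact~29 of Ref.~\cite{Flammia2021paulierror} and moves on, so there is no in-paper argument to compare against. Your twirling computation is the standard (and essentially the only natural) way to establish it: writing $\sigma_T^\dagger K_j\sigma_T=\sum_k\alpha_{j,k}\chi_T(k)\sigma_k$, reducing the cross-term average to $\mathbb{E}_{\sigma_T}[\chi_T(kk')]$, and killing the off-diagonal terms via the fact that a nontrivial Pauli commutes with exactly half of the $4^n$ Pauli operators — the same counting argument the paper uses in Lemma~\ref{lem:Hamiltonian_reshaping_Pauli_term}. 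Your normalization check $\sum_{j,k}|\alpha_{j,k}|^2=1$ from $\sum_j K_j^\dagger K_j=I$ and Pauli orthogonality is also right. The only point worth making explicit is definitional: the Fact presupposes that ``$\Lambda$'s Pauli error rates'' means the error rates of its Pauli twirl $\Lambda_P$ from Definition~\ref{def:Pauli_twirl} (equivalently, the diagonal of the process $\chi$-matrix); your computation both establishes the claimed formula under that reading and shows as a byproduct that $p(k)$ is independent of the choice of Kraus representation, since the twirled channel is. No gaps.
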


Consider the time evolution channel under Hamiltonian $\tilde{H}_j$ as in \Cref{sec:Bell_sampling}, there is only one Kraus operator in this Pauli channel $K=e^{-i\tilde{H}_jt}$. We obtain the expansion of this Kraus operator in the Pauli basis by considering its Taylor expansion as in \Cref{eq:Taylor_expansion_terms}. Similar to \Cref{eq:upper_bound_for_t}, we set $t < \frac{1}{CM}$. Then, by \Cref{sec:lower_bound_prob} the Pauli error rate of a term $P_s$ where $s\in\mathfrak{S}_j$ is lower bounded by $\widetilde{\Omega}(\frac{1}{4C^2M^2} - \frac{1}{C^3M^2}-\varepsilon_{\mathrm{SPAM}})$.

\subsection{Estimating Pauli error rates via population recovery}
\label{sec:population_recovery}
\begin{theorem}[Theorem~1 in \cite{Flammia2021paulierror}]
\label{thm:intro-pop}
    There is a learning algorithm that, given parameters $0 < \delta, \epsilon_{1} < 1$, as well as access to an $n$-qubit channel with Pauli error rates~$p$, has the following properties:
    \begin{itemize}
        \item It prepares $m = \mathcal{O}(1/\epsilon_1^2)\cdot \log(\frac{n}{\epsilon_{1} \delta})$ unentangled $n$-qubit pure states, where each of the $mn$ $1$-qubits states is chosen uniformly at random from $\{\ket{0}, \ket{1}, \ket{+}, \ket{-}, \ket{i}, \ket{-i}\}$;
        \item It passes these $m$ states through the Pauli channel.
        \item It performs unentangled measurements on the resulting states, with each qubit being measured in either the $\{\ket{0}, \ket{1}\}$-basis, the $\{\ket{+}, \ket{-}\}$-basis, or $\{\ket{i}, \ket{-i}\}$-basis.
        \item It performs an $O(mn/\epsilon_1)$-time classical post-processing algorithm on the resulting $mn$ measurement outcome bits.
        \item It outputs hypothesis Pauli error rates $\widehat{p}$ in the form of a list of at most $\frac{4}{\epsilon_1}$ pairs $(k, \widehat{p}(k))$, with all unlisted $\widehat{p}$ values treated as~$0$.
    \end{itemize}
    The algorithm's hypothesis $\widehat{p}$ will satisfy $\|\widehat{p} - p\|_\infty \leq \epsilon_1$ except with probability at most~$\delta$.
\end{theorem}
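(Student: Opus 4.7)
The plan is to reduce the problem to Pauli heavy-hitter identification via random-basis state preparation and measurement. First, I would observe that uniformly sampling each preparation qubit from the six single-qubit Pauli eigenstates implements a left Pauli twirl, and uniformly sampling each measurement basis from $\{X,Y,Z\}$ implements a right Pauli twirl; consequently the joint distribution over the $mn$ prep eigenvalues, prep bases, measurement eigenvalues, and measurement bases depends on $\Lambda$ only through its Pauli-twirled form, namely the Pauli error distribution $p$. This means one may analyze the protocol as if $\Lambda$ were already a Pauli channel, and the learning task becomes a purely classical one of recovering heavy entries of $p$ from the induced distribution over classical strings.

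Next I would analyze what a single experiment reveals about the Pauli error $k$ (drawn once per experiment). On each qubit $i$, if the prep and measurement bases agree, $b_i = b_i'$, which happens independently with probability $1/3$, then the XOR of prep and measurement eigenvalues exactly indicates whether $P_{k,i}$ anti-commutes with $b_i$; if $b_i \neq b_i'$ the outcome is an independent uniformly random bit. Hence each experiment gives a noisy partial view of $k$ in which each coordinate is ``revealed'' with constant probability. Averaging appropriate products of outcome bits then yields unbiased estimators of the Pauli fidelities $f_b = \sum_k (-1)^{\langle b,k\rangle} p(k)$, and by Hoeffding's inequality $O(\epsilon_1^{-2} \log(1/\delta'))$ experiments suffice to estimate any single $f_b$ to additive accuracy $\epsilon_1$ with confidence $1-\delta'$.

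The main algorithmic content is the heavy-hitter recovery step: given $m$ such experiments, identify all $k$ with $p(k) \gtrsim \epsilon_1$ without enumerating the $4^n$ candidates. I would use a prefix-growing strategy, extending candidate Pauli strings coordinate-by-coordinate and, at each extension, estimating a conditional marginal of $p$ from the partial-observation model and pruning prefixes whose estimated mass falls below a threshold $\Theta(\epsilon_1)$. Since $\sum_k p(k) = 1$, at most $O(1/\epsilon_1)$ heavy strings and $O(n/\epsilon_1)$ heavy prefixes are ever tracked, giving an $O(mn/\epsilon_1)$ runtime and an output list of length at most $4/\epsilon_1$. Every retained candidate is then paired with its direct empirical estimate of $p(k)$ to attain $\ell_\infty$ accuracy $\epsilon_1$, while any $k$ not in the list necessarily has $p(k)\leq\epsilon_1$ and is safely reported as $0$.

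The main obstacle is the uniform concentration needed for the prefix-growing argument: the success event requires that the Hoeffding-type estimates be simultaneously accurate across all $O(n/\epsilon_1)$ prefix tests and all heavy Paulis, and the noisy-partial-observation model inflates the variance of each marginal estimator by a factor depending only on the prefix length through the constant-probability reveal rate $1/3$. Both difficulties are absorbed by a union bound, and the resulting sample complexity $m = O(\epsilon_1^{-2} \log(n/(\epsilon_1 \delta)))$ exactly accommodates the $\log$ factor from that union bound, yielding the stated guarantee.
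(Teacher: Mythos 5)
First, a point of reference: the paper does not prove this statement at all --- it is imported verbatim as Theorem~1 of \cite{Flammia2021paulierror} and used as a black box, so there is no in-paper proof to compare against. Your sketch is therefore an attempt to reconstruct the proof of the external result. Its overall architecture --- random single-qubit stabilizer preparations and Pauli-basis measurements acting as a twirl that reduces the task to a classical population-recovery problem for the Pauli error distribution $p$, followed by a prefix-growing heavy-hitter search with an output list of size $\mathcal{O}(1/\epsilon_1)$ --- does match the strategy of that reference in outline.

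There is, however, a genuine gap at the step that carries all of the difficulty. You assert that averaging suitable products of outcome bits estimates any Pauli fidelity $f_b=\sum_k(-1)^{\langle b,k\rangle}p(k)$ to accuracy $\epsilon_1$ from $\mathcal{O}(\epsilon_1^{-2}\log(1/\delta'))$ experiments via Hoeffding. This is false for general $b$: only a $3^{-|b|}$ fraction of experiments have bases compatible with $b$ on its support, so the natural unbiased estimator takes values of magnitude $3^{|b|}$ and its variance grows exponentially in the weight of $b$. The same exponential blow-up afflicts the naive estimators of the prefix marginals on which your branch-and-prune step relies: inverting the per-coordinate observation channel (reveal with probability $1/3$, noise otherwise) yields an unbiased product estimator of $\Pr[k_{1\ldots j}=v]$ whose range is $c^{j}$ for a constant $c>1$. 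Your phrase ``inflates the variance \ldots by a factor depending only on the prefix length'' therefore conceals the crux: if that factor is exponential in $j$ --- as it is for the obvious estimators --- the claimed sample complexity $m=\mathcal{O}(\epsilon_1^{-2})\cdot\log(n/(\epsilon_1\delta))$ does not follow, and no union bound over $\mathcal{O}(n/\epsilon_1)$ prefix tests repairs it. A correct proof must exhibit, for every statistic queried during the search, an estimator whose range is bounded by an absolute constant independent of the prefix length; that construction is precisely the technical content of the cited population-recovery analysis, and it is missing from your sketch.
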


\begin{fact}[Fact~29 in \cite{Flammia2021paulierror}]
    Applying the population recovery protocol in \cite{Flammia2021paulierror} to a general quantum channel $\Lambda$ can learn its Pauli error rates with the properties mentioned in \Cref{thm:intro-pop}.
\end{fact}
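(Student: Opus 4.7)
The plan is to reduce the claim to the already-established Theorem for Pauli channels by showing that (i) the Pauli twirl $\Lambda_P$ of a general channel $\Lambda$ is itself a Pauli channel whose Pauli error rates are precisely $p(k) = \sum_j |\alpha_{j,k}|^2$, and (ii) the full protocol of Theorem~\ref{thm:intro-pop}—random Pauli-eigenstate preparations followed by random Pauli-basis measurements—produces measurement statistics that are identical to those we would obtain by first twirling $\Lambda$ and then running the same protocol. Together these two facts imply that the algorithm's output estimates the Pauli error rates of $\Lambda_P$, which by (i) coincide with the quantities defined in the Fact immediately preceding \Cref{thm:intro-pop}.

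For step (i), I would expand $K_j = \sum_k \alpha_{j,k} \sigma_k$ as in \eqref{eq:Pauli_expansion_of_Kraus} and plug into Definition~\ref{def:Pauli_twirl}. The twirl kills every cross term $\sigma_k \rho \sigma_{k'}^\dagger$ with $k \neq k'$ because $\mathbb{E}_{\sigma_T}\sigma_T^\dagger \sigma_k \sigma_T \otimes (\sigma_T^\dagger \sigma_{k'} \sigma_T)^*$ vanishes by the standard orthogonality of Pauli conjugation characters; only the diagonal terms $|\alpha_{j,k}|^2 \sigma_k \rho \sigma_k$ survive. Summing over the Kraus index $j$ yields $\Lambda_P(\rho) = \sum_k \bigl(\sum_j |\alpha_{j,k}|^2\bigr)\sigma_k \rho \sigma_k$, which is manifestly a Pauli channel with error rates $p(k) = \sum_j |\alpha_{j,k}|^2$.

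For step (ii), the key observation is that the protocol of \Cref{thm:intro-pop} draws each single-qubit preparation uniformly from the six Pauli eigenstates and each single-qubit measurement uniformly from the three Pauli bases. Any such prepare–measure pair is covariant under Pauli conjugation on the qubit: preparing $\sigma_T |\psi\rangle\langle\psi| \sigma_T^\dagger$ and measuring the rotated POVM $\{\sigma_T E_a \sigma_T^\dagger\}$ yields the same outcome distribution as preparing $|\psi\rangle\langle\psi|$ and measuring $\{E_a\}$, while the intervening action of $\Lambda$ gets replaced by $\sigma_T^\dagger \Lambda(\sigma_T \cdot \sigma_T^\dagger)\sigma_T$. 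Since the uniform randomization over preparations and measurements already marginalizes over Pauli conjugations on each qubit independently, the observed statistics depend on $\Lambda$ only through its average under this conjugation, i.e.\ through $\Lambda_P$. Consequently the post-processing stage of the algorithm cannot distinguish $\Lambda$ from $\Lambda_P$, and its output estimates the Pauli error rates of $\Lambda_P$ with the same accuracy $\epsilon_1$ and confidence $1-\delta$ guaranteed by \Cref{thm:intro-pop}.

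The main obstacle is making the covariance argument in step (ii) fully rigorous when the randomizations over preparation and measurement bases are not perfectly symmetric for every single Pauli twirl element—the six Pauli eigenstates together with the three measurement bases do form a 1-design on each qubit, so $\mathbb{E}_T [\sigma_T^\dagger \Lambda(\sigma_T \rho \sigma_T^\dagger)\sigma_T]$ equals $\Lambda_P(\rho)$, but one must verify this carefully qubit by qubit and then tensor up across the $n$ qubits using independence of the random choices. Once this equivalence is established, the sample complexity, classical post-processing time, and output format are inherited verbatim from \Cref{thm:intro-pop}, completing the proof.
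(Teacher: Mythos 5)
The paper offers no proof of this statement: it is imported wholesale as Fact~29 of \cite{Flammia2021paulierror}, and the only in-paper justification is the remark in the following lemma that the Taylor-expansion analysis of the Kraus operator $e^{-i\tilde{H}_j t}$ carries over unchanged. So you are supplying an argument the authors delegate entirely to the citation. Your step (i) is correct and standard: character orthogonality of Pauli conjugation kills every cross term $\sigma_k\rho\sigma_{k'}$ with $k\neq k'$, leaving $\Lambda_P(\rho)=\sum_k\bigl(\sum_j|\alpha_{j,k}|^2\bigr)\sigma_k\rho\sigma_k$, which identifies the Pauli error rates of the twirl with the quantities $p(k)=\sum_j|\alpha_{j,k}|^2$ defined in the paper.

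Step (ii), however, overreaches as written. It is not true that ``the observed statistics depend on $\Lambda$ only through $\Lambda_P$'': conditioned on a fixed preparation and a fixed measurement basis, the outcome distributions under $\Lambda$ and $\Lambda_P$ generally differ. Take $n=1$ and $\Lambda(\rho)=U\rho U^\dagger$ with $U=(I-iX)/\sqrt{2}$; preparing $\ket{0}$ and measuring in the $Y$ basis gives a deterministic outcome under $\Lambda$ but a uniform one under $\Lambda_P(\rho)=\tfrac12\rho+\tfrac12 X\rho X$. Since the algorithm records the preparation and measurement labels, the raw data distributions for $\Lambda$ and $\Lambda_P$ are genuinely distinguishable, and the assertion that ``the post-processing cannot distinguish'' them does not follow from a claim about the data distribution. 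The correct statement is weaker: relabeling the triple (preparation, measurement, outcome) by a uniformly random Pauli conjugation maps data generated under $\Lambda$ to data distributed exactly as under $\Lambda_P$, and every statistic the population-recovery post-processing actually computes (products of outcome eigenvalues against preparation eigenvalues in matching bases) is invariant under such relabelings, so \emph{those} statistics have identical distributions in the two cases. Equivalently, and more simply, one can bypass twirling altogether: each estimator's expectation is a Pauli fidelity $2^{-n}\Tr[\sigma_a\Lambda(\sigma_a)]=\sum_k(-1)^{\langle a,k\rangle}p(k)$, which depends on $\Lambda$ only through the $p(k)$, and the concentration bounds in \Cref{thm:intro-pop} use only the boundedness of the $\pm1$ outcomes. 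Either repair closes the gap you flagged; as literally stated, the covariance claim about the observed statistics is false.
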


\begin{lemma}[Lower bound for Pauli error rates]
    For any term $P_s$ where $s\in\mathfrak{S}_j$, the Pauli error rate of the time evolution channel under $e^{-i\tilde{H}_jt}$ realized by Trotterization is lower bounded by
    \begin{eqnarray}
        \gamma_j = \widetilde{\Omega}\left(\frac{1}{4C^2M^2}-\frac{1}{C^3M^2}-\frac{2^{2j}}{C^2r}-\varepsilon_{\text{SPAM}}\right)
    \end{eqnarray}
    for $t = \Theta(\frac{1}{CM})$, and $\varepsilon_{\text{SPAM}}$ is the SPAM error in experiments. 
\end{lemma}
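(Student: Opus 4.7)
The plan is to mirror almost exactly the argument used to prove \Cref{lem:lower_bound_for_prob_dist} (the lower bound on Bell-sampling probabilities), with the single substitution that the quantity of interest is now the diagonal Pauli error rate $p(s)$ of the Pauli-twirled time-evolution channel rather than a Bell-basis outcome probability. The starting point is the fact stated just above the statement: the Pauli error rates of a quantum channel are obtained from the Pauli-basis expansion coefficients of its Kraus operators via $p(k)=\sum_j|\alpha_{j,k}|^2$. For the (noiseless) time-evolution channel implemented by the Trotterized approximation of $e^{-i\tilde H_j t}$, there is a single effective Kraus operator, namely the unitary itself. So we only need to lower-bound $|\alpha_s|^2$ where $\alpha_s=2^{-n}\,\mathrm{Tr}(P_s^\dagger\, e^{-i\tilde H_j t})$ (up to the Trotter and SPAM corrections).

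First, I would compute $\alpha_s$ through the Taylor expansion in \eqref{eq:Taylor_expansion_terms}, exactly as done for Bell sampling in \Cref{sec:Bell_sampling}. The first-order term contributes $-it\mu_s/2^{-j}$ to $\alpha_s$, whose magnitude is at least $t/2$ because $|\mu_s|/2^{-j}>1/2$ for $s\in\mathfrak{S}_j$. Higher-order terms $(t\tilde H_j)^\ell/\ell!$ contribute spurious Pauli components $P_s$ from products $P_{s_1}\cdots P_{s_\ell}\propto P_s$; following the same combinatorial count used in \Cref{sec:Bell_sampling}, for each $\ell\ge 2$ the total contribution to $|\alpha_s|$ is at most $\mathcal{O}((Mt)^\ell/M)$ because all rescaled coefficients are bounded by $1$ and there are at most $\mathcal{O}(M^{\ell-1})$ matching tuples. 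Plugging $t=\Theta(1/(CM))$, the geometric sum of these corrections is $\mathcal{O}(1/(C^2M))$, giving $|\alpha_s|\ge t/2-\mathcal{O}(1/(C^2M))$, hence $|\alpha_s|^2\ge 1/(4C^2M^2)-\mathcal{O}(1/(C^3M^2))$, which is precisely the ideal part of $\gamma_j$.

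Next, I would account for the two remaining sources of error. The Trotterization is analyzed exactly as in \Cref{app:Trotter}: by \eqref{eq:eps_trotter} the diamond-norm distance between the ideal channel $\mathcal{U}_{\tilde H_j,t}$ and the Trotterized channel is $\varepsilon_{\mathrm{Trotter}}=\mathcal{O}(M^2(t/2^{-j})^2/r)=\mathcal{O}(2^{2j}/(C^2 r))$. Since the Pauli error rate of $P_s$ is a linear functional of the Choi matrix with operator norm at most $1$, any diamond-norm perturbation of size $\varepsilon_{\mathrm{Trotter}}$ can shift $p(s)$ by at most $\mathcal{O}(\varepsilon_{\mathrm{Trotter}})$; this is the standard reduction also used implicitly in the Bell-sampling analysis, since total-variation distance of outcomes is upper bounded by the diamond norm. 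The SPAM contribution is handled identically using the bound from \Cref{sec:SPAM_error}, yielding a further additive error of $\mathcal{O}(\varepsilon_{\mathrm{SPAM}}+\varepsilon_{\mathrm{SPAM}}^2/4)$. Combining the three estimates gives
\begin{equation*}
p(s)\;\ge\;\widetilde\Omega\!\left(\frac{1}{4C^2M^2}-\frac{1}{C^3M^2}-\frac{2^{2j}}{C^2 r}-\varepsilon_{\mathrm{SPAM}}\right),
\end{equation*}
which is the claimed bound.

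The only genuinely non-routine step is verifying that a diamond-norm perturbation of the channel only moves an individual Pauli error rate by at most its own magnitude, i.e.\ that $|\hat p(s)-p(s)|\le \|\Lambda-\hat\Lambda\|_\diamond$ when both channels are Pauli-twirled. This follows because $p(s)$ can be written as the expectation $\mathrm{Tr}(P_s\,\Lambda_P(P_s))/2^n$, which is a $1$-Lipschitz function of the channel under the diamond norm; one can also appeal directly to the fact that the diamond norm upper bounds the total-variation distance of any measurement outcome distribution and choose the Pauli-basis POVM on one half of a maximally entangled state to read off $p(s)$. I expect this Lipschitz step, together with checking that the rescaling factor $1/2^{-j}$ inside $\tilde H_j$ is correctly propagated into the Trotter-error term, to be the only part that requires care; the combinatorial bound on higher-order Taylor terms is identical to the Bell-sampling case and requires no new ideas.
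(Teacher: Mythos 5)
Your proposal is correct and takes essentially the same route as the paper: the paper's own proof is a one-line reference stating that the argument is identical to that of \Cref{lem:lower_bound_for_prob_dist}, since the Taylor-expansion analysis of the single Kraus operator $e^{-i\tilde H_j t}$ in the Pauli basis coincides with the Bell-sampling analysis, and the Trotter and SPAM errors enter through the same diamond-norm bounds. Your additional care in making explicit the $1$-Lipschitz dependence of an individual Pauli error rate on the channel in diamond norm is a detail the paper leaves implicit (it invokes the same total-variation argument in the proof of \Cref{lem:lower_bound_for_prob_dist}), but it does not change the approach.
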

\begin{proof}
    The proof is exactly the same as the proof of \Cref{lem:lower_bound_for_prob_dist} since the analysis for Taylor expansion in \Cref{sec:Pauli_error_rate} is the same with the analysis in \Cref{sec:Bell_sampling}.
\end{proof}

Let the precision of learning Pauli errors to be 
\begin{equation}
    \epsilon_1 = (\frac{1}{2}-c')\frac{1}{CM},
\end{equation}
where $c'$ satisfies that $(\frac{1}{2}-c')\frac{1}{CM} < \frac{\gamma_j}{2}$ and is only introducd for simplifying the expression, it would not contribute to the leading terms in the following complexity analysis. Then, for a term $P_s$ where $s\in\mathfrak{S}_j$ with the absolute value its rescaled coefficient in $\tilde{H}_j$ larger than $1/2$, by applying the population recovery protocol in \Cref{thm:intro-pop}, the learned coefficient recovered from the estimated Pauli error rate will be lower bounded by $1/4$ with probability at least $1-\delta$. By seeting the threshold to be $\frac{1}{2}\gamma_j$ and pass all terms over this threshold to be estimated to high accuracy in $\mathcal{A}^{II}_j$, we can cover all terms in $\mathfrak{S}_j$ with high probability.

\subsection{Complexity of \texorpdfstring{$\mathcal{A}^{I'}_j$}{A I' j}\label{sec:complexity_single_copy}}
By \Cref{thm:intro-pop}, the number of input states $m = \mathcal{O}(1/\epsilon_1^2)\cdot \log(\frac{n}{\epsilon_1\delta})$, and from Appendix~\ref{sec:population_recovery}, we set $\epsilon_1 = \frac{1}{2}\gamma_j$, then
\begin{equation}
\label{eq:num_of_input_stabilizer_product_state}
    m = \mathcal{O}\left(\frac{C^4M^4}{(\frac{1}{2}-c')^2}\log(\frac{CMn}{\delta})\right).
\end{equation}
For each input, the total evolution time under $H$ is
\begin{equation}
    t'_{j,1} = t/2^{-j} = \Theta\left(\frac{1}{CM2^{-j}}\right),
\end{equation}
so the total evolution time of $\mathcal{A}^{I'}_j$ is
\begin{equation}
    T'_{1,j} = m\cdot t'_{j,1} = \mathcal{O}\left(\frac{C^3M^3}{2^{-j}(\frac{1}{2}-c')}\log(\frac{CMn}{\delta})\right) \approx \mathcal{O}\left(\frac{C^3M^3}{2^{-j}}\log(\frac{CMn}{\delta})\right).
\end{equation}
Meanwhile, by \Cref{thm:intro-pop}, $\mathcal{A}^{I'}_j$ requires a $T^C_j$-time classical post-processing algorithm on the resulting $mn$ measurement outcome bits
\begin{equation}
    T^C_j = \mathcal{O}\left(\frac{mn}{\epsilon_1}\right) = \mathcal{O}\left(\frac{C^5M^5n}{(\frac{1}{2}-c')^2}\log\left(\frac{CMn}{\delta}\right)\right).
\end{equation}

\subsection{Totoal evolution time complexity of the single-copy product state input Hamiltonian learning protocol}
\label{sec:total_evo_time_for_single_copy}
By replacing $\mathcal{A}^{I}_j$ by $\mathcal{A}^{I'}_j$, we obtain the single-copy product state input Hamiltonian learning protocol. The total evolution time complexity is
\begin{equation}
    T' = \sum_{j=0}^{\left\lceil\log_2(1/\epsilon)\right\rceil-1} \left(T'_{1,j}+T_{2,j}\right) = \mathcal{O}\left(\frac{C^3M^3}{\epsilon}\log(\frac{CMn}{\delta})\log(1/\epsilon) + \frac{M^2\log(M/\delta)\log(1/\epsilon)}{\epsilon}(\log(1/\epsilon)+\log\log(1/\epsilon))\right),
\end{equation}
and the total classical post-processing time is
\begin{eqnarray}
    T^C = \sum_{j=0}^{\left\lceil\log_2(1/\epsilon)\right\rceil-1} T^C_j = \mathcal{O}\left(C^5M^5n\log\left(\frac{CMn}{\delta}\right)(\left\lceil\log_2(1/\epsilon)\right\rceil-1)\right)
\end{eqnarray}

\section{Proof for the trade-off\label{app:lower}}
In this section, we provide the proof for the lower bound in \Cref{thm:lower_informal}, which is formulated as follows. 
\begin{theorem}\label{thm:lower}
For any protocol that can be possibly adaptive, biased or unbiased, and possibly ancilla-assisted with total evolution time $T$ and at most $\mathcal{L}$ discrete quantum controls per experiment (as shown in Figure S1),  there exists some parametrized Hamiltonian $H$ that requires $T = \Omega(\mathcal{L}^{-1}\epsilon^{-2})$ to estimate the coefficient of the unknown Hamiltonian $H$ within additive error $\epsilon$.

\end{theorem}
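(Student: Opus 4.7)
My plan is to follow the framework already sketched in the main text: reduce the lower bound to a two-point hypothesis testing problem, convert it to a statement about probability distributions on the leaves of the learning tree, and then translate this into a quantum Fisher information bound on individual nodes. Concretely, I would fix a coefficient vector $\bm{\mu}_0$ (say $\bm{\mu}_0=0$ on a sparsity-$M$ support) and let the adversary choose between $H(\bm{\mu}_0)$ and $H(\bm{\mu}_0 + \delta \bm{\mu})$ where $\delta\bm{\mu}$ is supported on the same $M$ terms and satisfies $\|\delta\bm{\mu}\|_\infty = c\epsilon$ for a small absolute constant $c$. Any algorithm that returns an estimate with $\ell^\infty$-error at most $\epsilon/3$ (say, with constant success probability) would solve this distinguishing problem, so Le Cam's two-point method requires
\eqs{
\tfrac{1}{2}\sum_{\ell\in\text{leaf}(\mathcal{T})} \left| p^{\bm{\mu}_0+\delta\bm{\mu}}(\ell) - p^{\bm{\mu}_0}(\ell) \right| = \Theta(1),
}
which is exactly \eqref{eq:leaf_prob_diff}.

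Next I would invoke the martingale likelihood-ratio machinery from Refs.~\cite{OptimalTradeOff,ComplexityNISQ,2023arXiv230914326C,2021arXiv211207646C}. Along any root-to-leaf path, the likelihood ratios $L_{\delta\bm{\mu}}(u,s_u)$ form a martingale and the second-moment increments control the leaf-TV distance. The quoted Lemma~7 of \cite{OptimalTradeOff} asserts that if $\mathbb{E}_{s_u\sim p^{\bm{\mu}_0}(\cdot|u)}[(L_{\delta\bm{\mu}}(u,s_u)-1)^2]\le \Delta_u$ at every internal node $u$, then
\eqs{
\sum_{u\in\mathrm{internal}(\mathcal{T})} p^{\bm{\mu}_0}(u)\, \Delta_u \;=\; \Omega(1).
}
I would then dominate each local chi-squared quantity by the quantum chi-squared divergence between the two output states at $u$, which is in turn dominated by the quadratic form in the quantum Fisher information matrix $I^{(Q)}_{\bm{\mu}_0,u}$ evaluated at $\delta\bm{\mu}$. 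Bounding $\|\delta\bm{\mu}\|_2\le 3\sqrt{M}\epsilon$ gives the stated node-level bound $\Delta_u \le \max_{\|\bm v\|_2\le 3\sqrt{M}\epsilon} \bm v^\top I^{(Q)}_{\bm{\mu}_0,u} \bm v$.

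The core step — and in my view the main obstacle — is bounding $I^{(Q)}_{\bm{\mu}_0,u}$ as a function of the evolution time $t_u$ and the number of discrete controls $\mathcal{L}_u\le \mathcal{L}$ in experiment $u$. Here I would invoke Lemma~17 of \cite{dutkiewicz2024advantage}, which controls the growth of quantum Fisher information for a Hamiltonian of the form $H(\bm{\mu})=\sum_s \mu_s P_s$ under interleaved evolutions separated by arbitrary ancilla-assisted control channels. Its content, morally, is that each control operation can at best convert the experiment into a new ``standard-quantum-limit'' window; consequently the maximum eigenvalue of $I^{(Q)}_{\bm{\mu}_0,u}$ is bounded by $O(\mathcal{L}_u\, t_u)$ per coefficient direction, so that $\max_{\|\bm v\|_2\le 3\sqrt M\epsilon} \bm v^\top I^{(Q)}_{\bm{\mu}_0,u}\bm v = O(M\epsilon^2\,\mathcal{L}\, t_u)$. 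Plugging into the martingale inequality yields
\eqs{
\Omega(1) \;\le\; \sum_{u} p^{\bm{\mu}_0}(u)\, \Delta_u \;\le\; O(M\epsilon^2\,\mathcal{L})\sum_u p^{\bm{\mu}_0}(u)\, t_u \;=\; O(M\epsilon^2\,\mathcal{L}\, T),
}
where $T=\sum_u p^{\bm{\mu}_0}(u)\,t_u$ is the expected total evolution time. Rearranging gives $T = \Omega(\mathcal{L}^{-1}\epsilon^{-2})$, as claimed.

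The delicate points I expect to wrestle with are: (i)~handling the adaptive, ancilla-assisted nature of the protocol inside the learning-tree formalism, which requires that the martingale bound be applied to the joint system-plus-memory and that the QFI bound of \cite{dutkiewicz2024advantage} be invoked in its general (ancilla-assisted) form; (ii)~verifying that the Fisher-information bound really does scale linearly in $\mathcal{L}_u$ rather than in $\mathcal{L}_u^2$, which is what makes the trade-off nontrivial and distinguishes it from a trivial time-rescaling argument; and (iii)~making sure the hypothesis class is in the ``hard instance'' regime of \cite{dutkiewicz2024advantage}, which typically requires choosing $P_s$ so that the perturbation direction is not aligned with a conserved quantity, so that no cheap absorbing-state trick evades the QFI bound.
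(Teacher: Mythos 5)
Your proposal follows essentially the same route as the paper's proof: a two-point distinguishing task analyzed via Le Cam's method on the learning tree, the martingale likelihood-ratio bound (Lemma~7 of the cited trade-off reference) to reduce to per-node chi-squared quantities, a first-order expansion dominating these by the classical and hence quantum Fisher information, and Lemma~17 of the cited metrology reference to obtain the bound that is linear in $m_u t_u$ (the paper gets this by bounding the square of $\min\{t_u\norm{\partial_{\bm{\mu}}D}_{\mathrm{HS}}+2m_u\norm{\partial_{\bm{\mu}}W}_{\mathrm{HS}},\,t_u\norm{\partial_{\bm{\mu}}H}_{\mathrm{HS}}\}$ by the product of its two arguments), yielding $T=\Omega(\mathcal{L}^{-1}\epsilon^{-2})$. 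The only cosmetic difference is that you state the martingale lemma in its probability-weighted-sum form rather than with a uniform per-node bound $\Delta$, which is an equivalent and arguably cleaner bookkeeping choice.
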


We consider the problem of learning a Hamiltonian of $M$ terms 
\begin{align}\label{eq:param_ham}
H(\bm{\mu})=\sum_{s=1}^M\mu_s P_s
\end{align} 
with $\bm{\mu}=(\mu_1\ldots,\mu_{M})$ and $\abs{\mu_s}\leq 1$ for any $s=1,...,M$. 

\begin{figure}[ht]
    \centering
    \includegraphics[width=0.99\textwidth]{./learn_lower_model2.pdf}
    \caption{(a) Learning tree representation. Each node $u$ represents an experiment. Starting from the root experiment $r$, the number of child nodes depends on the possible POVM measurements. The transition probabilities are determined by \Cref{app_eq:transition_prob}. After $N_\text{exp}$ experiments, one arrives at the leaves $l$ of the learning tree. (b) In each node u, the learning model prepares an arbitrary state $\rho_u$, applies discrete control channels $\mathcal{C}^u_{k}$, and queries real-time Hamiltonian evolutions with time $\tau_k^{u}$ multiple times. The protocols can also incorporate ancilla qubits (quantum memory).
 }
    \label{fig:learn_lower_model}
\end{figure}

The high-level strategy of our proof is to combine the learning tree framework~\cite{learningtree} equipped with the martingale trick~\cite{ComplexityNISQ,2023arXiv230914326C,OptimalTradeOff} and quantum Fisher information.

Here, we generalize any experimental settings with discrete quantum controls and sequential queries to the Hamiltonian as the following theoretical model. 
Given a Hamiltonian $H$ with coefficient vector $\bm{\mu}$ defined in \eqref{eq:unknown_Hamiltonian}, the protocol performs multiple experiments as shown in \Cref{fig:learn_lower_model} and measures at the end of each experiment.
In each experiment (indexed by $u$), the protocol prepares an input state $\rho_u$ (possibly with ancilla qubits) and queries the Hamiltonian $m_u$ times with evolution time $\tau_1^u,...,\tau_{m_u}^u$. We can also apply discrete quantum control unitaries $\mathcal{C}_1^u,...,\mathcal{C}_{m_u-1}^{u}$ between every two neighboring queries.
The protocol can be adaptive in the sense that it can dynamically decide how to prepare the input state, query the real-time evolutions, perform quantum controls, and measure the final state based on the history of the previous experiments.
We denote the summation of the evolution times for querying the Hamiltonian throughout \emph{all} experiment to be the total evolution time $T$ and the maximal number of discrete quantum controls in \emph{one} experiment as $\mathcal{L}$.

Formally, we describe the protocol scheme in \Cref{fig:learn_lower_model} as the learning tree model~\cite{learningtree,OptimalTradeOff} as follows:
\begin{defn}[Learning tree representation] 
Given a Hamiltonian $H(\bm{u})$ as defined in \eqref{eq:param_ham} with $M$ terms, a learning protocol using discrete control channels and queries to the Hamiltonian real-time evolutions as shown in \Cref{fig:learn_lower_model} can be represented as a rooted tree $\mathcal{T}$ of depth $N_{\exp}$ (corresponding to $N_{\exp}$ measurements) with each node representing the measurement outcome history so far. In addition, the following conditions are satisfied:
\begin{itemize}
    \item We assign a probability $p^{\bm{\mu}}(u)$ for any node $u$ on the tree $\mathcal{T}$. The probability assigned to the root $r$ is $p^{\bm{\mu}}(r)=1$.
    \item At each non-leaf node $u$, we input the state $\rho_u$. By convexity, we can assume $\rho_u=\ket{\psi_u}\bra{\psi_u}$ a pure state. We then query $m_t$ rounds of real-time Hamiltonian evolution of time $\tau^u_{1},...,\tau^u_{m_t}$, interleaved with $m_t$ discrete control channels $\mathcal{C}_1^{u},...,\mathcal{C}_{m_t}^u$ (the last one, $\mathcal{C}_{m_t}^u$, can be absorbed into the measurement as shown in the figure). We then perform a POVM $\{\mathcal{M}_{s_u}^u\}_{s_u}$, which by the fact that any POVM can be simulated by rank-$1$ POVM~\cite{learningtree}, can be assumed to be a rank-$1$ POVM as $\{w^u_{s_u}\ket{\phi_{s_u}^u}\bra{\phi_{s_u}^u}\}_{s_u}$ with $\sum w^u_{s_u}=2^n$, and get classical output $s_u$. The child node $v$ corresponding to the classical outcome $s_u$ of the node $u$ is connected through the edge $e_{u,s_u}$. The probability associated with $v$ is given by:
    \begin{align}
    p^{\bm{\mu}}(v)\triangleq p^{\bm{\mu}}(u)\cdot\tr\left(w^u_{s_u}\ket{\phi_{s_u}^u}\bra{\phi_{s_u}^u}\cdot \mathcal{U}_{m_u}\mathcal{C}_{m_u-1}\mathcal{U}_{m_u-1}\cdots \mathcal{U}_2\mathcal{C}_1 \mathcal{U}_1(\ket{\psi_u}\bra{\psi_u})\right),\label{app_eq:transition_prob}
    \end{align}
    where $\mathcal{U}_i=e^{-iH(\bm{\mu})\tau_{i}}$ is the unitary real-time evolution of $H(\bm{\mu})$ for time $\tau_i$.
    \item Each root-to-leaf path is of length $N_{\exp}$. For a leaf node $\ell$, $p^{\bm{\mu}}(\ell)$ is the probability of reaching this leaf $\ell$ at the end of the learning protocol. The set of leaves is denoted as $\text{leaf}(\mathcal{T})$.
\end{itemize}
\end{defn}

We consider a point-versus-point distinguishing task between two cases for a hyperparameter $\bm{\mu}$:
\begin{itemize}
    \item The Hamiltonian for the real-time evolution is exactly $H(\bm{\mu})$.
    \item The Hamiltonian for the real-time evolution is actually $H(\bm{\mu}+\delta\bm{\mu})$ with $\norm{\delta\bm{\mu}}_\infty=3\epsilon$.
\end{itemize}
The goal is to distinguish which case is happening. 
It is straightforward to see that given an algorithm that can learn the Hamiltonian, we can solve this distinguishing problem. 

Now, we consider how hard this distinguishing problem is. 
In the learning tree framework, the necessary condition for distinguishing between these two cases according to Le Cam's two-point method~\cite{yu1997assouad} is that the total variation distance between the probability distributions on the leaves of the learning trees for the two cases is large. 
Quantitatively,
\eqs{
\frac{1}{2}\sum_{\ell\in\text{leaf}(\mathcal{T})}\abs{p^{\bm{\mu}+\delta\bm{\mu}}(\ell)-p^{\bm{\mu}}(\ell)}=\Theta(1). \label{app_eq:leaf_prob_diff}
}

Using the martingale likelihood ratio argument developed by a series of works~\cite{ComplexityNISQ,2023arXiv230914326C,2021arXiv211207646C}, the above lower bound can be converted to Lemma 7 of Ref.~\cite{OptimalTradeOff}: If there is a $\Delta>0$ such that 
\begin{align}\label{eq:martingale}
\mathbb{E}_{s_u\sim p^{\bm{\mu}}(s_u|u)}[(L_{\delta\bm{\mu}}(u,s_u)-1)^2]\leq\Delta,
\end{align}
where
\begin{align}
L_{\delta\bm{\mu}}(u,s_u)=\frac{p^{\bm{\mu}+\delta\bm{\mu}}(s_u|u)}{p^{\bm{\mu}}(s_u|u)},
\end{align}
then $N_{\exp}\geq\Omega(1/\Delta)$. The physical intuition behind this result is that when the difference between the probability  $p^{\bm{\mu}}(s_u|u)$  and the perturbed probability  $p^{\bm{\mu}+\delta \bm{\mu}}(s_u|u)$  is very small across all levels of the learning tree (\Cref{eq:martingale}), a very deep tree is required—i.e.,  $N_{\exp} \geq \Omega(1/\Delta)$ —to ensure that the measurement distributions at the leaf nodes are distinct (\Cref{app_eq:leaf_prob_diff}).

Now, we focus on this quantity
\begin{align}
\mathbb{E}_{s_u\sim p^{\bm{\mu}}(s_u|u)}[(L_{\delta\bm{\mu}}(u,s_u)-1)^2]
\end{align}
for a specific $u$. Recall that $\norm{\delta\bm{\mu}}_{\infty}=3\epsilon$, we thus have $\norm{\delta\bm{\mu}}_2\leq 3\sqrt{M}\epsilon$. We can thus compute this quantity by extending it to the first-order term as
\begin{align}
\begin{split}
\mathbb{E}_{s_u\sim p^{\bm{\mu}}(s_u|u)}[(L_{\delta\bm{\mu}}(u,s_u)-1)^2]&=\mathbb{E}_{s_u\sim p^{\bm{\mu}}(s_u|u)}\left[\left(\frac{p^{\bm{\mu}+\delta\bm{\mu}}(s_u|u)}{p^{\bm{\mu}}(s_u|u)}-1\right)^2\right]\\
&\simeq\mathbb{E}_{s_u\sim p^{\bm{\mu}}(s_u|u)}\left[\left(\frac{p^{\bm{\mu}}(s_u|u)+\nabla_{\bm{\mu}}p^{\bm{\mu}}(s_u|u)\cdot\delta\bm{\mu}}{p^{\bm{\mu}}(s_u|u)}-1\right)^2\right]\\
&=\mathbb{E}_{s_u\sim p^{\bm{\mu}}(s_u|u)}\left[\left(\frac{\nabla_{\bm{\mu}}p^{\bm{\mu}}(s_u|u)\cdot\delta\bm{\mu}}{p^{\bm{\mu}}(s_u|u)}\right)^2\right]\\
&=\delta\bm{\mu}^{\top}I^{(C)}_{\bm{\mu},u}\delta\bm{\mu}\\
&\leq\max_{\bm{v}:\norm{\bm{v}}_2\leq 3\sqrt{M}\epsilon}\bm{v}^\top I^{(C)}_{\bm{\mu},u}\bm{v},
\end{split}
\end{align}
where 
\begin{align}
\begin{split}
\left[I_{\bm{\mu},u}^{(C)}\right]_{a,a'}&=\mathbb{E}_{s_u\sim p^{\bm{\mu}}(s_u|u)}\left[\left(\frac{\frac{\partial p^{\bm{\mu}}(s_u|u)}{\partial\lambda_a}}{p^{\bm{\mu}}(s_u|u)}\right)\left(\frac{\frac{\partial p^{\bm{\mu}}(s_u|u)}{\partial\lambda_{a'}}}{p^{\bm{\mu}}(s_u|u)}\right)\right]\\
&=\mathbb{E}_{s_u\sim p^{{\bm{\mu}}}(s_u|u)}\left[\left(\frac{\partial \ln p^{\bm{\mu}}(s_u|u)}{\partial\lambda_a}\right)\left(\frac{\partial \ln p^{\bm{\mu}}(s_u|u)}{\partial\lambda_{a'}}\right)\right]
\end{split}
\end{align}
is the classical Fisher information matrix~\cite{PhysRevD.23.357,PhysRevLett.72.3439,BRAUNSTEIN1996135}. Note that quantum Fisher information $\bm{v}^{\top}I^{(Q)}_{{\bm{\mu}},u}\bm{v}$ is the supremum of classical Fisher information $\bm{v}^{\top}I^{(C)}_{{\bm{\mu}},u}\bm{v}$ over all possible measurements or observables~\cite{Paris2009Quantum}. By Lemma 17 of Ref.~\cite{dutkiewicz2024advantage}, assuming $H({\bm{\mu}})$ can be diagonalized as $H({\bm{\mu}})=W({\bm{\mu}})^\dagger D({\bm{\mu}}) W({\bm{\mu}})$ with unitary $W$ and diagonal $D$, we have the quantum Fisher information satisfies
\begin{align}
\max_{\bm{v}:\norm{\bm{v}}_2\leq 3\sqrt{M}\epsilon}\bm{v}^\top I^{(Q)}_{{\bm{\mu}},u}\bm{v}\leq M\epsilon^2\left(\min\left\{t_u\norm{\partial_{\bm{\mu}} D}_{\text{HS}}+2m_u\norm{\partial_{\bm{\mu}} W}_{\text{HS}},t_u\norm{\partial_{\bm{\mu}} H}_{\text{HS}}\right\}\right)^2,
\end{align}
where $t_u=\sum_{i=1}^{m_u}\tau_i^u$ and $\norm{\cdot}_{\text{HS}}$ is the Hilbert Schmidt norm (spectral norm). By applying Eq. \eqref{eq:martingale}, we have
\begin{align}
N_{\exp}=\Omega\left(\left(M\epsilon^2\left(\min\left\{2m_u\norm{\partial_{\bm{\mu}} W}_{\text{HS}},t_u\norm{\partial_{\bm{\mu}} H}_{\text{HS}}\right\}\right)^2\right)^{-1}\right)
\end{align}
for any protocols that can solve this distinguishing problem with a high probability. The RHS is maximized when $t_u\leq2m_u\norm{\partial_{\bm{\mu}} W}_{\text{HS}}/\norm{\partial_{\bm{\mu}} H}_{\text{HS}}$. Denote $\mathcal{L}=\max_u m_u$ as the maximal number of discrete quantum controls in one experiment and $T=\sum_{u=1}^{N_{\text{exp}}} t_u$ as the total evolution time, we have
\begin{align}
2M\epsilon^2\mathcal{L}T\norm{\partial_{\bm{\mu}} W}_{\text{HS}}\norm{\partial_{\bm{\mu}} H}_{\text{HS}}=\Theta(1),
\end{align}
which indicates that any protocol that can solve this distinguishing problem with a high probability requires $T=\Omega(\mathcal{L}^{-1}\epsilon^{-2})$ as claimed in \Cref{thm:lower}.

%=========
\section{Details of the numerical simulation}
\label{sec:numerics_details}
In this appendix, we provide additional details on our numerical simulations. For structure learning, we perform 2000 measurement shots, while for coefficient learning, we use 1000 simulated measurements throughout. A key advantage of our Heisenberg-limited learning algorithm is that the number of required measurements does not scale with the target learning accuracy. This allows us to use a constant number of measurements even when estimating small coefficients. For short-time evolutions, we set the evolution time to $\tau = 0.001$, and we find that the results are not particularly sensitive to the precise choice of $\tau$.

For the first example with learning disordered XY model, we perform the hierarchical learning with $J=1$. The total evolution time in structure learning is chosen as $T^{j=1}=0.3$, and the learning precision in the coefficient learning is set to be $\epsilon=0.005$.

For the second example: learning long-range interaction decay in a static Rydberg Hamiltonian, we perform hierarchical learning up to level $J = 5$. The physical Hamiltonian takes the form:
\eqs{
H/\hbar = &\frac{\Omega}{2} \sum_{l} \left( |g_l\rangle \langle r_l| + |r_l\rangle \langle g_l| \right) - \Delta \sum_{l} |r_l\rangle \langle r_l| \\
&+ \sum_{j<l} V_{jl} |r_j\rangle \langle r_j| \otimes |r_l\rangle \langle r_l| ,
}
where $V_{jl} = \mathrm{C} / |\vec{x}_j - \vec{x}_l|^6$ describes the van der Waals interaction between atoms $j$ and $l$. We use $\mathrm{C} = 862{,}690 \times 2\pi$ MHz·$\mu$m$^6$, corresponding to the interaction strength for Rydberg states $70S_{1/2}$ of $^{87}\text{Rb}$ atoms. The atoms are equally spaced with a separation of $10\mu\text{m}$, and the drive parameters are set to $\Omega = 1.5/2\pi$ MHz and $\Delta = -4/2\pi$ MHz. Due to the $1/r^6$ decay of the van der Waals interaction, the interaction strength rapidly diminishes with distance. We placed all atoms in a chain separated equally with $d=10\mu\text{m}$. For each level $j$ in the hierarchical learning protocol, we set the evolution time $T^{j}$ and learning precision $\epsilon^{j}$ as follows:
\begin{enumerate}
    \item Level 1: $T^{j=1}=0.08\mu\text{s}$ and $\epsilon^{j=1}=0.005 \text{ rad}/\mu\text{s}$
    \item Level 2: $T^{j=2}=0.2\mu\text{s}$ and $\epsilon^{j=2}=0.005\text{ rad}/\mu\text{s}$
    \item Level 3: $T^{j=3}=6.0\mu\text{s}$ and $\epsilon^{j=3}=0.001\text{ rad}/\mu\text{s}$
    \item Level 4: $T^{j=4}=60.0\mu\text{s}$ and $\epsilon^{j=4}=0.0001\text{ rad}/\mu\text{s}$
    \item Level 5: $T^{j=5}=1200.0\mu\text{s}$ and $\epsilon^{j=5}=0.00001\text{ rad}/\mu\text{s}$
\end{enumerate}
The Pauli terms learned at each level are summarized in \Cref{table:rydberg}.

For the final example: learning the effective Hamiltonian generated by a time-dependent driven pulse, we consider a Rydberg atom array with the same interaction strength as before, but with an interatomic spacing of $d = 8.9\mu\text{m}$. The system is driven by a time-dependent laser pulse, as shown in \Cref{fig:example2}(a). We do the Hamiltonian learning with the information that this pulse engineered a unitary that is close to 
\eqs{
U\approx e^{-i\theta H_{\text{ZXZ}}},
}
with $\theta=0.1$ and $H_{\text{ZXZ}}=\sum_{i}Z_{i}X_{i+1}Z_{i+2}$. Usually, this information is available because engineered pulses are optimized to reach a certain target with the target being known. Thus, the unitary implemented by the pulse can be viewed as a short-time evolution under an unknown effective Hamiltonian that is close to $H_{\text{ZXZ}}$. Since we lack prior knowledge about the exact form of this effective Hamiltonian, the ansatz-free learning protocol is particularly well-suited for this task.

In the structure learning stage, we prepare Bell pairs, apply the time-dependent Hamiltonian to one half of each pair, and then perform Bell-basis measurements. We use 1000 measurement shots for structure learning. Once the structure is identified, we treat $U$ as a short-time evolution with $\theta = 0.1$ under an unknown effective Hamiltonian, and proceed with coefficient learning using Hamiltonian reshaping. For this step, we also use 1000 measurement shots, with the target learning accuracy set to $\epsilon = 0.01$.

\begin{table}[htbp]
\caption{Hierarchical learning of static Rydberg atom Hamiltonian\label{table:rydberg}}
\vspace{5mm}
\begin{ruledtabular}
\begin{tabular}{lcc}
$j$th & Pauli String & Value \\
\hline
1 & IIIZZ & $1.343 \pm 0.036$ \\
1 & IIZZI & $1.331 \pm 0.047$ \\
1 & IZZII & $1.359 \pm 0.044$ \\
1 & ZZIII & $1.355 \pm 0.025$ \\
1 & IIIIZ & $0.618 \pm 0.029$ \\
1 & IIIZI & $-0.733 \pm 0.043$ \\
1 & XIIII & $0.750 \pm 0.019$ \\
1 & IZIII & $-0.728 \pm 0.043$ \\
1 & IIIXI & $0.741 \pm 0.026$ \\
1 & IXIII & $0.746 \pm 0.029$ \\
1 & ZIIII & $0.624 \pm 0.036$ \\
1 & IIXII & $0.748 \pm 0.033$ \\
1 & IIIIX & $0.751 \pm 0.013$ \\
1 & IZZXI & $0.013 \pm 0.059$ \\
\midrule
2 & IZIII & $-0.753 \pm 0.001$ \\
\midrule
3 & IIZZI & $0.0245 \pm 0.0006$ \\
3 & IZIZI & $0.0211 \pm 0.0003$ \\
3 & ZIZII & $0.0212 \pm 0.00007$ \\
3 & IIZIZ & $0.0214 \pm 0.0008$ \\
3 & IZZXI & $-0.0130 \pm 0.0006$ \\
3 & IIIZZ & $0.0123 \pm 0.0008$ \\
3 & IIIXI & $0.0088 \pm 0.0005$ \\
3 & IZZII & $-0.0040 \pm 0.0005$ \\
3 & ZIIZI & $0.0019 \pm 0.0007$ \\
\midrule
4 & IZIII & $-0.00525 \pm 0.00007$ \\
4 & IXIII & $0.00390 \pm 0.00008$ \\
4 & IIIIZ & $0.00374 \pm 0.00004$ \\
4 & ZIIII & $-0.00223 \pm 0.00001$ \\
4 & IZIIIZ & $0.00186 \pm 0.00013$ \\
4 & IIIYI & $0.00199 \pm 0.000005$ \\
4 & IIIIX & $-0.00149 \pm 0.00006$ \\
4 & IXIIZ & $-0.000012 \pm 0.00012$ \\
4 & ZZIII & $0.00022 \pm 0.00009$ \\
4 & IYIII & $0.00006 \pm 0.00011$ \\
4 & IXIIX & $-0.000013 \pm 0.00011$ \\
4 & ZXIII & $0.000007 \pm 0.00011$ \\
4 & IYIIZ & $-0.000001 \pm 0.00012$ \\
4 & IYYII & $-0.000012 \pm 0.00014$ \\
\midrule
5 & ZIIIZ & $0.000331 \pm 0.000001$ \\
\end{tabular}
\end{ruledtabular}
\end{table}

\end{appendix}

\end{document}